\newtheorem{theorem}{Theorem}
\newtheorem{lemma}{Lemma}
\newtheorem{remark}{Remark}
\newtheorem{assumption}{Assumption}
\newtheorem{proposition}{Proposition}
\newtheorem{property}{Property}
\newtheorem{mydef}{Definition}
\newcommand{\bM}{{\mathcal{M}}}
\newcommand{\bR}{\mathbb{R}}
\newcommand{\bN}{\mathbb{N}}
\newcommand{\bC}{\mathbb{C}}
\newcommand{\bS}{\mathbb{S}}
\newcommand{\bx}{{\bm{x}}}
\newcommand{\bU}{U^{T_s}}
\newcommand{\hF}{\mathcal{F}}     
\newcommand{\hD}{\mathcal{D}}                 
\newcommand{\bL}{{\widehat{L}}}   
\newcommand{\tL}{{\widetilde{L}}}   
\newcommand{\hL}{\mathcal{L}}
\newcommand{\bIm}{\mathrm{Im}}
\newcommand{\bI}{\mathscr{I}}
\newcommand{\Log}{\mathrm{Log}}
\newcommand{\sspan}{\mathrm{span}}
\newcommand{\hG}{\mathscr{G}}
\def\BibTeX{{\rm B\kern-.05em{\sc i\kern-.025em b}\kern-.08em
		T\kern-.1667em\lower.7ex\hbox{E}\kern-.125emX}}
\begin{document}
	\title{A Sampling Theorem for Exact Identification of 
		Continuous-time Nonlinear Dynamical Systems}
	\author{Zhexuan Zeng, Zuogong Yue, Alexandre Mauroy, Jorge Gon{\c{c}}alves and Ye Yuan
		\thanks{This paragraph of the first footnote will contain the date on 
			which you submitted your paper for review. This work was supported by the National Natural Science Foundation of China under Grant 92167201 }
		\thanks{Zhexuan Zeng, Zuogong Yue and Ye Yuan are with School of Artificial Intelligence and Automation, Huazhong University of Science and Technology, Wuhan, China. }
		\thanks{A. Mauroy is with the Department of Mathematics and the Namur Institute for Complex Systems (naXys), University of Namur, Namur, Belgium.}
		\thanks{J. Gon{\c{c}}alves is with the Luxembourg Centre for Systems Biomedicine, University of Luxembourg, Belvaux, Luxembourg.}
		\thanks{{$^*$}For correspondence, \href{mailto:
				yye@hust.edu.cn}{\tt yye@hust.edu.cn}.}}
	
	\maketitle
	
	\begin{abstract}
		Low sampling frequency challenges the exact identification of the continuous-time (CT) dynamical system from sampled data, even when its model is identifiable. The necessary and sufficient condition is proposed-- which is built from Koopman operator-- to the exact identification of the CT system from sampled data. The condition gives a Nyquist-Shannon-like critical frequency for exact identification of CT nonlinear dynamical systems with Koopman invariant subspaces: 1) it establishes a sufficient condition for a sampling frequency that permits a discretized sequence of samples to discover the underlying system and 2) it also establishes a necessary condition for a sampling frequency that leads to system aliasing that the underlying system is indistinguishable; and 3) the original CT signal does not have to be band-limited as required in the Nyquist-Shannon Theorem. The theoretical criterion has been demonstrated on a number of simulated examples, including linear systems, nonlinear systems with equilibria, and limit cycles. 
	\end{abstract}
	
	\begin{IEEEkeywords}
		System identification, sampling period, Koopman operator.
	\end{IEEEkeywords}
	
	\section{Introduction}
	\label{sec:introduction}
	\IEEEPARstart{O}{ne} of the central problems in science and engineering is to discover underlying physical laws from observations. Physical laws are invariably described as continuous-time (CT) dynamical systems, while the available measurements are inevitably in discrete-time (DT) \cite{ljung2021modeling}. This leads to a key challenge: when the sampling frequency is not high enough to capture the dynamics, several CT systems may generate the same DT measurements, and as a result, the underlying CT dynamical system can not be exactly identified. Similar to aliasing in signals, aliasing in systems (or we call it system aliasing) is defined to characterize such a many-to-one relationship between the CT systems and DT measurements \cite{Yue2020a}. 
	
	The goal of this paper is to investigate problems of system aliasing of nonlinear dynamical systems, i.e., the condition required to obtain a unique CT dynamical system from the sampled measurements. Based on the equivalence between the vector field and the infinitesimal generator of the Koopman operator\cite{Koopman1931hamiltonian,mezic2005spectral}, our goal is recast in obtaining a unique generator from its Koopman operator \cite{mauroy2019koopman}. Building on results from semigroup theory and operator logarithm, necessary and sufficient conditions to guarantee no system aliasing are proposed when the associated Koopman operator of the dynamical system admits point, continuous or residual spectrum. The results show that system aliasing and accordingly critical sampling period are associated with the imaginary part of the spectrum of the generator of Koopman operators. These results provide a natural extension of known results for linear systems \cite{Yue2020a} and extends the Nyquist-Shannon Sampling Theorem for signals that are sampled from nonlinear dynamical systems with a special Koopman invariant subspace.
	
	There have been efforts in the identification of CT dynamical systems from DT measurements in the past decades; a continuous-time system identification (CONTSID) toolbox has been developed in Matlab \cite{garnier2018contsid}. The  identification approaches of CT dynamical system include indirect methods \cite{sinha1982identification} and direct methods \cite{unbehauen1998review, garnier2008identification, garnier2015direct, subrahmanyam2019identification}. For indirect methods, the CT system is obtained by transformation from the identified DT dynamical system. However, performance of this transformation is highly dependent on the choice of the sampling period \cite{subrahmanyam2019identification}. To overcome this shortcoming, the direct methods use pre-filters to provide smoothed differentiated data and identify the parameters of CT models directly, which outperforms the indirect methods \cite{ljung2003initialisation}. To reconstruct the derivatives of the measurements, the underlying CT signals are assumed to contain no frequency above half of their sampling frequency \cite{garnier2015direct}. However, such a band-limited assumption \cite{shannon1949communication} may not be true for general nonlinear dynamical systems. 
	
	The content of this paper is organized as follows. The problem and concepts related to system aliasing are presented in Section \ref{sec:pre}. The Koopman operator and the spectrum of operators are introduced in Section \ref{sec:Kopm} and \ref{sec:spectrum}, respectively. Then we analyze system aliasing of dynamical systems using eigenvalues (Section \ref{sec:point}) and spectrum (Section \ref{sec:conspectrum}), including continuous / residual spectrum, of the Koopman operator. In the case of Koopman eigenvalues, the notion of Koopman valid eigenspace is introduced in Section \ref{sec:property}, and system aliasing is proposed equivalently with it in Section \ref{sec:Tsi} (Lemma \ref{th1}). Related concepts of generator aliasing are defined to analyze system aliasing in the eigenspace in \ref{sec:inf}. Through semigroup and operator logarithm theorems, the critical sampling period can be expressed in an analytical form (Theorem \ref{th7}) in Section \ref{sec:pri}. In the case of Koopman continuous or residual spectrum, system aliasing is analyzed equivalently in an infinite-dimensional Koopman invariant subspace in Section \ref{sec:infinitesubspace}  and Section \ref{sec:iinf}. The critical sampling period (Theorem \ref{infsampling}) is expressed in Section \ref{sec:ipri}. Some Numerical experiments are provided in Section \ref{sec:num}. Conclusions and possible implications 
	of these results are discussed in Section \ref{sec:con}.
	
	\subsection{Notations}\label{sec:not}
	The paper adopts the following notation:
	$\hL(X)$ denotes all linear bounded operators from the Banach space $X$ to itself, $\hD(\cdot)$ denotes the domain of an operator, $\sspan\{\cdot\}$ denotes the linear space spanned by the basis functions, and $\nabla$ denotes the gradient operator. 
	
	\section{Problem Formulation}\label{sec:pre}
	\subsection{A motivating example}\label{sec:moti}
	We first present a motivating example to illustrate an interesting phenomenon that when the sampling frequency is low, the underlying physical systems can not be identified from DT measurements.
	
	Suppose an extendable rod, whose length is $r(t)$ of time $t$, rotating clockwise around its fixed end with a constant angular velocity $\omega$ rad/s (see Fig.~\ref{fig1}). We take a series of photos with the time interval $T_s$ s to record the position of the other end, i.e., the measurements of $x_1(kT_s) = r(kT_s)\cos(k\omega T_s)$ and $x_2(kT_s)=r(kT_s)\sin(k\omega T_s)$, $k=1,2,\ldots$. Fig. \ref{fig1} shows the real trajectories (denoted by blue lines), and the DT measurement at each moment (denoted by black triangles), when the length of the rod is constant or getting longer. In fact, the $x_1(t)$ and $x_2(t)$ in two cases are states of the following linear dynamical systems: \begin{align*}
	\dot{x}_1 &= ax_1+\omega x_2,\\
	\dot{x}_2 &= -\omega x_1+ax_2,
	\end{align*}
	with $a=0$ and $a=0.1$ respectively, which are structural identifiable using identifiability tests in \cite{glover1974parametrizations, yuan2016identification}.
	
However, when the photo-taken frequency is not high enough, it is difficult to infer whether the rod is rotating clockwise or counterclockwise, indeed, there could exist another trajectory with the same DT measurements (denoted by orange dashed lines). Fig. \ref{fig1} shows three photos (the time interval satisfies $\omega T_s = 4\pi/3$) when the length of the rod is constant and getting longer respectively. It implies that we may falsely infer that the rod is rotating counterclockwise (denoted by orange dashed lines), which perfectly matches the photos. From a dynamical system point of view that there are at least two different CT systems that generate the same DT measurements. As a result, without prior information, one can not identify the underlying system using any advanced system identification algorithms. 
	
	To provide a sampling frequency bound to avoid this aliasing phenomenon, Nyquist-Shannon sampling theorem \cite{shannon1949communication} is helpful when the corresponding CT signals are band-limited, such as the case of Fig. \ref{Fig1.sub1}. However, when they are not band-limited, shown Fig. \ref{Fig1.sub2}, the critical sampling frequency to avoid aliasing phenomenon is also important. Therefore, we shall define the concept of system aliasing motivated by this example and discuss the critical sampling frequency to avoid it in the later sections.
	
	\begin{figure}[!t]
		\centering
		\subfigure[The rod $r(t)$ is constant by $\dot{r} = 0$ while rotating with $\omega$ rad/s]{
			\label{Fig1.sub1}
			\centerline{\includegraphics[width=.5\textwidth]{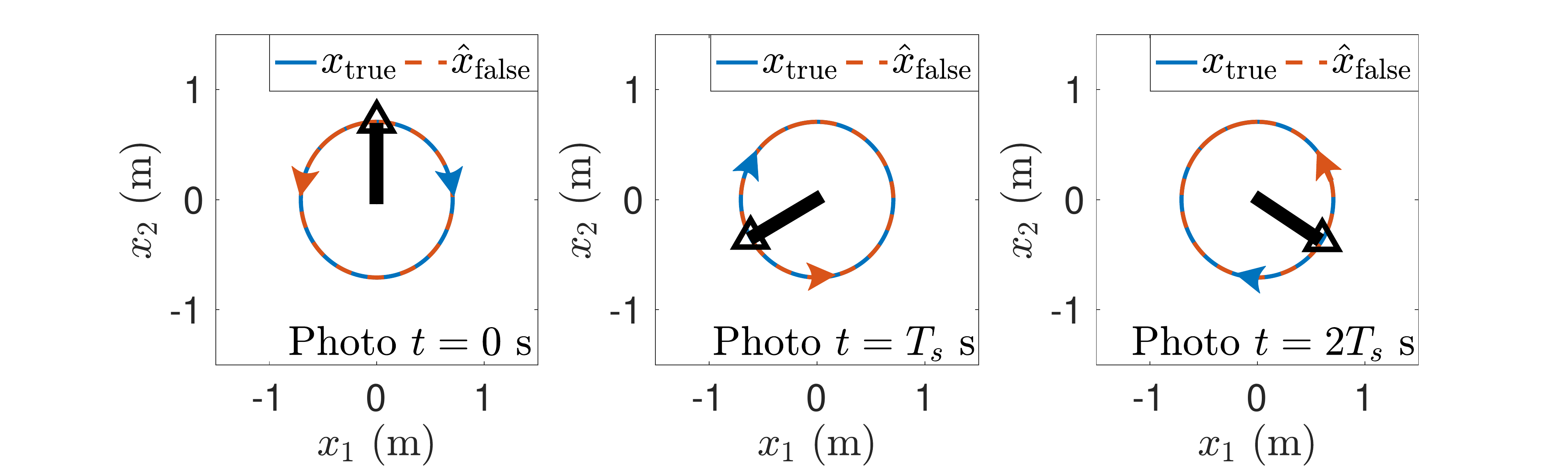}}}
		\subfigure[The rod $r(t)$ is getting longer by $\dot{r} = 0.1r$ while rotating with $\omega$ rad/s]{
			\label{Fig1.sub2}
			\centerline{\includegraphics[width=.5\textwidth]{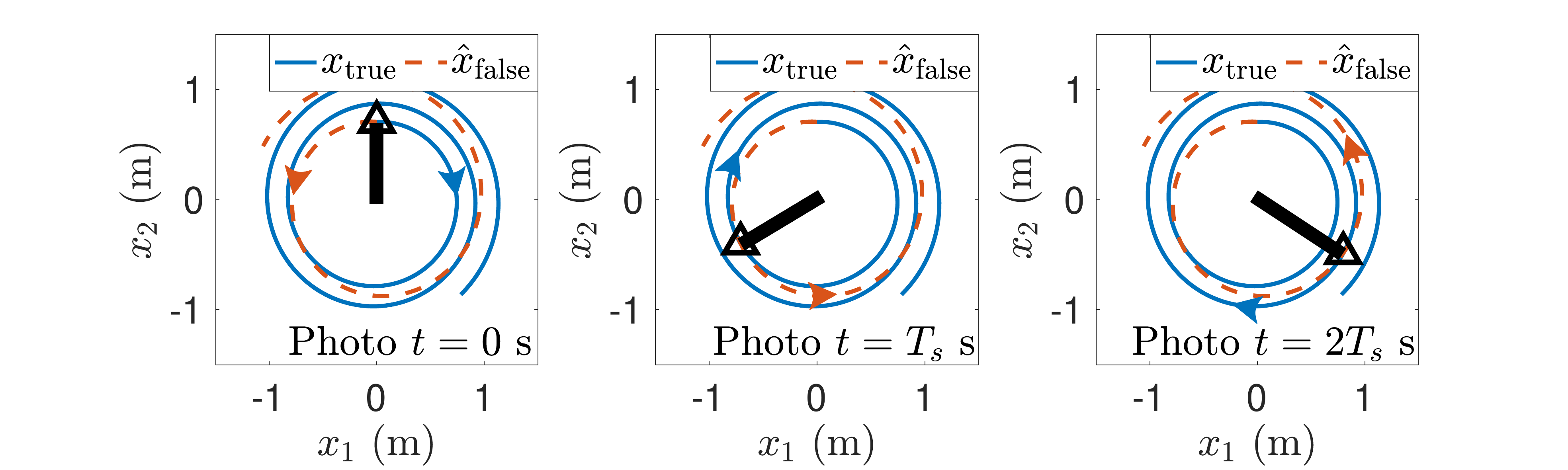}}}
		\caption{Observation of extendable rod to show system aliasing. The sampling is not high enough to capture the systems of (a) $\dot{r} = 0$ and (b) $\dot{r}=0.1r$. For these dynamical systems, the outputs of true and false dynamical systems are denoted as blue lines and orange dashed lines respectively. The position of rotating end of the rod in each photo is denoted as black triangle.}
		\label{fig1}
	\end{figure}
	
	\subsection{Definitions}\label{sec:def}
	Consider an autonomous nonlinear dynamical system described by the following ordinary differential equations:
	\begin{equation}\label{eq1}
	\begin{aligned}
	\dot{\bx}(t)&=f (\bx)
	\end{aligned}
	\end{equation}
	where $\bx\in \bM\subset \bR^n$, $t\in \bR^+ $ are the state vector and time respectively, with $ \bM$ denoting the state space.
	The vector field $\bx\mapsto f(\bx):\bM \to\bM$ describes the dynamical properties of the states. The flow induced by \eqref{eq1} is : $t\mapsto S^t(\bx_0):R^+ \times \bM\to \bM$, which is a solution of \eqref{eq1} associated with initial states $\bx_0$, i.e., $\bx(t) = S^t(\bx_0)$. In order to study system aliasing specifically, we focus on measurements with equidistant sampling period $T_s$, i.e., time-series $\{\bx(t_k)\}$ with $\bx(t_{k+1})=S^{T_s}(\bx(t_{k}))$, where $t_{k+1}= t_k+T_s$ and $k=1,2,\ldots$. 
	Assume that the vector field $f(\bx)$ is continuously differentiable 
	and the DT flow $S^{T_s}$ can be obtained exactly. The concept of system aliasing is defined as follows.
	
	
	\begin{mydef}[System aliasing] Consider the CT dynamical system \eqref{eq1}. If there exists a vector field $\hat{f}(\bx)\neq f(\bx)$ for some $\bx\in\bM$ that can generate the same discrete flow $S^{T_s}(\bx)$ for all $\bx\in\bM$, where $\hat{f}\in \bI_f$ and $\bI_f$ is the aliasing space of the vector field $f$, we call $\hat{f}$ a system alias of $f$ with respect to $\bI_f$. 
	\end{mydef}
	
	\begin{remark}
		This definition is a generalization of the definition of system aliasing in the linear case \cite{Yue2020a}. By this general definition, no system aliasing means that the vector field can be recovered from discrete flow $S^{T_s}$ with the equidistant sampled data. 
	\end{remark}
	
	Accordingly, the concepts of critical sampling period / frequency bound are defined as follows. 
	\begin{mydef}[Critical sampling period / frequency bound] 
		The critical sampling period of the CT dynamical system \eqref{eq1} is $T_\gamma$ s if there is no system aliasing of the correct dynamical system for all $t< T_\gamma$ s. Equivalently, the lower bound of sampling frequency is $2\pi/T_\gamma$ rad/s.
	\end{mydef}
	\begin{remark}
	The lower bound of sampling frequency can provide a criterion to collect data for system identification, which guarantees that the vector field can be exactly identified. The identification results of CT dynamical system would be unreliable if the sampling frequency is lower than its bound, since there exits another vector field that can also generate these sampled data. 
	\end{remark}
	
	Our goal is to investigate the condition to guarantee that there is no system alias for dynamical systems, and obtain the critical sampling frequency $T_\gamma$ for nonlinear dynamical systems. 
	
	\section{Analysis of system aliasing using Koopman operator}
	
	The analysis of system aliasing exploits the equivalent descriptions of nonlinear systems and infinite-dimensional linear systems through the Koopman operator. The main idea to analyze system aliasing is illustrated in Fig. \ref{fig2}. First, the nonlinear dynamical system is lifted to a Koopman invariant space. Secondly, system aliasing in state space is expressed in terms of generator aliasing of the Koopman operator in this space. Finally, operator logarithm theorems are used to derive the necessary and sufficient condition of system aliasing. 
	
	\begin{figure}[thpb]
		\centering
		\includegraphics[width=0.5\textwidth]{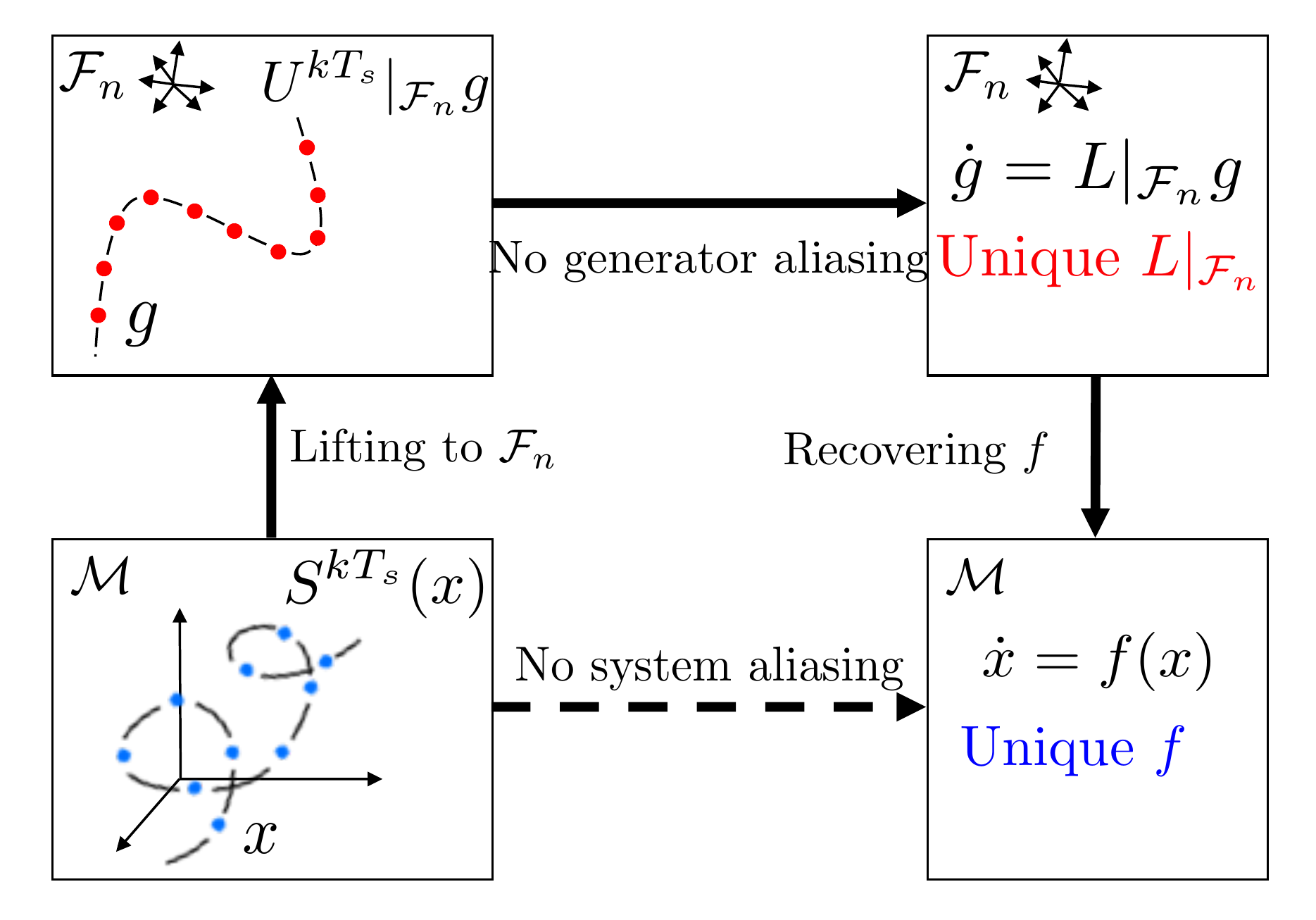}
		\caption{The steps to analyze system aliasing of nonlinear dynamical systems (For illustration purpose, we only consider nonlinear systems with finite-dimensional Koopman eigenspace $\hF_n$). The analysis of system aliasing consists of three steps: (1) lifting the discrete flow $S^{T_s}$ to $\hF_n$, (2) obtaining the unique generator from the Koopman operator $\bU|_{\hF_n}$, and (3) recovering the vector field $f$.}
		\label{fig2}
	\end{figure}
	
	
	\subsection{Operator-theoretic approach to dynamical systems}\label{sec:Kopm}
	The flow generated by \eqref{eq1} is described by $S^t$ in the state space $\bM$. Equivalently, the system can be described by the Koopman semigroup in the (Banach) space $\hF$. The semigroup of the Koopman operator\cite{Koopman1931hamiltonian} describes the evolution of the observable functions along the trajectories:
	\begin{equation}\label{eq4}
	U^{t} g=g\circ S^{t} , \,\, g\in\hF,
	\end{equation} where $g:\bM\to\bC$ is an observable function. The Koopman operator is a linear operator, i.e., let the observable functions $g_1,g_2\in \hF$,
	\begin{equation*}
	U^t(\alpha_1 g_1+\alpha_2g_2)=\alpha_1U^t g_1+\alpha_2U^t g_2, \ \alpha_1,\alpha_2\in \bR.
	\end{equation*}
	The dynamical interpretation of $U^t$ is as follows: the linear transformation of the observable, i.e., $U^t g(\bx)$, has the value of observable function $g$ at $S^t(\bx)$. Specifically, if $g(\bx)=\bx_j$, where $\bx_j\ (1\le j\le n)$ is the $j$th component of $\bx$, $U^t g(\bx(0))$ has the value that the initial state $\bx_j(0)$ flows after the lapse of the time $t$. 
	
	The infinitesimal generator of the Koopman operator can be interpreted as the derivative of $U^t$, 
	which is defined by:
	\begin{equation}\label{eq5}
	L g=\lim_{t \to0^+}\frac{1}{t}(U^t-I)g, \ g\in \hD(L).
	\end{equation}
	When the generator $L$ acts on the full observable functions, i.e., $g(\bx)=\bx$, one obtains the vector field $L g = f$. This connection between the generator $L$ and $f(x)$ stems from the formula:
	\begin{equation*}
	\lim_{t \to0^+}\frac{1}{t}(S^t(x_0)-x_0)=f(x_0).
	\end{equation*}
	If the observable function $g$ is continuously differentiable with compact support, there is a general relationship between the dynamics $f$ and the generator $L$ of the corresponding Koopman operator as follows\cite[Section 7.6]{lasota2013chaos}:
	\begin{equation}\label{eq6}
	L g = f \cdot \nabla g,\  g\in \hD(L).
	\end{equation}
	The infinitesimal generator $L$ is also a linear operator. Exploiting \eqref{eq4}-\eqref{eq6}, the analysis of nonlinear systems by Koopman operators can be interpreted as an extension problem of linear systems, from finite-dimensional to infinite-dimensional ones. In particular, the state space $\bM$ is extended to an observable space $\hF$ so that the nonlinear vector field $f$ becomes embedded in a linear operator $L$. The extended linear system of observable functions $g\in \hF$ can be described by 
	\begin{equation}\label{eq7}
	\begin{aligned}
	\dot{g}&=L g,\\
	U^{T_s}g&=g\circ S^{T_s},
	\end{aligned}
	\end{equation}where $\dot{g}$ denotes the ``derivative'' of the observable function $\partial(g\circ S^t)/\partial{t}$\cite{mauroy2020koopman}. 
	\subsection{Spectrum of the Koopman operator}\label{sec:spectrum}
	The spectrum $\sigma(\cdot)$ of operators is composed of three parts: point spectrum $\sigma_p(\cdot)$, residual spectrum $\sigma_r(\cdot)$, and continuous spectrum $\sigma_c(\cdot)$. In this section, these three different types of spectrum are introduced.
	\subsubsection{Point spectrum} 
	The point spectrum $\sigma_p(B)$ of the operator $B\in \hL(X)$ is the set of value $\lambda$ such that $B - \lambda I$ is not injective, where $I$ is the identity operator.
	When the Koopman operator admits the point spectrum, the eigenvalue and associated eigenfunction of the Koopman operator are defined as follows.
	\begin{mydef}[Koopman eigenvalue and eigenfunctions]
		An eigenvalue of the Koopman operator is the value $\lambda_i\in\bC$ that:
		$$
		U^t\phi_i = e^{\lambda_it}\phi_i,
		$$
		where $\phi_i\in\hF\neq0$ is the associated Koopman eigenfunction.
	\end{mydef}
	
	It follows \eqref{eq4} that the Koopman eigenfunction $\phi_i$ is also an eigenfunction of the generator $L$, that is,
	$$
	L\phi_i = \lambda_i \phi_i.
	$$ The value $\lambda_i$ is the eigenvalue of the generator $L$ and belongs to its point spectrum $\sigma_p(L)$.
	
	The set of eigenfunctions and eigenvalues of the Koopman operator can be characterized by the following property.
	\begin{property}[Multiplication of eigenfunctions \cite{mauroy2016global}]\label{property1} Consider the eigenfunctions $\phi_1\in\hF$ and $\phi_2\in \hF$ of the Koopman operator $U^t:\hF\to\hF$ with associated eigenvalues $\lambda_1$ and $\lambda_2$. When $\phi_1^{c_1}\phi_2^{c_2}$ belongs to the space $\hF$, where  $c_1,c_2\ge0$, the observable function $\phi_1^{c_1}\phi_2^{c_2}$ is also an eigenfunction of $U^t$ with the associated eigenvalue $c_1\lambda_1 + c_2\lambda_2$.
	\end{property} 
	
	\subsubsection{Continuous spectrum}
	The Koopman operator may admit continuous or residual spectrum, which is typically associated with chaotic systems. The continuous spectrum of the operator $B\in\hL(X)$ is defined as follows.
	\begin{mydef}[Continuous spectrum\\ {\cite[p. 12]{mauroy2020koopman}}]
		The continuous spectrum $\sigma_c(B)$ of the operator $B\in\hL(X)$ is the set of values $\lambda \in \bC $ such that the operator $B-\lambda I$ is injective and has a dense image, but not surjective, where $I$ is the identity operator.
	\end{mydef}
	
	\subsubsection{Residual spectrum}
	Here is the definition of residual spectrum of the operator $B\in\hL(X)$.
	\begin{mydef}[Residual spectrum\\ {\cite[p. 12]{mauroy2020koopman}}]
		The residual spectrum $\sigma_r(B)$ of the operator $B\in \hL(X)$ is the set of values $\lambda \in \bC $ such that the operator $B-\lambda I$ is injective, not surjective, and does not have a dense image, where $I$ is the identity operator.
	\end{mydef}
	
	
	Next, we shall consider system aliasing for nonlinear systems with these three different spectrum of associated Koopman operators.

	\section{Sampling theorem for systems with Koopman point spectrum}\label{sec:point}
	When the Koopman operator $U^t:\hF\to\hF$ admits point spectrum, there exists infinite Koopman eigenfunctions $\{\phi_k\}_{k=1}^\infty$. In this section, we consider the dynamical systems whose Koopman operator has $n$ valid eigenfunctions (Definition \ref{def1}). With such $n$ Koopman valid eigenfunctions, which satisfy some requirements, the vector field of the $n$-dimensional dynamical system can be recovered (Proposition \ref{property3}). The sampling theorem to avoid system aliasing of these dynamical systems is proposed with a Koopman valid eigenspace, which is spanned by $n$ valid eigenfunctions (Theorem \ref{th7}). For the dynamical system whose Koopman valid eigenfunctions are lacking or there does not exists point spectrum, the system aliasing and the sampling theorem will be considered in Section \ref{sec:conspectrum}.
	\subsection{Properties of the Koopman valid eigenspace}\label{sec:property}
	Consider a set of Koopman eigenfunctions whose gradients can be linearly independent vectors for $\bx\in\bM$, which we called valid eigenfunctions. We define the Koopman valid eigenspace $\hF_n$ as follows.
	\begin{mydef}[Koopman valid eigenspace $\hF_n$]\label{def1}
		The Koopman valid eigenspace $\hF_n$ is spanned by $n$ eigenfunctions of the Koopman operator, i.e., $\hF_n = \sspan\{\phi_1,\ldots,\phi_n\}$, where $n$ is the dimension of the state space and the gradients of these eigenfunctions $\{\nabla\phi_i\}_{i=1}^n$ should be linearly independent vectors for all $\bx\in \bM$, i.e., the matrix
		$A_{ij} = \partial \phi_i/\partial x_j$ is invertible for all $\bx\in \bM$.
		
	\end{mydef}
	
	\begin{remark}[Principal eigenvalues]
		The Koopman eigenfunctions that associated to the principal eigenvalues can span the valid eigenspace in Definition \ref{def1}. For $n$-dimensional dynamical systems with different transient and asymptotic behaviors, such as dynamical systems with equilibrium, limit cycle and quasi-periodic attractors, the spectrum of associated Koopman operator is a lattice generated by the linear combination of integers of $n$ principal eigenvalues, provided that the operator is defined on well-chosen spaces of observable functions \cite{mezic2020spectrum}. The gradients of the associated eigenfunctions are linearly independent vectors for all $\bx$ in the basin of the attraction. Therefore, the $n$ principal eigenfunctions associated with $n$ distinct eigenvalues span a proper eigenspace according to Definition \ref{def1}. However, if an eigenvalue is generated by the linear combination of eigenvalues associated to the other $n-1$ eigenfunctions that already belong to $\hF_n$ (see Property \ref{property1}), the independence condition on the gradient of the eigenfunctions may not be satisfied. When a principal eigenvalue is repeated which results in the lack of eigenfunctions, it needs to consider the continuous or residual spectrum. 
	\end{remark}
	
	The following proposition provides a necessary and sufficient condition to recover the vector field $f$ from the generator. 
	
	\begin{proposition}[Conditions to recover $f$]\label{property3}
		Consider the Koopman invariant subspace $\hF = \sspan\{\phi_j\}_{j=1}^N$ and the generator $L|_{\hF}$, which is restricted to the space $\hF$. The vector field $f$ in \eqref{eq1} can be recovered from $L|_\hF$ if and only if there are at least $n$ Koopman eigenfunctions $\phi_i\in\hF$ with gradients $\{\nabla \phi_i\}_{i=1}^n$ being linearly independent vectors for all $\bx\in \bM$.
	\end{proposition}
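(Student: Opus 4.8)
The plan is to prove the two directions of the biconditional separately, using the fundamental relation $L g = f \cdot \nabla g$ from \eqref{eq6} as the central tool. The key observation is that eigenfunctions $\phi_i$ satisfy $L\phi_i = \lambda_i\phi_i$, so applying \eqref{eq6} to each eigenfunction gives $f \cdot \nabla\phi_i = \lambda_i\phi_i$ for $i=1,\ldots,n$. Stacking these $n$ scalar equations yields the matrix system $A\,f = (\lambda_1\phi_1,\ldots,\lambda_n\phi_n)^T$, where $A$ is the Jacobian matrix with entries $A_{ij}=\partial\phi_i/\partial x_j$ as in Definition \ref{def1}.

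For the sufficiency direction, I would assume there exist $n$ eigenfunctions whose gradients $\{\nabla\phi_i\}_{i=1}^n$ are linearly independent for all $\bx\in\bM$, which is exactly the statement that $A$ is invertible everywhere on $\bM$. Given the generator $L|_{\hF}$, one can read off both the eigenvalues $\lambda_i$ and the eigenfunctions $\phi_i$ (each $\phi_i$ is an eigenfunction of $L|_\hF$, and its gradient is computable since $\phi_i$ is an explicit observable). Then I would invert the matrix equation pointwise to obtain
\begin{equation*}
f(\bx) = A^{-1}(\bx)\,\bigl(\lambda_1\phi_1(\bx),\ldots,\lambda_n\phi_n(\bx)\bigr)^T,
\end{equation*}
which recovers $f$ uniquely. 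This shows $f$ is determined by $L|_\hF$.

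For the necessity direction, I would argue by contraposition: if fewer than $n$ eigenfunctions in $\hF$ have linearly independent gradients, then at some $\bx_0\in\bM$ the available gradients span a proper subspace of $\bR^n$, so the matrix of stacked gradients has a nontrivial left null vector (or, more precisely, the constraint equations $f\cdot\nabla\phi_i=\lambda_i\phi_i$ underdetermine $f$ along some direction). I would then exhibit a perturbation $\hat f = f + v$ where $v(\bx_0)$ lies in the orthogonal complement of the span of the gradients, showing $\hat f$ produces the same action on every eigenfunction and hence the same $L|_\hF$, yet $\hat f\neq f$. The main obstacle I anticipate is making the necessity argument global and rigorous rather than merely pointwise: one must ensure the perturbing field $v$ is a genuine continuously differentiable vector field consistent with the eigenfunction constraints across all of $\bM$, and verify that it still generates a valid dynamical system whose generator agrees with $L|_\hF$ on the whole subspace $\hF$, not just on the distinguished eigenfunctions. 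Care is also needed to confirm that $A_{ij}$ being the Jacobian means invertibility is equivalent to gradient independence, which is immediate but should be stated cleanly.
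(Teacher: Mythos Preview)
Your sufficiency argument is essentially identical to the paper's: both stack the $n$ eigenvalue relations $f\cdot\nabla\phi_i=\lambda_i\phi_i$ into the matrix equation $A(\bx)f(\bx)=(\lambda_1\phi_1(\bx),\ldots,\lambda_n\phi_n(\bx))^T$ with $A_{ij}=\partial\phi_i/\partial x_j$, and invert pointwise.

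For necessity the two diverge. The paper's ``necessity'' paragraph simply observes that $f$ is recovered from the system $f\cdot\nabla\phi_j=\lambda_j\phi_j$, $j=1,\ldots,N$, \emph{if} the stacked gradient matrix has full rank---which is really just the sufficiency direction restated, not a proof that full rank is required. Your contrapositive approach, by contrast, actually attempts to show non-recoverability: when the gradients fail to span $\bR^n$ at some $\bx_0$, you perturb $f$ in the orthogonal complement to produce a distinct $\hat f$ inducing the same $L|_\hF$. This is the genuinely correct logical structure for necessity, and you have rightly flagged the real work that remains---extending a pointwise perturbation to a smooth vector field $v$ on $\bM$ that is annihilated by \emph{all} $\nabla\phi_j$, $j=1,\ldots,N$ (not just the chosen $n$), so that $\hat f\cdot\nabla\phi_j=f\cdot\nabla\phi_j$ holds on the whole basis of $\hF$. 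That global consistency step is nontrivial and is not addressed in the paper either; your proposal is therefore at least as complete as the paper's proof, and more honest about where the argument is delicate.
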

	\begin{proof}
		\emph{Sufficiency}. Consider the states $\bx\in\bM\subset \bR^n$ and the vector field $f$. If at least $n$ Koopman eigenfunctions with linearly independent vectors $\{\nabla\phi_i\}_{i=1}^n$ for all $\bx\in \bM$ are given, the vector field $f = [f_1,\ldots,f_n]$ can be obtained through \eqref{eq6} and $n$ eigenvalue equations
		\begin{equation*}
		\sum_{i=1}^n f_i \partial{\phi_i}/\partial{x_i} = \lambda_i \phi_i,
		\end{equation*} where $i=1,\ldots,n$.
		Indeed, we have $A f(\bx)= [\lambda_1 \phi_1(\bx) \ldots \lambda_n \phi_n(\bx)]^T$,
with the invertible matrix $A_{ij} = \partial \phi_i/\partial x_j(\bx)$.
		
		\emph{Necessity}. Consider the Koopman invariant subspace $\hF = \sspan \{\phi_1,\ldots,\phi_N \}$, where $\{\phi_j\}_{j=1}^N$ are eigenfunctions of $L|_{\hF}$. The vector field $f=[f_1,\ldots,f_n]$ can be recovered from $L|_{\hF}$ by the eigenvalue equations $f \cdot \nabla \phi_j = \lambda_j \phi_j,j=1,\ldots,N$, if the matrix $A_{ij} = \partial \phi_i/\partial x_j(\bx)$ is full rank, i.e., there are at least $n$ linearly independent vectors $\{\nabla\phi_j\}_{j=1}^n$ for all $\bx\in\bM$.	
	\end{proof}
	
	By Proposition \ref{property3}, the eigenspace $\hF_n$ (in Definition \ref{def1}) is a core and valid space to recover the vector space. Denote by $U^t|_{\hF_n}$ and $L|_{\hF_n}$ the restriction of the Koopman operator and its generator to $\hF_n$. Restricted to this finite-dimensional Koopman invariant space $\hF_n$, the vector field can be recovered from the generator $L|_{\hF_n}$. 
	The boundedness of Koopman operator $U^t|_{\hF_n}$ and its generator $L|_{\hF_n}$ can be characterized as follows. 
	\begin{property}[Boundedness of operators]\label{property2}
		Koopman operator $U^t|_{\hF_n}$ and its generator $L|_{\hF_n}$ are both bounded. 
	\end{property}
	\begin{proof}
		Consider the observable function $g\in\hF_n, g=\sum_{i=1}^n a_i\phi_i$. The evolution of $g$ is governed by the Koopman operator:$$
		U^t|_{\hF_n} g = U^t|_{\hF_n} (\sum_{i=1}^n a_i\phi_i)=\sum_{i=1}^n a_i\lambda_i \phi_i\in\hF_n.$$
		Therefore, the eigenspace $\hF_n$ is an $n$-dimensional invariant Koopman subspace. Since a finite dimensional linear operator can be represented by a matrix, $U^t|_{\hF_n}$ and $L|_{\hF_n}$ are both bounded. 
	\end{proof}

	
	\subsection{Koopman valid eigenspace and system aliasing}\label{sec:Tsi}
	Using the operator-theoretic description of nonlinear systems, system aliasing of dynamical systems with Koopman valid eigenfunctions can be equivalently described through the uniqueness of the generator $L|_{\hF_n}$ obtained from $U^{T_s}|_{\hF_n}$, which is described as follows.
	\begin{lemma}[System aliasing with $\hF_n$]\label{th1}
		There is no system alias of the dynamical system described by \eqref{eq1} when the sampling period is $T_s$, if and only if there exists a generator $L|_{\hF_n}$ that can be obtained from $U^{T_s}|_{\hF_n}$ uniquely, where $\hF_n$ is a Koopman valid eigenspace defined in Definition \ref{def1}.
	\end{lemma}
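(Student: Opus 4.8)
The plan is to reduce the equivalence to two bijective correspondences and the single-valuedness of the operator logarithm. The first correspondence is between the discrete flow $S^{T_s}$ and the restricted operator $U^{T_s}|_{\hF_n}$. By Definition \ref{def1} the Jacobian $A_{ij}=\partial\phi_i/\partial x_j$ is invertible on all of $\bM$, so the map $\Phi=(\phi_1,\dots,\phi_n):\bM\to\bC^n$ is a local (and, under the standard assumption, global) diffeomorphism onto its image. Since $U^{T_s}\phi_i=\phi_i\circ S^{T_s}$ by \eqref{eq4}, one has $S^{T_s}(\bx)=\Phi^{-1}\!\big(U^{T_s}\phi_1(\bx),\dots,U^{T_s}\phi_n(\bx)\big)$, so $S^{T_s}$ and $U^{T_s}|_{\hF_n}$ determine each other once the eigenfunctions are fixed. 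The second correspondence, supplied by Proposition \ref{property3}, is between the vector field $f$ and the restricted generator $L|_{\hF_n}$: on $\hF_n$ the gradient condition holds, hence $f$ is recovered uniquely from $L|_{\hF_n}$, and conversely $L|_{\hF_n}$ is determined by $f$ through \eqref{eq6}.

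Next I would connect generator and operator by the semigroup relation $U^{T_s}|_{\hF_n}=\exp(T_s L|_{\hF_n})$, which is legitimate because $\hF_n$ is a finite-dimensional invariant subspace on which both operators are bounded (Property \ref{property2}); in the eigenbasis $\{\phi_i\}$ both are diagonal, with entries $e^{\lambda_i T_s}$ and $\lambda_i$ respectively. With these ingredients, system aliasing translates exactly into the existence of more than one generator mapping to the same $U^{T_s}|_{\hF_n}$ under $\exp(T_s\,\cdot)$.

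For sufficiency (unique generator $\Rightarrow$ no alias) I would take any $\hat f$ producing the same discrete flow $S^{T_s}$; since the Koopman operator depends on the dynamics only through $g\mapsto g\circ S^{T_s}$, the operator of $\hat f$ agrees with $U^{T_s}$ on every observable, in particular $\hat U^{T_s}|_{\hF_n}=U^{T_s}|_{\hF_n}$. Uniqueness of the generator then forces $\hat f$'s generator on $\hF_n$ to equal $L|_{\hF_n}$, and Proposition \ref{property3} gives $\hat f=f$, so no alias exists. For necessity I would argue the contrapositive: if two distinct generators $L_1|_{\hF_n}\neq L_2|_{\hF_n}$ satisfy $\exp(T_s L_1|_{\hF_n})=\exp(T_s L_2|_{\hF_n})=U^{T_s}|_{\hF_n}$, then by Proposition \ref{property3} they correspond to distinct vector fields $f_1\neq f_2$, and by the first correspondence both generate the same $S^{T_s}$, which is precisely a system alias.

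The step I expect to be the main obstacle is making the sufficiency direction airtight, namely checking that a candidate alias $\hat f$ induces a well-defined generator on the \emph{same} invariant subspace $\hF_n$ with the same eigenfunctions. Equality of the time-$T_s$ operators $\hat U^{T_s}|_{\hF_n}=U^{T_s}|_{\hF_n}$ is immediate, but one must verify that $\hF_n$ is invariant for the continuous-time semigroup $\hat U^t$ and that the $\phi_i$ remain its eigenfunctions, so that passing from the operator to its generator is meaningful. This invariance, together with the precise set of admissible logarithms, is exactly what the notions of generator aliasing and the operator-logarithm theorems in Sections \ref{sec:inf}--\ref{sec:pri} are designed to handle; accordingly I would state the equivalence at the level of $\hF_n$ here and defer the characterization of when the logarithm is single-valued to those results. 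A secondary technical point is the global invertibility of $\Phi$: if $\Phi$ is only a local diffeomorphism, $S^{T_s}$ is recovered only locally, which still suffices for the pointwise identity $\hat f(\bx)=f(\bx)$ underlying the definition of system aliasing.
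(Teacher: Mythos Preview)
Your proposal is correct and follows essentially the same route as the paper: both directions hinge on Proposition~\ref{property3} for the bijection between $f$ and $L|_{\hF_n}$, together with the relation $L=f\cdot\nabla$ from \eqref{eq6} and the definition \eqref{eq4} of $U^{T_s}$. The paper's own argument is terser---it does not introduce the diffeomorphism $\Phi$ or the explicit recovery formula for $S^{T_s}$, and for necessity it argues directly from $L=f\cdot\nabla$ rather than by contrapositive---and, like you, it tacitly defers the precise meaning of ``uniquely obtained'' (including the invariance concern you flag for $\hat f$) to the generator-aliasing framework of Section~\ref{sec:inf}.
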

	\begin{proof}
		The Koopman operator $\bU$ is accurate by \eqref{eq4} and discrete flow $S^{T_s}$.
		
		\emph{Sufficiency}. It follows Proposition \ref{property3} that the vector field $f$ could be recovered uniquely from the generator $L|_{\hF_n}$ for any one of the eigenspace $\hF_n$, which is restricted in any $\hF_n$. Therefore, there is no system alias of the dynamical system if there exists a generator $L|_{\hF_n}$ that can be obtained from the Koopman operator $\bU|_{\hF_n}$ uniquely. 
		
		\emph{Necessity}. There is a one-to-one relationship between the vector field $f$ and the generator $L$ through $L = f\cdot\nabla$, where the generator $L$ could be defined on any Koopman invariant space. 
		Therefore, the generator $L$ is uniquely defined on every Koopman invariant space when the vector field $f$ is unique. It follows that $L|_{\hF_n}$ can be obtained uniquely from the Koopman operator $U^{T_s}$ if $f$ can be uniquely obtained from $S^{T_s}$, i.e., there is no system alias.
	\end{proof}
	
	It follows Lemma \ref{th1} that we can focus on the condition for obtaining a unique generator $L|_{\hF_n}$ from the Koopman operator $\bU|_{\hF_n}$.
	
	\subsection{Generator aliasing in eigenspace}\label{sec:inf}
	By Property \ref{property2}, the Koopman operator $\bU|_{\hF_n}$ and its generator $L|_{\hF_n}$ are both bounded. In this case, the exponential relationship between Koopman semigroup $U^t|_{\hF_n}$ and its generator $L|_{\hF_n}$ are defined as follows:  
	\begin{lemma}[Exponential formulas{\cite[Section 11.8]{hille1996functional}}]\label{le3}
		The semigroup $U^t|_{\hF_n}\in \hL(X)$ is an exponential function of its generator $L|_{\hF_n}$, i.e., $U^t|_{\hF_n}=\exp(L|_{\hF_n}t)$. The exponential of the bounded operator is defined by: \begin{equation*}
		\exp(L|_{\hF_n}t)=\sum_{n=0}^{\infty}\frac{t^n {L|_{\hF_n}}^n}{n!}.
		\end{equation*}
	\end{lemma}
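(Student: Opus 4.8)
The plan is to reduce everything to finite-dimensional linear algebra, exploiting Property \ref{property2}, which guarantees that both $U^t|_{\hF_n}$ and $L|_{\hF_n}$ are bounded operators on the $n$-dimensional space $\hF_n$ and hence representable by $n\times n$ matrices in the basis $\{\phi_1,\ldots,\phi_n\}$. First I would record the one-parameter semigroup structure: from $U^t g = g\circ S^t$ and the flow identity $S^{t+s}=S^t\circ S^s$ one obtains $U^{t+s}|_{\hF_n}=U^t|_{\hF_n}\,U^s|_{\hF_n}$ together with $U^0|_{\hF_n}=I$. Combining this semigroup property with the generator definition in \eqref{eq5} yields the differential identity $\tfrac{d}{dt}U^t|_{\hF_n}=L|_{\hF_n}\,U^t|_{\hF_n}$, so that $t\mapsto U^t|_{\hF_n}$ is the unique solution of the matrix initial value problem $\dot X=L|_{\hF_n}X$, $X(0)=I$.

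Next I would turn to the candidate series. Since $L|_{\hF_n}$ is bounded, the estimate $\bigl\|\tfrac{t^n (L|_{\hF_n})^n}{n!}\bigr\|\le \tfrac{|t|^n\,\|L|_{\hF_n}\|^n}{n!}$ shows that $\sum_{n=0}^{\infty}\tfrac{t^n (L|_{\hF_n})^n}{n!}$ converges absolutely in operator norm for every $t$, uniformly on compact time intervals; denote its sum by $V(t)$. Differentiating the series term by term, which this uniform convergence justifies, gives $\dot V(t)=L|_{\hF_n}V(t)$ and $V(0)=I$, so $V$ solves the very same initial value problem. By uniqueness of solutions of a linear matrix ODE I would then conclude $U^t|_{\hF_n}=V(t)=\exp(L|_{\hF_n}t)$.

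A cleaner alternative, specific to this setting, is to diagonalise directly: because each basis function satisfies $U^t\phi_i=e^{\lambda_i t}\phi_i$ and $L\phi_i=\lambda_i\phi_i$, the matrices of $U^t|_{\hF_n}$ and $L|_{\hF_n}$ are diagonal with entries $e^{\lambda_i t}$ and $\lambda_i$ respectively, so the power series collapses entrywise to $e^{\lambda_i t}=\sum_{n}(\lambda_i t)^n/n!$ and the identity follows at once. I would nonetheless present the ODE argument as the main proof, since it does not require the eigenbasis to be explicitly available and matches the cited semigroup result \cite{hille1996functional}.

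The step I expect to require the most care is the passage from the abstract, limit-based definition of the generator in \eqref{eq5} to the concrete identity $\tfrac{d}{dt}U^t|_{\hF_n}=L|_{\hF_n}U^t|_{\hF_n}$: one must verify that the strong limit defining $L|_{\hF_n}$ exists as a genuine operator-norm derivative on $\hF_n$, and that right-differentiability at $t=0$ upgrades to differentiability for all $t$ through the semigroup law $\tfrac{U^{t+h}-U^t}{h}=\tfrac{U^h-I}{h}U^t$. On a finite-dimensional space these facts are routine, but they are precisely the places where boundedness (Property \ref{property2}) is genuinely used, so I would flag that dependence explicitly rather than leave it implicit.
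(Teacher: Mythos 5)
Your argument is correct, but there is nothing in the paper to compare it against: Lemma \ref{le3} is stated as a quoted result from Hille \cite{hille1996functional} and the paper supplies no proof of its own, relying instead on Property \ref{property2} (boundedness on the finite-dimensional $\hF_n$) to make the citation applicable. What you have written is, in effect, the standard proof of that cited fact for a uniformly continuous semigroup, specialized to the matrix case: the semigroup law plus the generator limit \eqref{eq5} yield the initial value problem $\dot X=L|_{\hF_n}X$, $X(0)=I$; norm-convergence of the power series produces a second solution; uniqueness for linear ODEs with a bounded constant coefficient finishes the job. All steps are sound, and your care about upgrading the right-derivative at $t=0$ to two-sided differentiability via $\bigl(U^{t+h}-U^{t}\bigr)/h=\bigl((U^{h}-I)/h\bigr)U^{t}$ is exactly the point a referee would want made explicit. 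Your diagonalization alternative is in fact the most economical route in this specific setting, since $\hF_n$ is by construction spanned by eigenfunctions (which are linearly independent because their gradients are), so both operators are simultaneously diagonal with entries $e^{\lambda_i t}$ and $\lambda_i$ and the identity reduces to the scalar exponential series; its only cost is that it does not generalize to the infinite-dimensional invariant subspaces $\hF_e^k$ of Section \ref{sec:conspectrum}, where the paper reuses this lemma, whereas your ODE argument does (with operator norm in place of matrix norm). Presenting the ODE proof as primary is therefore the right choice.
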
 
	
	By Lemma \ref{le3}, the following equation between the Koopman operator $U^{T_s}|_{\hF_n}$ and its infinitesimal generator $L|_{\hF_n}$ holds:
	\begin{equation}\label{eq9}
	U^{T_s}|_{\hF_n} = \exp(L|_{\hF_n}T_s).
	\end{equation}However, there are infinite solutions of $L|_{\hF_n}$ based on \eqref{eq9}, which are described as follows.
	
	\begin{lemma}[Many-valued function{\cite[Section 5.4]{hille1996functional}}]\label{le4}
		Let $B\in \hL(X)$, $I$ denotes the identity operator and $I_k$ denotes idempotent, i.e., $\exp(2\pi i I_k)=I$,
		\begin{equation*}
		\log(\exp B)=B+2\pi i \sum_{k}n_k I_k,
		\end{equation*}
		where the $n_k$ are integers.
	\end{lemma}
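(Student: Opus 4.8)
The plan is to read $\log(\exp B)$ as the multivalued holomorphic functional calculus applied to $A := \exp B$ and to describe the entire set of its values. Since $A$ is a convergent power series in $B$, every value $C$ of $\log A$ delivered by the functional calculus is a limit of polynomials in $A$ and hence commutes with $B$. First I would exploit this commutativity: setting $D := C - B$, the equation $\exp(C) = \exp(B)$ becomes $\exp(D) = \exp(C)\exp(-B) = A A^{-1} = I$. The whole problem thus reduces to characterizing those $D \in \hL(X)$ with $\exp(D) = I$, after which $C = B + D$ will produce the claimed form.

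Next I would pin down the spectrum of such a $D$. By the spectral mapping theorem, $\exp(\sigma(D)) = \sigma(\exp D) = \sigma(I) = \{1\}$, so every $\lambda \in \sigma(D)$ satisfies $e^{\lambda} = 1$, i.e. $\sigma(D) \subset 2\pi i\,\mathbb{Z}$. These spectral points are finite in number, since $\sigma(D)$ is compact while $2\pi i\,\mathbb{Z}$ is discrete; write them as $2\pi i n_k$ with $n_k \in \mathbb{Z}$. The Riesz--Dunford calculus then supplies spectral projections $I_k = \frac{1}{2\pi i}\oint_{\Gamma_k}(\zeta I - D)^{-1}\,d\zeta$ that are idempotent, mutually annihilating, and sum to $I$, and that reduce $D$ so that $D I_k$ has spectrum $\{2\pi i n_k\}$ on $\operatorname{range} I_k$.

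The decisive step is to show that the nilpotent part vanishes. On $\operatorname{range} I_k$ write $D I_k = 2\pi i n_k I_k + N_k$ with $N_k$ quasinilpotent; then $I = \exp(D)I_k = e^{2\pi i n_k}\exp(N_k) = \exp(N_k)$ on that subspace. Expanding, $\exp(N_k) = I_k + N_k(I_k + N_k/2 + \cdots)$, and because $N_k$ is quasinilpotent the parenthesized factor is a quasinilpotent perturbation of the identity, hence invertible; setting $\exp(N_k) = I_k$ therefore forces $N_k = 0$. Consequently $D = 2\pi i \sum_k n_k I_k$, and each idempotent satisfies $\exp(2\pi i I_k) = I + (e^{2\pi i}-1)I_k = I$, which is exactly the stated property of the $I_k$. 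Substituting back gives $\log(\exp B) = C = B + 2\pi i \sum_k n_k I_k$. I expect the main obstacle to be precisely the vanishing of the quasinilpotent parts $N_k$ in the genuinely infinite-dimensional Banach setting, where one must argue invertibility and convergence carefully; in the finite-dimensional restriction to $\hF_n$ relevant here this collapses to the familiar fact that $\exp(D) = I$ forces $D$ to be diagonalizable with eigenvalues in $2\pi i\,\mathbb{Z}$, read off directly from the Jordan canonical form.
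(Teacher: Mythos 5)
The paper offers no proof of this lemma: it is imported verbatim from Hille--Phillips and used as a black box, so there is nothing internal to compare against. Your argument is a correct and essentially complete reconstruction of the standard proof, and the key step --- killing the quasinilpotent parts by writing $\exp(N_k)-I_k=N_k\,h(N_k)$ with $h(z)=(e^z-1)/z$ and invoking the spectral mapping theorem to see $\sigma(h(N_k))=\{1\}$, hence invertibility --- is exactly right and works in the Banach-space setting, not just for matrices. Two small points of precision. First, a value of $\log A$ produced by the Riesz--Dunford calculus is a limit of contour integrals of resolvents $(\zeta I-A)^{-1}$, not of polynomials in $A$; but resolvents of $A$ commute with everything that commutes with $A$, so your conclusion that $C$ commutes with $B$ (and hence that $\exp(C)\exp(-B)=\exp(C-B)$) stands. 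Second, your reduction characterizes only those logarithms of $\exp B$ that commute with $B$; arbitrary solutions of $\exp C=\exp B$ need not commute with $B$ and need not have the stated form. That restriction is the correct reading of the lemma --- it describes the many-valuedness of the operator logarithm as a (multivalued) function of $\exp B$, which is how the paper uses it in Definitions 6--7 and Lemma 5 --- but it is worth stating explicitly, since the paper's phrasing ``$\log(\exp B)$'' silently presupposes it.
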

	
	While the infinitesimal generator of Koopman semigroup $\{U^t|_{\hF_n},t\ge 0\}$ is unique by the definition in \eqref{eq5}, there are generator aliases that generate the same discrete $U^t|_{\hF_n}$ when we fix $t=T_s$ by \eqref{eq9} and Lemma \ref{le4}. Unfortunately, we cannot identify which one is the ground truth $L|_{\hF_n}$ without additional prior information, which is described as generator aliasing. The concepts of aliasing space of generator and generator aliasing are defined as follows.
	
	\begin{mydef} [Aliasing space of generator $\bI_L$] \label{def6}
		The aliasing space of $L|_{\hF_n}$ is defined as $\bI_L :=\{\tL\in \hL(\hF_n): \max|\bIm(\sigma_p(\tL))|\le\max|\bIm(\sigma_p(L|_{\hF_n}))|\}$.
	\end{mydef}
	
	\begin{remark}
		The choice of aliasing space $\bI_{L}$ is motivated by the properties of the operator logarithm described in Lemma \ref{le4}. The principle is to restrict the feasible set $\bI_L$ by the maximum value of imaginary parts of the eigenvalue of $L|_{\hF_n}$. This maximum value confines the elements of infinitesimal generator that have complex dynamical behaviors, as \cite{Yue2020a}. Since the generator $L|_{\hF_n}$ is one-to-one related to the vector field $f$, the aliasing space of vector field $\bI_f$ depends on the aliasing space of generator $\bI_{L}$.
	\end{remark}
	
	\begin{mydef}[Generator aliasing]\label{def7}
		Given $L|_{\hF_n}\in \hL(\hF_n)$ and $T_s\in \bR^+$.
		The operator $\bL$ is a generator alias of $L|_{\hF_n}$ with respect to $\bI_L$ if there exists $\bL\in \bI_L$ such that $\exp(L|_{\hF_n}T_s)=\exp(\tL T_s)$ with $\bL\ne L|_{\hF_n}$. 
	\end{mydef}
	
	Accordingly, the alias set is $G (L|_{\hF_n},T_s,\bI_{L}):=\{\tL\in \bI_L: \exp(L|_{\hF_n}T_s)=\exp(\tL T_s)\}$, which is the set of all possible generator alias. There is no generator aliasing when $G(L|_{\hF_n},T_s,\bI_{L})=\{L|_{\hF_n}\}$. Note that generator aliasing depends on the correct generator $L|_{\hF_n}$ and the sampling period $T_s$. By Lemma \ref{th1}, generator aliasing is an expression of system aliasing in the framework of the Koopman operator. Hence, the condition of system aliasing could be expressed in terms of $L|_{\hF_n}$ and $T_s$. The lower bound to avoid sampling frequency could be expressed in terms of $L|_{\hF_n}$, which is related to the vector field $f$. 
	\subsection{Main theorems}\label{sec:pri}
	
	In this section, we study the condition that guarantees the uniqueness of $L|_{\hF_n}$ from $U^{T_s}|_{\hF_n}$, i.e., no generator aliasing. To discuss the problem of what $(L|_{\hF_n}, T_s)$ satisfies $G(L|_{\hF_n},T_s,\bI_{L})=\{L|_{\hF_n}\}$, we introduce the essential theorem of linear operator logarithm.
	
	\begin{lemma}[Principal Logarithm{\cite[Thm 2]{krabbe1956logarithm}}]\label{le5}
		Consider the operator $Y\in \hL(X)$ whose logarithm is well defined. There is a unique principal logarithm of $Y$, denoted as $B=\Log(Y)$, where the spectrum of $B$ lie in the strip $\hG(B)=\{z \in \bC:-\pi <\bIm(z)< \pi \}$, i.e., $\sigma(B)\subset \hG(B)$. 
	\end{lemma}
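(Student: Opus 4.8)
The plan is to prove existence and uniqueness through the holomorphic (Dunford--Riesz) functional calculus, using that the phrase ``the logarithm of $Y$ is well defined'' means $0\notin\sigma(Y)$ and $\sigma(Y)$ avoids the branch cut $(-\infty,0]$ of the principal scalar logarithm. For existence, I would fix a positively oriented contour $\Gamma\subset\bC\setminus(-\infty,0]$ encircling $\sigma(Y)$ and set
\begin{equation*}
B=\Log(Y):=\frac{1}{2\pi i}\oint_\Gamma \log(z)\,(zI-Y)^{-1}\,dz,
\end{equation*}
where $\log$ is the principal branch of the scalar logarithm, with imaginary part in $(-\pi,\pi)$. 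The composition rule of the functional calculus then gives $\exp(B)=Y$, since $\exp\circ\log$ is the identity on a neighborhood of $\sigma(Y)$; and the spectral mapping theorem yields $\sigma(B)=\log(\sigma(Y))$, which lies in the open strip $\hG(B)$ because every value of $\log$ has imaginary part strictly between $-\pi$ and $\pi$.

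For uniqueness, suppose $B$ is \emph{any} operator with $\exp(B)=Y$ and $\sigma(B)\subset\hG(B)$, and let $B_0=\Log(Y)$ be the operator constructed above. The first step is to show $B$ and $B_0$ commute: since $Y=\exp(B)$, the operator $B$ commutes with $Y$, hence with each resolvent $(zI-Y)^{-1}$, and therefore with the contour integral defining $B_0$. Working inside the commutative Banach subalgebra generated by $B$ and $B_0$, set $C=B-B_0$; commutativity gives $\exp(C)=\exp(B)\exp(-B_0)=YY^{-1}=I$, and the subadditivity of the spectrum for commuting elements gives the inclusion $\sigma(C)\subseteq\sigma(B)-\sigma(B_0)$.

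The decisive step is to pin down $\sigma(C)$. By the spectral mapping theorem $\exp(\sigma(C))=\sigma(\exp(C))=\{1\}$, so every $\mu\in\sigma(C)$ satisfies $e^{\mu}=1$, i.e.\ $\mu\in 2\pi i\,\mathbb{Z}$. But the inclusion above forces $|\bIm(\mu)|<2\pi$, and the only integer multiple of $2\pi i$ in that range is $0$; hence $\sigma(C)=\{0\}$, so $C$ is quasinilpotent. I would then factor $\exp(C)-I=C\,h(C)$ with $h(z)=(e^{z}-1)/z$, $h(0)=1$; since $C$ is quasinilpotent, $\sigma(h(C))=\{1\}$, so $h(C)$ is invertible, and $\exp(C)-I=0$ forces $C=0$, i.e.\ $B=B_0$.

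The main obstacle is precisely this last spectral step, and it is here that the strictness of the strip $-\pi<\bIm(z)<\pi$ does the work: had the strip been closed of the same width, the difference set $\sigma(B)-\sigma(B_0)$ could reach imaginary parts $\pm2\pi$ and admit the nonzero solution $\mu=2\pi i$ of $e^{\mu}=1$, destroying uniqueness. A secondary technical point requiring care is the justification of $\sigma(C)\subseteq\sigma(B)-\sigma(B_0)$ for the commuting pair, which I would obtain via Gelfand theory on the maximal ideal space of the commutative subalgebra they generate, where the spectral mapping and subadditivity properties become transparent.
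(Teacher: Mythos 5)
Your proof is essentially correct, but note that the paper does not prove this lemma at all: it is imported verbatim as Theorem~2 of Krabbe's 1956 paper, so there is no internal argument to compare against. What you have written is the standard holomorphic-functional-calculus proof (close in spirit to Krabbe's own): existence via the Dunford--Riesz integral of the principal scalar branch over a contour avoiding $(-\infty,0]$, spectral mapping to place $\sigma(B)$ in the open strip, and uniqueness by showing that any competing logarithm $B$ with spectrum in the strip commutes with $B_0=\Log(Y)$, so that $C=B-B_0$ satisfies $\exp(C)=I$, has spectrum contained in $2\pi i\,\mathbb{Z}$ by spectral mapping, and simultaneously contained in $\sigma(B)-\sigma(B_0)$, whose imaginary parts are strictly less than $2\pi$ in modulus; the factorization $\exp(C)-I=C\,h(C)$ with $h(C)$ invertible then forces $C=0$. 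This supplies a self-contained justification that the paper omits, which is worth having since the lemma is the engine behind Propositions~\ref{th6} and~\ref{infmain1} and both sampling theorems.

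Two points deserve tightening. First, your reading of ``logarithm is well defined'' as $0\notin\sigma(Y)$ \emph{and} $\sigma(Y)\cap(-\infty,0]=\emptyset$ should be made an explicit hypothesis: invertibility alone does not suffice, since if $-1\in\sigma(Y)$ then every logarithm of $Y$ has a spectral point with imaginary part an odd multiple of $\pi$, so no principal logarithm in the sense of the lemma exists; the paper's phrasing hides this. Second, the inclusion $\sigma(C)\subseteq\sigma(B)-\sigma(B_0)$ obtained from Gelfand theory is computed with respect to the commutative subalgebra generated by $B$ and $B_0$, whose spectra can be strictly larger than the $\hL(X)$-spectra. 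The step is salvageable — either pass to a \emph{maximal} commutative subalgebra, where the two notions of spectrum coincide, or note that enlarging the spectrum only fills in bounded components of its complement, which lie in the convex hull of the $\hL(X)$-spectrum and hence remain inside the convex strip $\hG(B)$ — but it should be said, because this is exactly the kind of place where such arguments silently fail.
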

	
	It always holds that $L|_{\hF_n}\in G(L|_{\hF_n},T_s,\bI_{L})$. No generator aliasing requires $L|_{\hF_n}$ to be the only element in the set $G(L|_{\hF_n},T_s,\bI_{L})$. By Lemma \ref{le5}, the condition of no generator aliasing is that $T_s L|_{\hF_n}$ be the principal logarithm of the Koopman operator $U^{T_s}|_{\hF_n}$, i.e., $L|_{\hF_n} = \Log(U^{T_s}|_{\hF_n})/T_s$, where $\Log(\cdot)$ denotes the principal logarithm. Hence, a necessary and sufficient condition to avoid system aliasing of dynamical systems with Koopman valid eigenfunctions is described as follows.
	\begin{proposition}\label{th6} 
		There is no system alias of the dynamical system \eqref{eq1}, with the Koopman operator admitting the point spectrum, if and only if $\sigma_p(L|_{\hF_n})\subset \hG(T_s)$ for at least one Koopman valid eigenspace $\hF_n$, where $\hG(T_s):=\{z \in \bC:-\pi/T_s<\bIm(z)<\pi/T_s\}$
	\end{proposition}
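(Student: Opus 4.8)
The plan is to turn the statement into a spectral condition on $L|_{\hF_n}$ and then read it off from the operator-logarithm machinery already in place. By Lemma \ref{th1}, no system aliasing is equivalent to the existence of a Koopman valid eigenspace $\hF_n$ in which $L|_{\hF_n}$ is recovered uniquely from $U^{T_s}|_{\hF_n}$; by Definition \ref{def7} this uniqueness is exactly $G(L|_{\hF_n},T_s,\bI_L)=\{L|_{\hF_n}\}$, i.e.\ $\bI_L$ contains no nontrivial generator alias. So I would first fix one valid eigenspace and prove the equivalence ``no generator aliasing $\Leftrightarrow\ \sigma_p(L|_{\hF_n})\subset\hG(T_s)$'', after which the existential quantifier over $\hF_n$ is inherited directly from Lemma \ref{th1}. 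Throughout I set $M:=\max|\bIm(\sigma_p(L|_{\hF_n}))|$ and use Property \ref{property2}, which makes $\hF_n$ finite-dimensional so that $\sigma(L|_{\hF_n})=\sigma_p(L|_{\hF_n})$ and $L|_{\hF_n}$ is diagonalized by $\phi_1,\dots,\phi_n$.

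For sufficiency I would assume $\sigma_p(L|_{\hF_n})\subset\hG(T_s)$, i.e.\ $M<\pi/T_s$. Then $\sigma(T_sL|_{\hF_n})$ sits in the open strip $\{z:-\pi<\bIm(z)<\pi\}$, so by Lemma \ref{le5} the operator $T_sL|_{\hF_n}$ is the principal logarithm $\Log(U^{T_s}|_{\hF_n})$. If some $\tL\ne L|_{\hF_n}$ satisfied $\exp(\tL T_s)=U^{T_s}|_{\hF_n}$, then $T_s\tL$ would be a logarithm distinct from the principal one, so by the uniqueness clause of Lemma \ref{le5} it could not have its whole spectrum in the open strip; hence $\max|\bIm(\sigma_p(\tL))|\ge\pi/T_s>M$ and $\tL\notin\bI_L$. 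Thus the alias set is $\{L|_{\hF_n}\}$ and there is no generator, hence no system, aliasing.

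For necessity I would argue contrapositively, assuming $M\ge\pi/T_s$. Using Lemma \ref{le4} in the diagonal basis $\{\phi_i\}$, every logarithm of $U^{T_s}|_{\hF_n}$ is obtained by replacing each eigenvalue $\lambda_i$ with $\lambda_i+2\pi\mathrm{i}k_i/T_s$, $k_i\in\mathbb{Z}$. I would take an extremal eigenvalue with $|\bIm(\lambda_i)|=M$ and shift its imaginary part by $\mp2\pi/T_s$ (toward zero), leaving the others fixed; this yields $\tL\ne L|_{\hF_n}$. A two-case check ($M\ge2\pi/T_s$ and $\pi/T_s\le M<2\pi/T_s$) shows the new imaginary part has modulus $\le M$, so the maximal imaginary part is unchanged and $\tL\in\bI_L$. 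This produces a nontrivial alias in every valid eigenspace, so by Lemma \ref{th1} system aliasing occurs.

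The hardest part will be the boundary bookkeeping forced by the mismatch between the open strip $\hG(T_s)$ and the non-strict inequality defining $\bI_L$: I must confirm that $M=\pi/T_s$ falls on the aliasing side, which it does because shifting an eigenvalue of imaginary part $\pi/T_s$ down by $2\pi/T_s$ gives $-\pi/T_s$, of modulus exactly $M$, so the shifted operator is still admitted into $\bI_L$. A secondary care point is that the sufficiency argument must rule out \emph{all} competing logarithms, including non-diagonalizable ones; invoking the uniqueness of the principal logarithm (Lemma \ref{le5}) rather than enumerating aliases disposes of this cleanly.
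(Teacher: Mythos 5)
Your proposal is correct and follows essentially the same route as the paper: reduce to generator aliasing via Lemma \ref{th1}, then characterize uniqueness of the logarithm through Lemmas \ref{le4} and \ref{le5}. The only difference is one of rigor rather than strategy --- where the paper's necessity step simply asserts that Lemmas \ref{le4}--\ref{le5} force $T_sL|_{\hF_n}$ to be the principal logarithm, you explicitly construct the aliasing operator by shifting an extremal eigenvalue by $2\pi i/T_s$ and verify (including the boundary case $M=\pi/T_s$) that it stays in $\bI_L$, which is a welcome tightening of the published argument.
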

	\begin{proof}
		\emph{Sufficiency}. It follows from Property \ref{property2} that the generator $L|_{\hF_n}$ is bounded and it can be estimated as follows based on \eqref{eq9}:
		\begin{equation*}
		\bL=\frac{1}{T_s} \Log(\bU|_{\hF_n}).
		\end{equation*} By Lemma \ref{le5}, there exist a $T_s L|_{\hF_n}$ being the principal logarithm of $\bU|_{\hF_n}$, where the generator $L|_{\hF_n}$ satisfies \begin{equation*}
		\sigma(L|_{\hF_n}) = \sigma_p(L|_{\hF_n})\subset \{z \in \bC:-\pi/T_s<\bIm(z)<\pi/T_s\}.
		\end{equation*} Therefore, unique generator $L|_{\hF_n}$ can be obtained from $\bU|_{\hF_n}$. By Lemma \ref{th1}, there is no system alias of the dynamical system.
		
		\emph{Necessity}. By Lemma \ref{th1}, there exists a generator $L|_{\hF_n}$ such that it can be obtained uniquely from $\bU|_{\hF_n}$ if there is no system alias of the dynamical system. Lemma \ref{le4} and Lemma \ref{le5} imply that $T_s L|_{\hF_n}$ should be the principal logarithm of $\bU|_{\hF_n}$ to guarantee the uniqueness of $L|_{\hF_n}$. Therefore, the condition of $\sigma(L|_{\hF_n}) = \sigma_p(L|_{\hF_n})\subset \hG(T_s)$ holds in at least one $\hF_n$, where $\hG(T_s):=\{z \in \bC:-\pi/T_s<\bIm(z)<\pi/T_s\}$.
	\end{proof}
	
	Proposition \ref{th6} implies that the aliasing of the dynamical system depends on the sampling period $T_s$ and the dynamical system itself. The dynamical system can be identified by sampled data if eigenvalues of $T_s L|_{\hF_n}$ fall into the strip $\{z \in \bC:-\pi<\bIm(z)<\pi\}$. Since the eigenvalues of $T_s L|_{\hF_n}$ depend on the choice of the functional space, the condition of $\sigma_p(L|_{\hF_n})\subset \hG(T_s)$ may not hold in every Koopman valid eigenspace $\hF_n$. Fortunately, if any $\hF_n$ such that eigenvalues of $L|_{\hF_n}$ fall into the strip $\hG(T_s)$, there is no system alias of the dynamical system.

	Based on Proposition \ref{th6}, obtaining $L|_{\hF_n}$ from $\bU|_{\hF_n}$ requires a small enough period $T_s$ such that the set $\sigma_p(L|_{\hF_n})$ falls into the strip $\hG(T_s)$. If it holds, the principal logarithm of $U^{T_s}|_{\hF_n}$ refers to the generator $L|_{\hF_n}$. The critical sampling period are described as follows.
	
	\begin{theorem}[The critical sampling period $T_\gamma$]\label{th7}
		For the $n$-dimensional dynamical system \eqref{eq1} that has $n$ valid eigenfunctions of the Koopman operator, If time-series $\{\bx(t_k) \}$
		are equidistant sampled with sampling period $T_s$, to uniquely identify the vector field $f$ from the corresponding discrete flow $S^{T_s}$, the sampling frequency $w$ (rad/s) must satisfy \begin{equation*}
		w>2\min_{\hF_n}\{\max|\bIm(\sigma_p(L|_{\hF_n}))|\}.
		\end{equation*}
		Equivalently, the sampling period should satisfy $T_s<T_\gamma$, where the critical sampling period $T_\gamma ~($i.e., $2\pi/w)$ can be computed by 
		\begin{equation*}
		\boxed{T_\gamma = \max_{\hF_n}\left\{\frac{\pi}{\max|\bIm(\sigma_p(L|_{\hF_n}))|}\right\}}
		\end{equation*}
	\end{theorem}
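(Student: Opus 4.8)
The plan is to derive Theorem \ref{th7} as the quantitative restatement of the qualitative criterion already established in Proposition \ref{th6}, so no new operator-theoretic machinery is required. Proposition \ref{th6} tells us that there is no system alias precisely when $\sigma_p(L|_{\hF_n})\subset \hG(T_s)$ holds for at least one Koopman valid eigenspace $\hF_n$, where $\hG(T_s)=\{z\in\bC:-\pi/T_s<\bIm(z)<\pi/T_s\}$. The entire task is therefore to convert this spectral containment into an explicit inequality on the sampling period $T_s$ (and then on the frequency $w$), and to correctly propagate the existential quantifier ``at least one $\hF_n$''.

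First I would fix a single finite-dimensional eigenspace $\hF_n$ and note that the containment $\sigma_p(L|_{\hF_n})\subset\hG(T_s)$ is equivalent to the scalar condition that every eigenvalue $\lambda\in\sigma_p(L|_{\hF_n})$ satisfy $|\bIm(\lambda)|<\pi/T_s$. Since $\hF_n$ is finite-dimensional (Definition \ref{def1}), the point spectrum is a finite set, so this reads exactly $\max|\bIm(\sigma_p(L|_{\hF_n}))|<\pi/T_s$, i.e. $T_s<\pi/\max|\bIm(\sigma_p(L|_{\hF_n}))|$. This is the per-eigenspace admissible bound. Next I would handle the existential quantifier: Proposition \ref{th6} requires the containment for \emph{some} $\hF_n$, so the set of admissible periods is the union over all valid eigenspaces of these intervals, and the largest usable period is the supremum of the individual bounds. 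Using that $T_s$ fails to be an upper bound of $\{\pi/\max|\bIm(\sigma_p(L|_{\hF_n}))|\}$ iff some member exceeds it, the disjunction ``$\exists\,\hF_n$'' collapses precisely to $T_s<\max_{\hF_n}\{\pi/\max|\bIm(\sigma_p(L|_{\hF_n}))|\}=:T_\gamma$, the boxed expression. Converting via $w=2\pi/T_s$, the inequality $T_s<T_\gamma$ becomes $w>2\pi/T_\gamma$, and since reciprocation swaps the outer max with a min over positive quantities, $2\pi/T_\gamma=2\min_{\hF_n}\{\max|\bIm(\sigma_p(L|_{\hF_n}))|\}$, yielding the stated frequency bound.

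Because the argument is essentially bookkeeping on top of Proposition \ref{th6}, I do not expect a genuine obstacle; the two points demanding care are both about boundaries. The first is the strict-versus-nonstrict inequality: since $\hG(T_s)$ is an \emph{open} strip, an eigenvalue lying exactly on the line $\bIm=\pm\pi/T_s$ (the borderline case $T_s=T_\gamma$) already falls outside $\hG(T_s)$ and hence produces aliasing, so the bound must stay strict, $T_s<T_\gamma$. The second is whether the outer extremum over $\hF_n$ is attained: I would justify the collapse of the existential through the supremum (least upper bound), which makes the equivalence $T_s<\sup_{\hF_n}c_{\hF_n}\iff\exists\,\hF_n:\,T_s<c_{\hF_n}$ valid irrespective of attainment, and only then write $\max$ in agreement with the statement, under the understanding that an optimal valid eigenspace exists.
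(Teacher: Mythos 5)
Your proposal is correct and follows the same route as the paper: the paper's own proof consists of the single line that the result ``immediately follows by Proposition \ref{th6},'' and your argument simply spells out the bookkeeping (converting the open-strip containment $\sigma_p(L|_{\hF_n})\subset\hG(T_s)$ into $T_s<\pi/\max|\bIm(\sigma_p(L|_{\hF_n}))|$, collapsing the existential quantifier via the supremum, and passing to $w=2\pi/T_s$) that the paper leaves implicit. Your care about strict inequalities and attainment of the outer extremum is a welcome refinement but does not change the approach.
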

	\vspace{1mm}
	\begin{proof}
		The result immediately follows by Proposition \ref{th6}.
	\end{proof}
	
	Theorem \ref{th7} provides the critical sampling period and the lower bound of sampling frequency with respect to system aliasing of dynamical systems with the Koopman point spectrum. Since the condition of the system aliasing needs to hold in only one valid eigenspace $\hF_n$, $T_\gamma$ requires the ``best'' $\hF_n$, i.e., $\hF_n^\gamma$ defined below to allow the maximum of the sampling period bound: 	
	\begin{equation*}
	\hF_n^\gamma \triangleq \arg\max_{\hF_n}\left\{\frac{\pi}{\max|\bIm(\sigma_p(L|_{\hF_n}))|}\right\}.
	\end{equation*}
	
	\section{Sampling theorem for systems with Koopman continuous or residual spectrum}\label{sec:conspectrum}
	When the Koopman operator $U^t:\hF\to\hF$ dose not admit point spectrum, or there is a lack of valid eigenfunctions, we can not find a finite-dimensional Koopman valid eigenspace in Definition \ref{def1}. In this case, the point spectrum may not be empty, but it is not enough to recover the vector field $f$. Therefore, the whole spectrum $\sigma(L)$ of the generator, including continuous and residual spectrum, should be considered. In this section, system aliasing is analyzed with the existence of a special (infinite-dimensional) Koopman invariant subspace $\hF_e$, and the sampling theorem is proposed with the spectrum of the generator restricted to this space $\hF_e$.
	
	\subsection{Infinite-dimensional Koopman invariant subspace and system aliasing}\label{sec:infinitesubspace}
	
	Here is an assumption of (infinite-dimensional) Koopman invariant space $\hF_e$. Denote $U^t|_{\hF_e}$ and $L|_{\hF_e}$ as the restriction of the Koopman operator and its generator to $\hF_e$. 
	\begin{assumption}\label{invariant}
		Assume that there are some infinite-dimensional Koopman invariant (Banach) subspaces $\{\hF^1_e,\ldots,\hF_e^p \}$, such that the generator $L|_{\hF^k_e}$ restricted to $\hF_e^k$ is bounded, where $k=1,\ldots, p$.
	\end{assumption}
	
	\begin{remark}
		Assumption \ref{invariant} implies that the evolution of the state vector $\bx(t)$ can be captured by the Koopman operator $U^t|_{\hF_e^k}$, which describes the evolution of observable functions whose gradients are limited. Whether the assumption holds or not is determined by the property of the flow $S^t$ with well-defined observable space $\hF_e^k$ and the choice of the norm. Consider the space $\hat{\hF}\subset\hF$ such that the generator restricted to it $L|_{\hat{\hF}}$ is bounded. To make the space $\hat{\hF}$ invariant under the Koopman operator, it should satisfy $$\sup_{\|g \circ S^t\|=1} \|f\cdot \nabla (g\circ S^t)\|= \sup_{\|g \circ S^t\|=1} \|f\cdot J(S^t) \nabla g\|<\infty,$$ where $J(S^t)$ is the Jacobian matrix of $S^t$, i.e., $[J(S^t)]_{ij} = \partial S^t_j/\partial x_i$, and $\nabla g = [\partial g/\partial x_1,\ldots,\partial g/\partial x_n]$. 
		
		For example, consider a dynamical system whose Koopman operator admits a continuous spectrum \cite{mezic2020spectrum}, which is described as follows.
	\begin{align*}
	\dot{r} = 0,\\
	\dot{\theta} = r,
	\end{align*}where the action-angle variables $(r, \theta)\in [a,b]\times \bS^1$ and $[a,b]\in \bR^+$. The Koopman semigroup $U^t$ which acts on the space $L^2([a,b]\times \bS^1)$ has an eigenvalue $0$ 
	and continuous spectrum. There is a special Koopman invariant space $\hF_e$ that $\{g\in\hF_e:~g(r,\theta) = h(r)e^{i\theta}, $ where $ h(r) \in L^2[a,b] \}$, such that the generator $L|_{\hF_e}$ is bounded.
		However, this boundedness assumption may be limited for some dynamical systems, and we shall leave the unbounded case of the generator for future work.
	\end{remark}
	
	%
	The following proposition provides a necessary and sufficient condition to recover the vector field $f$ from the generator $L|_{\hF_e^k}$.
	
	\begin{proposition}[Conditions to recover $f$]\label{proposition2}
		Consider an infinite-dimensional Koopman invariant subspace $\hF_e^k$ and the corresponding generator $L|_{\hF_e^k}$. The vector field $f$ in \eqref{eq1} can be recovered from $L|_{\hF_e^k}$ if and only if there are at least $n$ observable functions $g_j\in \hF_e^k$ with gradients $\{\nabla g_j\}_{i=1}^n$ being linearly independent vectors for all $\bx\in \bM$.
	\end{proposition}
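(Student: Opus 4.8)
The plan is to mirror the proof of Proposition \ref{property3}, since this statement is its infinite-dimensional analogue; the only structural change is that the distinguished observables $g_j$ need no longer be eigenfunctions. The engine throughout is the identity \eqref{eq6}, namely $L g = f \cdot \nabla g$, which holds for every $g$ in the domain of the generator. Because we are handed the full operator $L|_{\hF_e^k}$, the function $L g_j$ is known for each $g_j \in \hF_e^k$, so we may treat $f \cdot \nabla g_j = L g_j$ as $n$ scalar equations that constrain the unknown vector field.

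For sufficiency I would assemble these equations pointwise. Fixing $\bx \in \bM$ and collecting the $n$ observables, define the matrix $G(\bx)$ with entries $G_{ji}(\bx) = \partial g_j / \partial x_i (\bx)$ and the vector $b(\bx)$ with entries $b_j(\bx) = (L g_j)(\bx)$. Then \eqref{eq6} gives the linear system $G(\bx) f(\bx) = b(\bx)$. The hypothesis that the gradients $\{\nabla g_j\}_{j=1}^n$ are linearly independent for all $\bx \in \bM$ is exactly the statement that $G(\bx)$ is invertible at every point, whence $f(\bx) = G(\bx)^{-1} b(\bx)$ is determined uniquely and $f$ is recovered.

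For necessity I would argue by contraposition. If fewer than $n$ of the gradients are linearly independent at some $\bx$ --- equivalently $G(\bx)$ drops rank --- then the linear system $G(\bx) f(\bx) = b(\bx)$ is underdetermined there, its solution set being an affine subspace of positive dimension; hence $f(\bx)$ is not pinned down by $L|_{\hF_e^k}$ and the vector field cannot be recovered. Thus recoverability forces the existence of at least $n$ observables in $\hF_e^k$ with pointwise linearly independent gradients.

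The step I expect to require the most care is not the linear algebra, which is identical to the finite-dimensional case, but the bookkeeping that makes the pointwise argument legitimate in the infinite-dimensional setting: one must verify that the chosen $g_j$ lie in $\hD(L)$ so that \eqref{eq6} applies, and that the values $(L g_j)(\bx)$ are genuinely available from $L|_{\hF_e^k}$. Here Assumption \ref{invariant} does the work --- it guarantees that $L|_{\hF_e^k}$ is a bounded operator on a well-defined observable space --- so these technical conditions are met and the proof reduces cleanly to the pointwise invertibility argument above.
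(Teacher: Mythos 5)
Your proposal matches the paper's proof essentially step for step: both use the identity $Lg = f\cdot\nabla g$ from \eqref{eq6} to form the pointwise linear system $A f(\bx) = [L|_{\hF_e^k}g_1(\bx)\ \ldots\ L|_{\hF_e^k}g_n(\bx)]^T$ and invert the gradient matrix $A$. If anything, your necessity direction (arguing by contraposition that a rank-deficient gradient matrix leaves $f(\bx)$ underdetermined) is stated more explicitly than the paper's, which essentially restates the sufficiency condition; your added remark on using Assumption \ref{invariant} to justify the domain and boundedness bookkeeping is a harmless refinement the paper omits.
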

	\begin{proof}
		\emph{Sufficiency}. Consider the states $\bx\in\bM\subset \bR^n$ and the vector field $f = [f_1,\ldots,f_n]$. If at least $n$ Koopman eigenfunctions with linearly independent vectors $\{\nabla g_j\}_{j=1}^n$ for all $\bx\in \bM$ are given, the vector field $f$ can be obtained through \eqref{eq6} and the eigenvalue equation $f \cdot \nabla g_j = L|_{\hF_e^k} g_j$. Indeed, we have $A f(\bx)= [L|_{\hF_e^k} g_1(\bx) \ldots L|_{\hF_e^k} g_n(\bx)]^T$, with the invertible matrix $A_{ij} = \partial g_j/\partial x_j(\bx)$.
		
		\emph{Necessity}. Consider the infinite-dimensional Koopman invariant subspace $\hF_e^k$ and the observable function $g_j\in\hF_e^k, j=1,2,\ldots$. The vector field $f=[f_1,\ldots,f_n]$ can be recovered from the eigenvalue equations $f \cdot \nabla g_j = L|_{\hF_e^k} g_j$, if the matrix $A_{ij} = \partial g_j/\partial x_j(\bx)$ is full rank, i.e., there are at least $n$ linearly independent vectors $\{\nabla g_j\}_{j=1}^n$ for all $\bx\in\bM$.	
	\end{proof}
	
	According to Proposition \ref{proposition2}, the generator restricted to the Koopman invariant spaces $\hF_e^k, k=1,\ldots, p$ (in Assumption \ref{invariant}) are able to recover the vector field since they are infinite-dimensional and the observable function $g:\bM\to \bC$ could generate infinite number of $\{\nabla g_j\}_{i=1}^n$ for all $\bx\in\bM$. Therefore, uniqueness of the vector field can be equivalently described through uniqueness of the generator $L|_{\hF_e^k}$ obtained from $U^{T_s}|_{\hF_e^k}$ using the operator-theoretic perspective, which is described as follows.

	\begin{lemma}[System aliasing with $\hF_e^k$]\label{le6}
		When the finite-dimensional Koopman valid eigenspace $\hF_n$ (Definition \ref{def1}) does not exist, there is no system alias of the dynamical system \eqref{eq1} when the sampling period is $T_s$, if and only if there exists a generator $L|_{\hF_e^k}$ that can be obtained from $U^{T_s}|_{\hF_e^k}$ uniquely, where $\hF_e^k$ is a Koopman invariant space in Assumption \ref{invariant}.
	\end{lemma}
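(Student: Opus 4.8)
The plan is to mirror the proof of Lemma \ref{th1}, replacing the finite-dimensional valid eigenspace $\hF_n$ by the infinite-dimensional invariant subspace $\hF_e^k$, and the recovery criterion of Proposition \ref{property3} by that of Proposition \ref{proposition2}. Since the statement is an equivalence, I would split the argument into a sufficiency and a necessity part, both hinging on the fact that $\hF_e^k$ is rich enough to recover $f$ and that the correspondence $L = f\cdot\nabla$ from \eqref{eq6} is one-to-one.

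For sufficiency, I would assume $L|_{\hF_e^k}$ can be obtained uniquely from $U^{T_s}|_{\hF_e^k}$. Because $\hF_e^k$ is infinite-dimensional, it contains $n$ observable functions whose gradients are linearly independent for all $\bx\in\bM$, so Proposition \ref{proposition2} supplies a well-defined recovery map $L|_{\hF_e^k}\mapsto f$. Uniqueness of $L|_{\hF_e^k}$ then forces uniqueness of $f$, i.e. no system alias. For necessity, I would assume there is no system alias, so $f$ is unique. The relation $L = f\cdot\nabla$ determines $L$ on all of $\hD(L)$, hence its restriction $L|_{\hF_e^k}$ is uniquely fixed by $f$; Assumption \ref{invariant} guarantees $L|_{\hF_e^k}\in\hL(\hF_e^k)$ is bounded, so Lemma \ref{le3} keeps the exponential relation $U^{T_s}|_{\hF_e^k}=\exp(L|_{\hF_e^k}T_s)$ valid and identifies $L|_{\hF_e^k}$ as the object obtained from $U^{T_s}|_{\hF_e^k}$. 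Thus a unique $L|_{\hF_e^k}$ is recovered whenever $f$ is unique.

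The hard part will be the infinite-dimensional bookkeeping rather than any genuinely new idea: I must make sure that (i) the recovery map of Proposition \ref{proposition2} is well-defined on $\hF_e^k$ even though the Koopman operator there carries a continuous or residual spectrum and no eigenfunctions are assumed, and (ii) restricting the globally-defined generator to the invariant subspace loses no information, so that the equivalence between system aliasing and non-uniqueness of $L|_{\hF_e^k}$ survives the passage from finite to infinite dimensions. The boundedness granted by Assumption \ref{invariant} is precisely what rescues the exponential/logarithm machinery used in the finite-dimensional case; beyond verifying that point, the proof is essentially a faithful transcription of Lemma \ref{th1}.
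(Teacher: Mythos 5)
Your proposal is correct and matches the paper's approach: the paper's own proof of this lemma is literally the one-line remark that it is ``similar to the proof of Lemma~\ref{th1}'' with $\hF_e^k$ in place of $\hF_n$, and you carry out exactly that substitution, using Proposition~\ref{proposition2} for sufficiency and the one-to-one correspondence $L=f\cdot\nabla$ for necessity. Your added remarks on boundedness from Assumption~\ref{invariant} are consistent with, and slightly more explicit than, what the paper records.
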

	\begin{proof}
		The proof is similar to the proof of Lemma \ref{th1}, which considers the infinite-dimensional Koopman invariant subspaces $\hF_e^k$ in Assumption \ref{invariant}, instead of the finite-dimensional Koopman valid eigenspace $\hF_n$.
	\end{proof}
	%
	%
	
	It follows from Lemma \ref{le6} that we can focus on the condition for obtaining a unique generator $L|_{\hF_e^k}$ from the Koopman operator $\bU|_{\hF_e^k}$.
	
	\subsection{Generator aliasing in infinite-dimensional Koopman invariant space}\label{sec:iinf}
	In this section, we focus on the condition to obtain the generator $L|_{\hF_e^k}$ uniquely from its Koopman operator $\bU|_{\hF_e^k}$. 
	
	Since the Koopman semigroup $U^t
	|_{\hF_e^k}$ is usually bounded and the generator $L|_{\hF_e^k}$ is also bounded by Assumption \ref{invariant}, the exponential relationship between the Koopman semigroup $U^t|_{\hF_e^k}$ and the generator $L|_{\hF_e^k}$ defined in Lemma \ref{le3} is valid, i.e.,
	\begin{align*}
	U^t|_{\hF_e^k}=\exp(L|_{\hF_e^k}t)= \sum_{n=0}^{\infty}\frac{t^n {L|_{\hF_e^k}}^n}{n!}.
	\end{align*}

	When the time $t = T_s$, it follows that:
	\begin{equation}\label{eq10}
	U^{T_s}|_{\hF_e^k} = \exp(L|_{\hF_e^k}T_s).
	\end{equation} However, there are also several (in fact infinitely many) solutions of $L|_{\hF_e^k}$ based on \eqref{eq9}, which are described in Lemma \ref{le4}.
	
	Although the generator of Koopman semigroup $\{U^t|_{\hF_e^k},t\ge 0\}$ is unique by the definition in \eqref{eq5}, there are generator aliases that generate the same discrete $U^t|_{\hF_e^k}$ when we fix $t=T_s$ by \eqref{eq10} and Lemma \ref{le4}. Like the case of the generator restricted to finite-dimensional eigenspace, we cannot identify which one is the ground truth $L|_{\hF_e^k}$ without additional prior information, which is described as generator aliasing. The concepts of aliasing space associated to the generator restricted to the infinite-dimensional Koopman invariant subspace and generator aliasing are defined as follows.
	
	\begin{mydef} [Aliasing space of generator $\bI_{L}^e$] The aliasing space of $L|_{\hF_e^k}$ is defined as $\bI_L^e :=\{\tL_e\in \hL(\hF_e^k): \max|\bIm(\sigma(\tL_e))|\le\max|\bIm(\sigma(L|_{\hF_e^k}))|\}.$
	\end{mydef}
	
	\begin{mydef}[Generator aliasing]
		Consider the generator $L|_{\hF_e^k}\in \hL(\hF_e^k)$ and $T_s\in \bR^+$.
		The operator $\bL$ is a generator alias of $L|_{\hF_e^k}$ with respect to $\bI_L^e$ if there exists $\bL\in \bI_L^e$ such that $\exp(L|_{\hF_e^k}T_s)=\exp(\tL_e T_s)$ with $\bL\ne L|_{\hF_e^k}$. 
	\end{mydef}
	
	\begin{remark}
		This aliasing space $\bI_{L}^e$ and generator aliasing of $L|_{\hF_e^k}$ are similar to Definition \ref{def6}-\ref{def7}, which are also motivated by the properties of the operator logarithm described by Lemma \ref{le4}. The difference is that the domain of the operator in $\bI_{L}^e$ changes from finite-dimensional space to infinite-dimensional space.
	\end{remark}
	
	Accordingly, the alias set is $G (L|_{\hF_e^k},T_s,\bI_{L}^e):=\{\tL_e\in \bI_L^e: \exp(L|_{\hF_e^k}T_s)=\exp(\tL_e T_s)\}$, which is the set of all possible generator alias. There is no generator aliasing when $G(L|_{\hF_e^k},T_s,\bI_{L}^e)=\{L|_{\hF_e^k}\}$. Note that the generator aliasing depends on the correct generator $L|_{\hF_e^k}$ and the sampling period $T_s$. 
	The condition to avoid system aliasing can be expressed in terms of $L|_{\hF_e^k}$ and $T_s$. The lower bound of sampling frequency can be expressed in terms of $L|_{\hF_e^k}$, which corresponds to  $f$. 
	\subsection{Main theorems}\label{sec:ipri}
	
	In this section, we study the condition that guarantees the uniqueness of $L|_{\hF_e^k}$ from $U^{T_s}|_{\hF_e^k}$, i.e., $L|_{\hF_e^k}$ is the only element in the set $G(L|_{\hF_e^k},T_s,\bI_{L}^e)$. 
	
	Under Assumption \ref{invariant}, the Koopman operator $U^{T_s}|_{\hF_e^k}$ is restricted to an infinite-dimensional Banach space $\hF_e^k$. Therefore, Lemma \ref{le5} is also valid for the unique logarithm of $U^{T_s}|_{\hF_e^k}$, and there is a unique principal logarithm of $U^{T_s}|_{\hF_e^k}$, denoted as $B = \Log(U^{T_s}|_{\hF_e^k})$, where $\sigma(B)\subset \hG(B)$. 
	By Lemma \ref{le5}, the condition of no generator aliasing is that $T_s L|_{\hF_e^k}$ be the principal logarithm of the Koopman operator $U^{T_s}|_{\hF_e^k}$, i.e., $L|_{\hF_e^k} = \Log(U^{T_s}|_{\hF_e^k})/T_s$. Hence, a necessary and sufficient condition to avoid system aliasing of dynamical systems with continuous / residual spectrum is described as follows.
	\begin{proposition}\label{infmain1}
		There is no system alias of the dynamical system \eqref{eq1} with Koopman continuous / residual spectrum if and only if $\sigma(L|_{\hF_e^k})\subset \hG(T_s)$ for at least one infinite-dimensional Koopman invariant subspace $\hF_e^k$ satisfying Assumption \ref{invariant}, where $\hG(T_s):=\{z \in \bC:-\pi/T_s<\bIm(z)<\pi/T_s\}$
	\end{proposition}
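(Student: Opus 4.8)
The plan is to mirror the argument of Proposition~\ref{th6}, replacing the finite-dimensional valid eigenspace $\hF_n$ by the infinite-dimensional invariant subspace $\hF_e^k$ of Assumption~\ref{invariant} and working with the full spectrum $\sigma(\cdot)$ rather than only the point spectrum $\sigma_p(\cdot)$. The two ingredients that make this transfer possible are already established: by Assumption~\ref{invariant} the generator $L|_{\hF_e^k}$ is bounded, so the exponential identity \eqref{eq10}, namely $\bU|_{\hF_e^k}=\exp(L|_{\hF_e^k}T_s)$, holds via Lemma~\ref{le3}; and, as noted just before the statement, the principal-logarithm result of Lemma~\ref{le5} remains valid for bounded operators on the infinite-dimensional Banach space $\hF_e^k$.

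For sufficiency, I would assume $\sigma(L|_{\hF_e^k})\subset\hG(T_s)$ for some admissible $\hF_e^k$. Scaling by $T_s$ places the spectrum of $T_s L|_{\hF_e^k}$ inside the strip $\{z\in\bC:-\pi<\bIm(z)<\pi\}$, which by Lemma~\ref{le5} is exactly the region characterizing the principal logarithm. Hence $T_s L|_{\hF_e^k}=\Log(\bU|_{\hF_e^k})$ is the unique such operator, so $L|_{\hF_e^k}=\Log(\bU|_{\hF_e^k})/T_s$ is recovered uniquely from $\bU|_{\hF_e^k}$. Lemma~\ref{le6} then rules out system aliasing.

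For necessity, I would run this backwards: the absence of a system alias yields, through Lemma~\ref{le6}, a generator $L|_{\hF_e^k}$ obtainable uniquely from $\bU|_{\hF_e^k}$ for at least one $\hF_e^k$. Because the operator logarithm is many-valued (Lemma~\ref{le4}), this uniqueness forces $T_s L|_{\hF_e^k}$ to coincide with the principal branch, and Lemma~\ref{le5} then gives $\sigma(T_s L|_{\hF_e^k})\subset\{z\in\bC:-\pi<\bIm(z)<\pi\}$, that is, $\sigma(L|_{\hF_e^k})\subset\hG(T_s)$.

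The main obstacle is conceptual rather than computational: in the finite-dimensional setting of Proposition~\ref{th6} the spectrum is purely point spectrum and the logarithm reduces to a matrix logarithm, whereas here $L|_{\hF_e^k}$ may additionally carry continuous or residual spectrum. The delicate point is therefore to confirm that Lemma~\ref{le5} controls the \emph{entire} spectrum---point, continuous, and residual---through the single strip condition, which is precisely why the aliasing space $\bI_L^e$ was defined in terms of $\sigma(\cdot)$ rather than $\sigma_p(\cdot)$. Once that control is granted, the logarithmic uniqueness argument proceeds identically to the point-spectrum case, and the remainder of the proof is bookkeeping.
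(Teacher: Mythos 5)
Your proposal is correct and follows essentially the same route as the paper, whose own proof is merely the remark that the argument ``is based on Lemma~\ref{le5}'' and ``is similar to the proof of Proposition~\ref{th6}''; you have simply written out the transfer (Lemma~\ref{le6} in place of Lemma~\ref{th1}, full spectrum $\sigma(\cdot)$ in place of $\sigma_p(\cdot)$, boundedness from Assumption~\ref{invariant} justifying the exponential identity) that the paper leaves implicit. The delicate point you flag---that Lemma~\ref{le5} must control the entire spectrum through the strip condition---is exactly the issue the paper glosses over, and your treatment is at least as complete as the original.
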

	\begin{proof}
		This proof is based on Lemma \ref{le5}, which is similar to the proof of Proposition \ref{th6}.
	\end{proof}
	%
	%
	
	Proposition \ref{infmain1} can be seen as a general expression of Proposition \ref{th6}, which considers the spectrum of the Koopman operator, instead of only the point spectrum. 
	It implies that the aliasing of the dynamical system, which has continuous / residual spectrum of the Koopman operator, also depends on the sampling period $T_s$ and the spectrum of the associated generator, which is related to the dynamical system itself. The dynamical system can be identified by sampled data if the spectrum of $T_s L|_{\hF_e^k}$ falls into the strip $\{z \in \bC:-\pi<\bIm(z)<\pi\}$. Since the spectrum of $T_s L|_{\hF_e^k}$ depends on the choice of the functional space, the condition of $\sigma(L|_{\hF_e^k})\subset \hG(T_s)$ may not hold in every Koopman invariant subspace $\hF_e^k$. Fortunately, if any $\hF_e^k$ such that spectrum of $L|_{\hF_e^k}$ falls into the strip $\hG(T_s)$, there is no system alias of the system.

	Based on Proposition \ref{infmain1}, obtaining $L|_{\hF_e^k}$ from $\bU|_{\hF_e^k}$ also requires a small enough period $T_s$ such that the spectrum $\sigma(L|_{\hF_e^k})$ falls into the strip $\hG(T_s)$, which is similar to the condition for point spectrum. If it holds, the principal logarithm of $U^{T_s}|_{\hF_e^k}$ refers to the generator $L|_{\hF_e^k}$. The critical sampling period of dynamical systems with Koopman continuous / residual spectrum are described as follows.
	
	\begin{theorem}[The critical sampling period $T_{\gamma,e}$]\label{infsampling}
		For the dynamical system \eqref{eq1} that has the (infinite-dimensional) Koopman invariant subspace in Assumption \ref{invariant}, if time-series $\{\bx(t_k) \}$ 
		are equidistant sampled with sampling period $T_s$, to uniquely identify the vector field $f$ from the corresponding discrete flow $S^{T_s}$, the sampling frequency $w$ (rad/s) must satisfy \begin{equation*}
		w>2\min_{\hF_e^k}\{\max|\bIm(\sigma(L|_{\hF_e^k}))|\}.
		\end{equation*}
		Equivalently, the sampling period should satisfy $T_s<T_{\gamma,e}$, where the critical sampling period $T_{\gamma,e}$ (i.e., $2\pi/w$) can be computed by 
		\begin{equation*}
		\boxed{T_{\gamma,e} = \max_{\hF_e^k}\left\{\frac{\pi}{\max|\bIm(\sigma(L|_{\hF_e^k}))|}\right\}.}
		\end{equation*}
	\end{theorem}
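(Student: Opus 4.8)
The plan is to derive this theorem as an immediate corollary of Proposition \ref{infmain1}, exactly mirroring how Theorem \ref{th7} follows from Proposition \ref{th6}. Proposition \ref{infmain1} already characterizes the absence of system aliasing by the spectral containment $\sigma(L|_{\hF_e^k})\subset \hG(T_s)$ for at least one admissible $\hF_e^k$, so the only real task is to rewrite this containment as an explicit inequality on the sampling period and then optimize over the existential quantifier.

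First I would unpack the open strip $\hG(T_s)=\{z\in\bC:-\pi/T_s<\bIm(z)<\pi/T_s\}$. A point $z$ lies in $\hG(T_s)$ precisely when $|\bIm(z)|<\pi/T_s$, so the containment $\sigma(L|_{\hF_e^k})\subset\hG(T_s)$ is equivalent to requiring $|\bIm(z)|<\pi/T_s$ for every $z\in\sigma(L|_{\hF_e^k})$. Here I would invoke Assumption \ref{invariant}, which guarantees that $L|_{\hF_e^k}$ is a bounded operator; its spectrum is therefore a compact subset of $\bC$, so $\max|\bIm(\sigma(L|_{\hF_e^k}))|$ is attained and finite. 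Consequently the containment reduces to the single scalar inequality $\max|\bIm(\sigma(L|_{\hF_e^k}))|<\pi/T_s$, which rearranges to $T_s<\pi/\max|\bIm(\sigma(L|_{\hF_e^k}))|$.

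Next I would dispatch the existential quantifier. Since Proposition \ref{infmain1} only requires the containment for at least one invariant subspace, no system aliasing occurs iff $T_s<\pi/\max|\bIm(\sigma(L|_{\hF_e^k}))|$ holds for some $k$; the least upper bound of these thresholds is attained by the maximum, giving $T_{\gamma,e}=\max_{\hF_e^k}\{\pi/\max|\bIm(\sigma(L|_{\hF_e^k}))|\}$ and the no-aliasing condition $T_s<T_{\gamma,e}$. Converting to angular frequency via $w=2\pi/T_s$, and using that maximizing $\pi/M_k$ amounts to minimizing $M_k:=\max|\bIm(\sigma(L|_{\hF_e^k}))|$, I obtain $w>2\pi/T_{\gamma,e}=2\min_{\hF_e^k}\{\max|\bIm(\sigma(L|_{\hF_e^k}))|\}$, which is the stated frequency bound.

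I expect the only genuine subtlety to be the boundedness of the spectrum that legitimizes replacing the strip containment by a finite maximum of imaginary parts: without Assumption \ref{invariant} the spectrum could be unbounded and the $\max$ would have to become a supremum, possibly infinite, invalidating the closed-form expression for $T_{\gamma,e}$. A minor point worth flagging is the degenerate case $\max|\bIm(\sigma(L|_{\hF_e^k}))|=0$, i.e.\ a purely real restricted spectrum, where the threshold is $+\infty$ and no sampling constraint is needed; the convention is that any $T_s$ avoids aliasing in that subspace.
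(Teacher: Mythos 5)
Your proposal is correct and follows exactly the route the paper takes: the published proof simply states that the result is immediate from Proposition~\ref{infmain1}, and your derivation fills in precisely that reduction (rewriting the strip containment as $\max|\bIm(\sigma(L|_{\hF_e^k}))|<\pi/T_s$ via boundedness from Assumption~\ref{invariant}, then optimizing over the admissible subspaces). Your remarks on compactness of the spectrum and the degenerate purely-real case are sensible elaborations the paper leaves implicit.
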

	\vspace{1mm}
	\begin{proof}
		The result immediately follows by Proposition \ref{infmain1}.
	\end{proof}
	
	Theorem \ref{infsampling} provides the critical sampling period and the lower bound of sampling frequency with respect to system aliasing of dynamical systems with Koopman continuous / residual spectrum. Theorem \ref{th7} can be seen as its special case, where we can directly analyze the eigenfunction and define the finite-dimensional eigenspace. Although the form of the critical sampling periods in Theorem \ref{infsampling} is similar to the conclusion in Theorem \ref{th7}, it is difficult to analyze the systems of continuous or residual spectrum since the choice of $\hF_e^k$ and the spectrum could be complex. 
	\subsection{Critical sampling period for the motivating example}
	Let us take the extendable rod example in Section \ref{sec:pre} again. We will show in this example that: 1) the critical sampling period provides a sampling bound to avoid system aliasing; 2) the critical sampling periods which avoid aliasing of these dynamical systems are consistent with the Nyquist-Shannon sampling bounds to avoid aliasing of the associated states when the states are band-limited.
	
	Here we show in Fig. \ref{fig9} the key observation of system aliasing of the extendable rod, and use Theorem \ref{th7} to explian this phenomenon. Fig. \ref{fig9} illustrates the case that the length of the rod is getting longer ($\dot{r} = 0.1r$) with the angular velocity $\omega=3$ rad/s. The real state of time $x_1(t)$ is denoted by the blue line, and the false state of time $\hat{x}_1(t)$ is denoted by the orange line, which is simulated by the identified dynamical systems with sampling period $T_s = 4\pi/9$ s. We find $\hat{x}_1(t)$ matches the same measurements of the correct system (marked by black triangles). However, it shows that $x_1(t)\neq\hat{x}_1(t)$ when $t\neq kT_s, k=1,2,\ldots$. According to Theorem \ref{th7}, the critical sampling period of the dynamical system is $T_\gamma = \pi/3$ s, which is denoted by the yellow dashed line. If the sampling period $T_s>T_\gamma$, there are generator alias $\widehat{L}$, which corresponds to system aliases $\hat{f}$, in aliasing space besides the truth. It causes that different CT generator can match the same DT Koopman operator $U^{T_s}$, i.e., different vector fields can generate the same measurements as in Fig. \ref{fig9}. Therefore, the critical sampling period $T_\gamma$ provides the maximum sampling period to avoid the aliasing of the dynamical system caused by sampling period, 
	i.e., the sampling period needs to be small enough ($T_s<T_\gamma$) such that the uniquely identified dynamical system corresponds to the truth.

	\begin{figure}[!t]
	\centering
		\includegraphics[width=.5\textwidth]{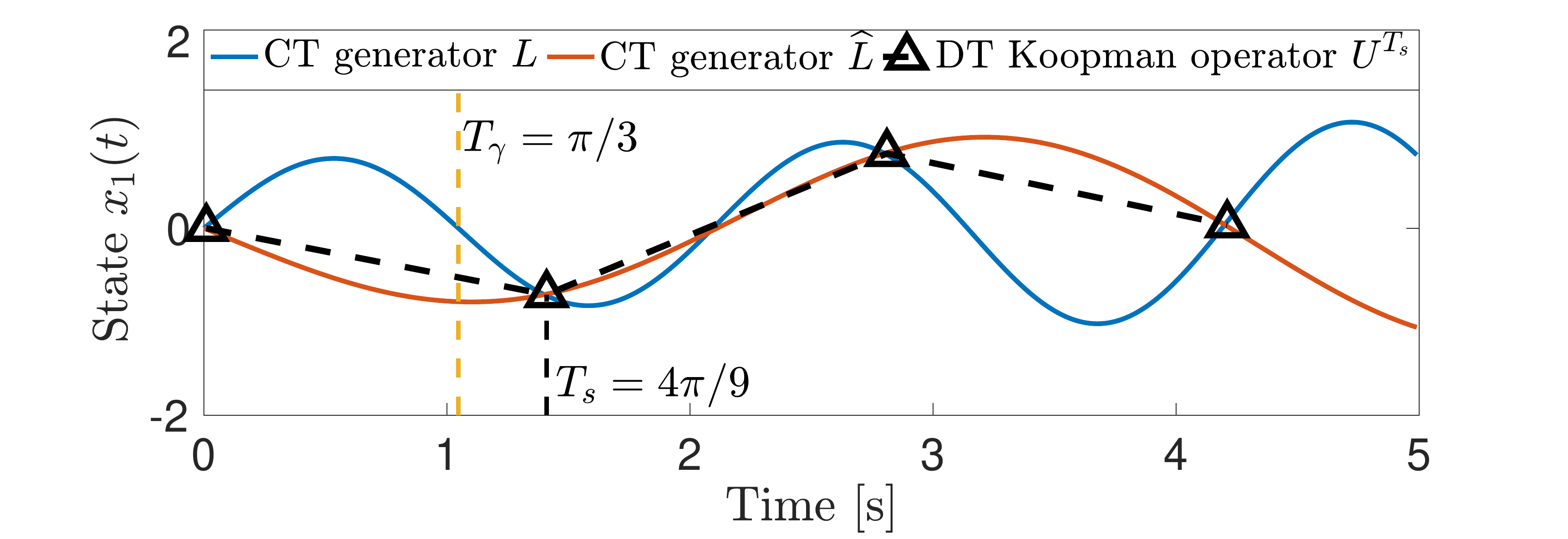}
	\caption{The length of the rod is getting longer $\dot{r}=0.1r$. The true state $x_1(t)$ of the dynamical system \eqref{eq1} which corresponds to the true generator $L$, and the state $\hat{x}_(t)$ predicted by the generator alias $\widehat{L}$, where $U^{T_s} =\exp(T_sL), \widehat{L} = \Log(U^{T_s})/T_s$. The critical sampling period of the system is $T_\gamma$. }
	\label{fig9}
\end{figure}
	
	Since the Nyquist-Shannon sampling theorem also considers the sampling bound of signals, we use these two cases of extendable rod to analyze the Nyquist-Shannon sampling bound of the states, and the critical sampling period of the associated systems. When the length of the rod is constant and the angular velocity is $\omega$ rad/s, the trajectory of $x_1(t)$ and $x_2(t)$ are periodic with the period $2\pi/\omega$ s. The lower bound of sampling frequency by Nyquist-Shannon sampling theorem is $2\omega$ rad/s. Accordingly, the dynamical system is a linear system described by \begin{align*}
	\dot{x}_1 &= \omega x_2,\\
	\dot{x}_2 &= -\omega x_1.
	\end{align*} Note that the Koopman operator of this dynamical system has eigenvalues $\pm \omega i$. Therefore, the lower bound of frequency given by Theorem \ref{th7} is $$
	2\min_{\hF_2}\{\max|\bIm(\lambda(L|_{\hF_2}))|\}=2\omega,
	$$ which is consistent with the bound provided by Nyquist-Shannon sampling theorem.
	
	When the length of rod is getting longer ($\dot{r} = 0.1r$) while rotating with the constant angular velocity $\omega$ rad/s, as in Fig. \ref{fig1}, the dynamical system can also be described by \begin{align*}
	\dot{x}_1 &= 0.1x_1+\omega x_2,\\
	\dot{x}_2 &= -\omega x_1+ 0.1x_2.
	\end{align*}
	The lower bounds of sampling frequency given by Theorem  \ref{th7} is still $2\omega$ rad/s. However, Nyquist-Shannon sampling theorem is not applicable in this case since the states of $x_1(t)$ and $x_2(t)$ are not band-limited signals.	
	
	

	\section{Numerical experiments}\label{sec:num}
	In this section, the identification errors with increasing sampling periods are presented to verify the sampling criterion (Theorem  \ref{th7}). Koopman-based approach \cite{mauroy2019koopman} is used for identification since it does not need to calculate the derivatives of the states and can identify dynamical systems using measurements with lower sampling frequency. To test the influence of sampling period on the identification, the errors of identified vector fields $\hat{f}$ with growing sampling periods are analyzed. 
	
	We first describe the Koopman-based identification approach and then show the numerical example of a nonlinear system with limit cycle. Here is the identification approach. 
	
	Consider the state vector $\bx = [x_1,\ldots,x_n]$ and the following dynamical system:
	\begin{equation*}
	\begin{aligned}
	\dot{x}_k = f_k(\bx) &= \sum_{j = 1}^{N}w_j^k g_j(\bx), \ k = 1,\ldots,n,\\
	\end{aligned}
	\end{equation*}
	where $g_j(\bx)$ are monomial basis functions and coefficients $w_j^k$ are to be identified. 
	
	Numerically, we define an $N$-dimensional observable space $\hF_m = \sspan\{g_1, g_2, \ldots, g_N \}$ to approximate the Koopman invariant subspace, where $m$ denotes the total degree of the monomial basis function. The observable space is defined as $$g_j(\bx) \in \{x^{s_1}_1\ldots x_n^{s_n}|(s_1,\ldots ,s_n) \in \bN^n, s_1+\ldots +s_n\le m\},$$ where $x_j (1\le j\le n)$ is the $j$th component of $\bx$. 
	The data $\{\bx(t_k),\bx(t_{k+1}) \}_{k=1}^K$, where $t_{k+1} - t_{k} = T_s$, are lifted to this approximate Koopman invariant subspace $\hF_m$, i.e.,
	\begin{equation*}
X_{\text{lift}} = \left(\begin{matrix}
g_1(\bx(t_1))&\ldots&g_N(\bx(t_1))\\
\vdots&\ddots&\vdots\\
g_1(\bx(t_K))&\ldots&g_N(\bx(t_K))\\
\end{matrix}
\right)_{K\times N},
	\end{equation*}
	\begin{equation*}
	Y_{\text{lift}} = \left(\begin{matrix}
	g_1(\bx(t_2))&\ldots&g_N(\bx(t_2))\\
	\vdots&\ddots&\vdots\\
	g_1(\bx(t_{K+1}))&\ldots&g_N(\bx(t_{K+1}))\\
	\end{matrix}
	\right)_{K\times N}.
	\end{equation*}
	
	In order to focus on the influence of the sampling period on the identification, we collect enough data and $K\gg N$. Since $\hF_m$ is an approximate Koopman invariant space, the lifting data $X_{\text{lift}}$ and $Y_{\text{lift}}$ are approximately linear, the matrix representation $\widehat{U}$ of the Koopman operator can be approximately obtained by \begin{equation*}
	\widehat{U} = X_{\text{lift}}^\dagger Y_{\text{lift}}.
	\end{equation*}
	Then the approximate matrix representation of the generator is computed as \begin{equation*}
	\widehat{L} = \frac{1}{T_s}\Log(X_{\text{lift}}^\dagger Y_{\text{lift}}).
	\end{equation*}
	To recover the vector field, i.e., identify the coefficients $w_j^k$, we use the basis function $g_l(\bx)=x_k$ and $Lg_l = (f\cdot\nabla )g_l = f_k$. It follows that the coefficients of the vector field can be recovered from the matrix representation of the generator $L$, i.e., $\hat{w}^k_j=[\widehat{L}]_{jl}$, where $l$ is the index of the basis function $g_l(x)=x_k$. 
	
	The normalized root-mean-square error (NRMSE) of coefficients is computed to describe the identification error of the vector field, i.e., \begin{equation*}
	\mathrm{RMSE} = \sqrt{\frac{1}{n N}\sum_{k =1}^{n}\sum_{j=1}^{N}(\hat{w}_j^k-w_j^k)^2},
	\end{equation*} \begin{equation*}
	\mathrm{NRMSE} =\mathrm{RMSE}/|w|,
	\end{equation*}where $|w|$ is the average value of all nonzero coefficients $w_j^k$ of associated dynamical systems, i.e., $
	|w| = \sum_{k=1}^{n}\sum_{j=1}^{N}|w_j^k|/\|w\|_0.
	$ 

	The dynamical systems for identification are as follows:
	
	\emph{a) The nonlinear system with real Koopman eigenvalues.}
	\begin{equation}\label{sys4}
	\begin{aligned}
	\dot{x}_1 &= - x_1,\\
	\dot{x}_2 &= x_1^2 - x_2.
	\end{aligned} 
	\end{equation}
	
	In the Koopman-based identification approach, we set 1000 trajectories, which is the same set up with \cite{korda2018linear}, 30 snapshots at times for each trajectory and the random initial condition $\bx(t_0)\in [-1,1]^2$. In order to minimize the impact of other factors such as the choice of observable space on error of identification, we select the smallest Koopman invariant space, i.e., $ \hF_m = \sspan\{g_1(\bx) = x_1, g_2(\bx) = x_2, g_3(\bx) = x_1^2\}$. 
	
	The lower bound of sampling frequency based on Theorem \ref{th7} is analyzed as follows. For the Koopman invariant space $\hF_m$, the generator associated with the system \eqref{sys4} can be seen as a three-dimensional linear system:
	$$
	\frac{\rm d}{{\rm d} t}\left(\begin{matrix}
	x_1\\x_2\\x_1^2
	\end{matrix}
	\right)
	= \left(\begin{matrix}
	-1&0&0\\0&-1&1\\0&0&-2
	\end{matrix}
	\right) \left(\begin{matrix}
	x_1\\x_2\\x_1^2
	\end{matrix}\right)
	$$
	Thus the eigenvalues are $-1$ and $-2$, which are all real values. According to Theorem \ref{th7}, the theoretical lower bound of sampling frequency is zero and the critical sampling period is not bounded.
	
	Fig. \ref{Fig7.sub1} shows the NRMSE of the dynamical system. The blue line denotes the result of NRMSE with growing sampling periods. It shows that NRMSE keeps small with growing sampling period, which is in agreement with Theorem \ref{th7}. The states $\hat{\bx}(t)$ simulated by the identified dynamical system and the real states $\bx(t)$ are visualized in Fig. \ref{Fig7.sub2}-Fig. \ref{Fig7.sub4} when the sampling periods are $0.5$s, $1.1$s and $2.8$s. The blue lines denote the true states of time $\bx(t)$ with the initial condition $[0.5,0.5]$. The orange dashed lines are the prediction of states $\hat{\bx}(t)$ by identified vector fields $\hat{f}$. It implies that the true states of time $\bx(t)$ is always the ''simplest'' trajectory between two measurements, which may be the reason why this system does not have the bounded critical sampling period.
	
	\begin{figure}[thpb]
		\subfigure[Identification error]{
			\label{Fig7.sub1}
			\includegraphics[width=.225\textwidth]{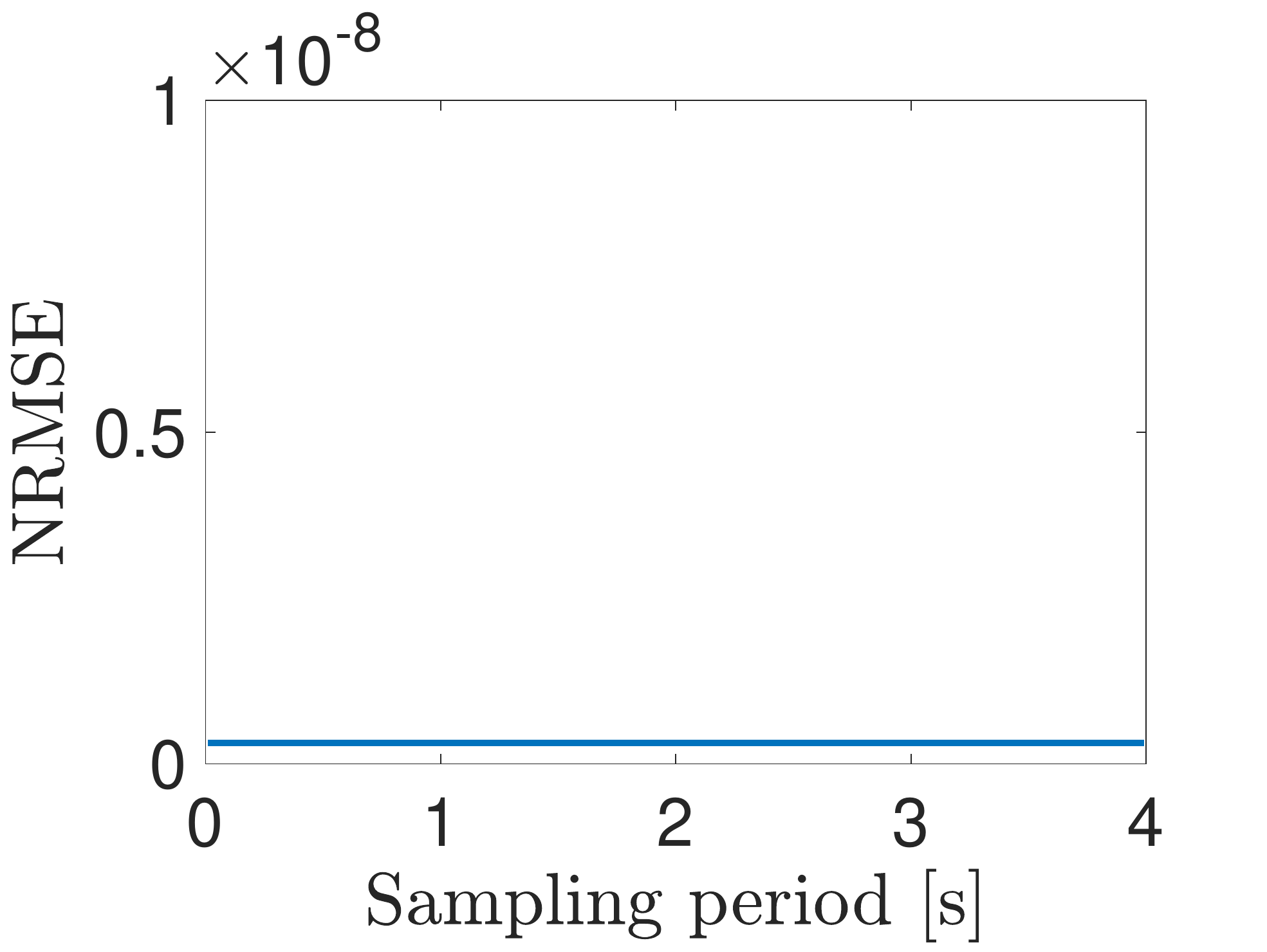}}
		\subfigure[State prediction ($T = 0.5$ s) ]{
			\label{Fig7.sub2}
			\includegraphics[width=.225\textwidth]{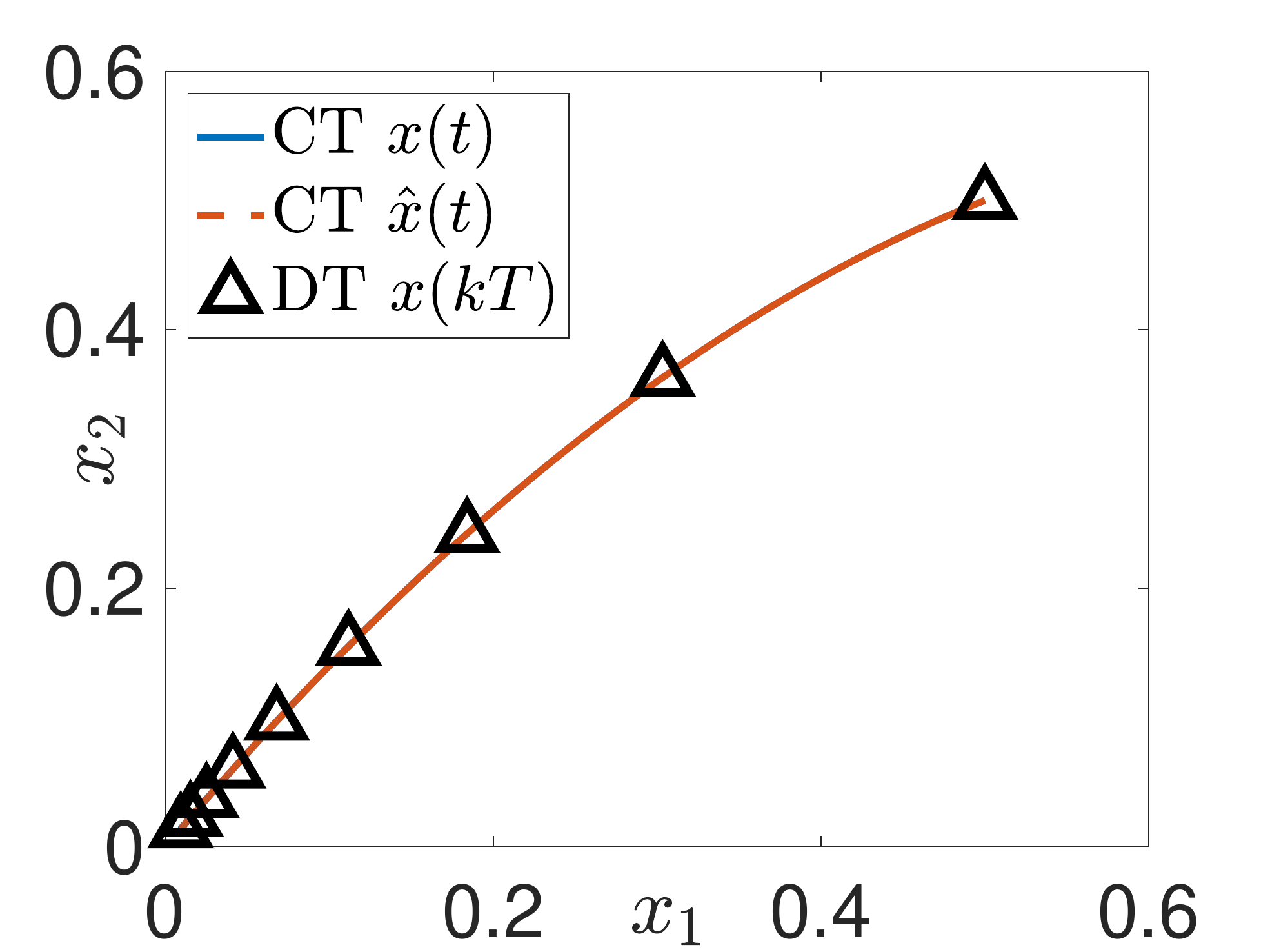}}
		\subfigure[State prediction ($T = 1.1$ s)]{
			\label{Fig7.sub3}
			\includegraphics[width=.225\textwidth]{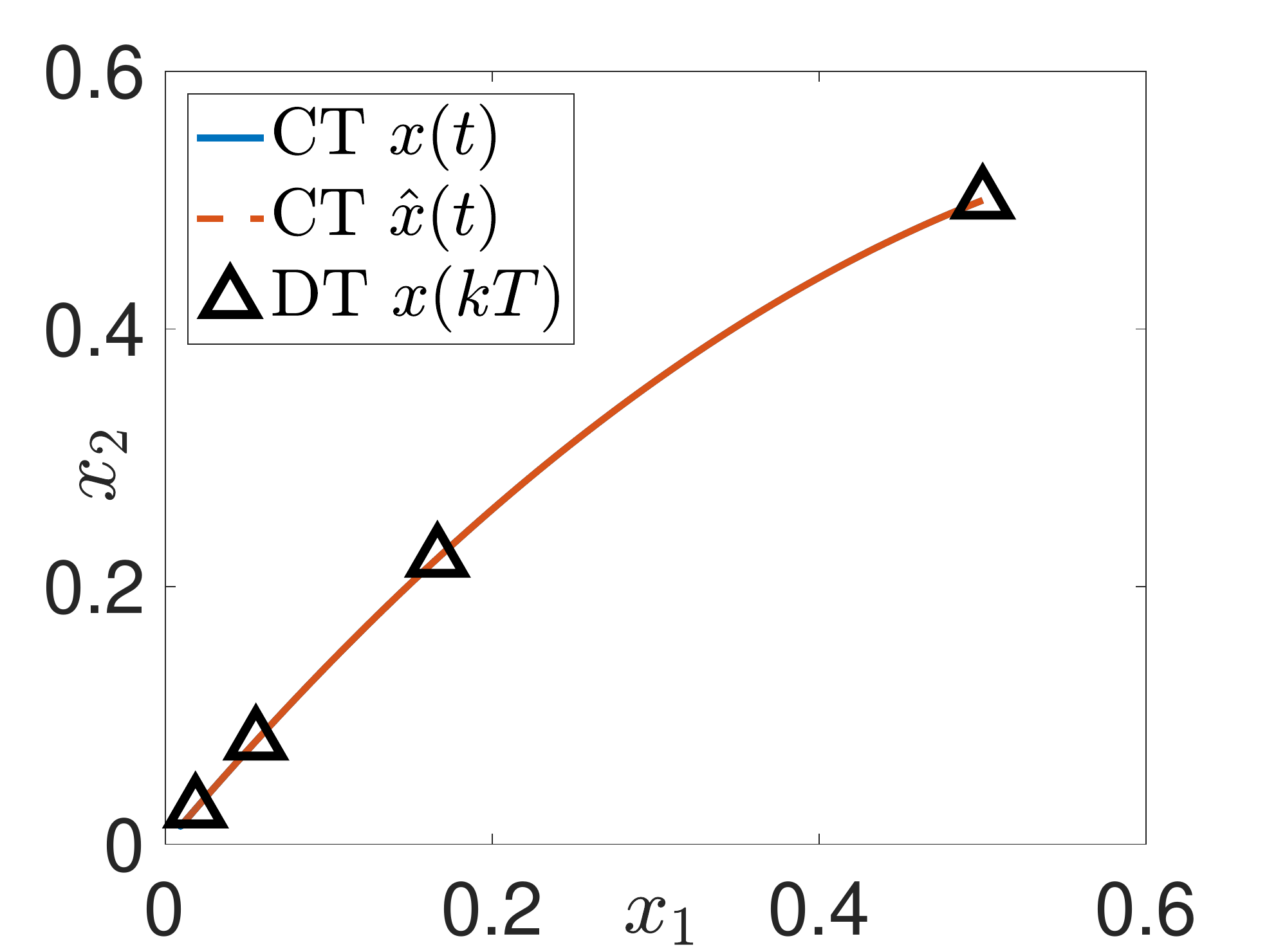}
		}
		\subfigure[State prediction ($T = 2.8$ s)]{
			\label{Fig7.sub4}
			\includegraphics[width=.225\textwidth]{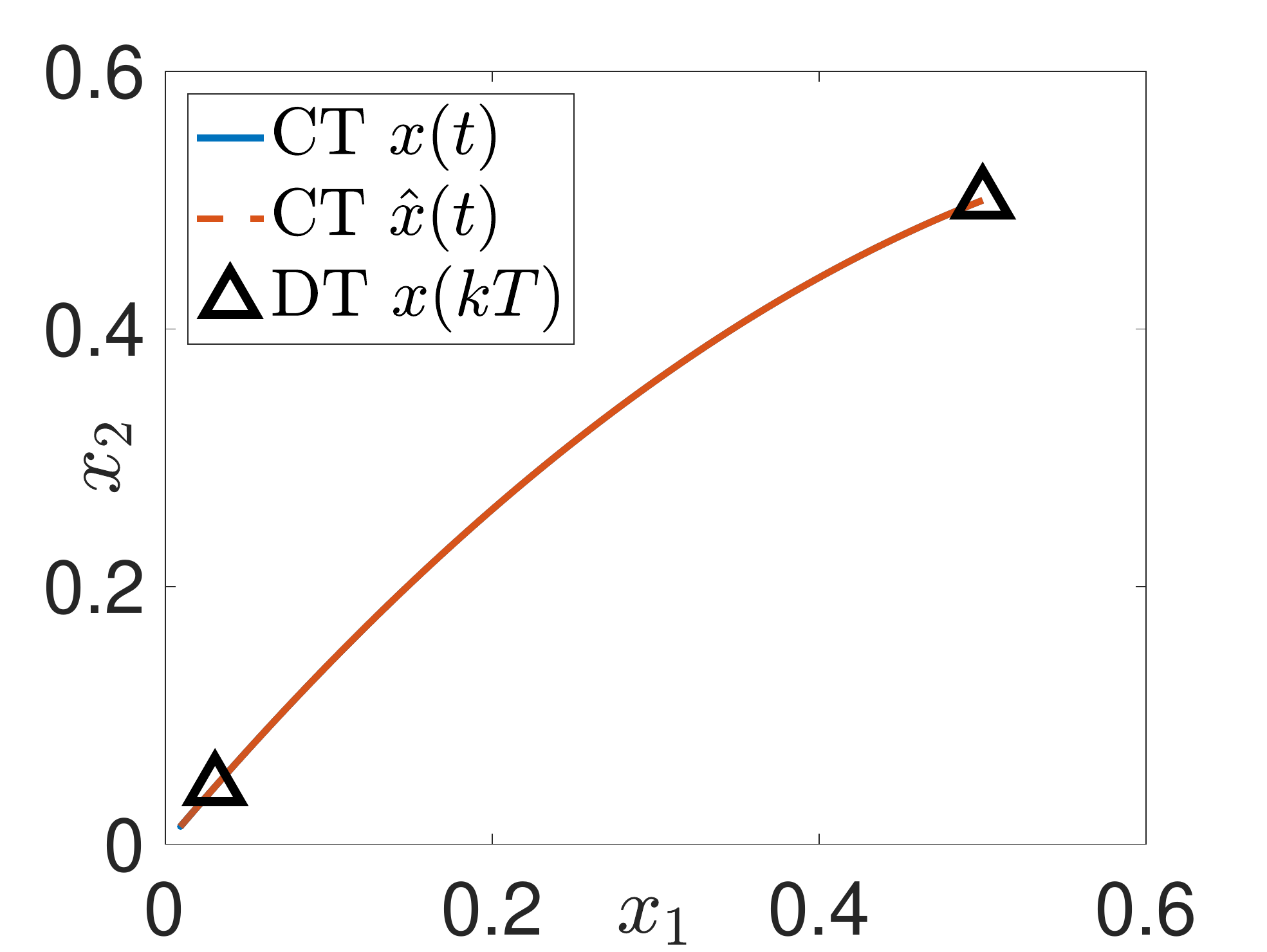}
		}
		\caption{Identification errors with growing sampling periods (a), and prediction by vector field identified from data with sampling period $T=0.5$ s (b), $T=1.1$ s (c), and $T=2.8$ s (d) of nonlinear system \eqref{sys4} whose Koopman eigenvalue are real.}
		\label{fig7}
	\end{figure}
	
	\emph{b) The linear system with states $\bx = [x_1,x_2]^T$.}
	\begin{equation}\label{sys1}
	\begin{aligned}
	\dot{x}_1 &= 0.1x_1+3x_2,\\
	\dot{x}_2 &= -3x_1+ 0.1x_2.
	\end{aligned} 
	\end{equation}
	
	In the Koopman-based identification approach, we set 1000 trajectories, 30 snapshots at times for each trajectory, and the random initial condition $\bx(t_0)\in [-1,1]^2$. 
	We select the smallest Koopman invariant space for identification, i.e., $m=1$. 
	
	The critical sampling period based on Theorem \ref{th7} is analyzed as follows. For the linear system \eqref{sys1}, the eigenspace $\hF_2$ is the space of state observable functions $\sspan\{g_1(\bx) = x_1,g(\bx)=x_2\}$. In this space, the generator associated with the system \eqref{sys1} can be represented by the following system matrix with the basis functions $g_1, g_2$.
	$$
	L|_{\hF_2} = \left(\begin{matrix}
	0.1&3\\-3&0.1
	\end{matrix}
	\right)
	$$
	Thus the eigenvalues are determined by \begin{equation*}
	(\lambda-0.1)^2+3^2=0,
	\end{equation*}leading to $\lambda_{1,2}=0.1\pm3i.$ According to Theorem \ref{th7}, 
	the critical sampling period is $\pi/\max|\bIm(\lambda(L|_{\hF_2}))|=\pi/3$ s. 
	The lower bound of sampling frequency will be integer multiples of $6$ rad/s if other Koopman valid eigenspace $\hF_2$, spanned by multiplications of these eigenfunctions, are used to analyze the lower bound based on Property \ref{property1}. 
	
	Fig. \ref{Fig4.sub1} shows the comparison of critical sampling period and NRMSE of the linear dynamical system. The blue line shows the result of NRMSE with growing sampling periods. The red dashed line denotes the critical sampling period $\pi/3$ s, which is computed by Theorem \ref{th7}. It is in agreement with Theorem \ref{th7} that the NRMSE changes significantly as $T_s$ tends to the theoretical red line.

	\begin{figure}[thpb]
		\subfigure[Identification error]{
			\label{Fig4.sub1}
			\includegraphics[width=.225\textwidth]{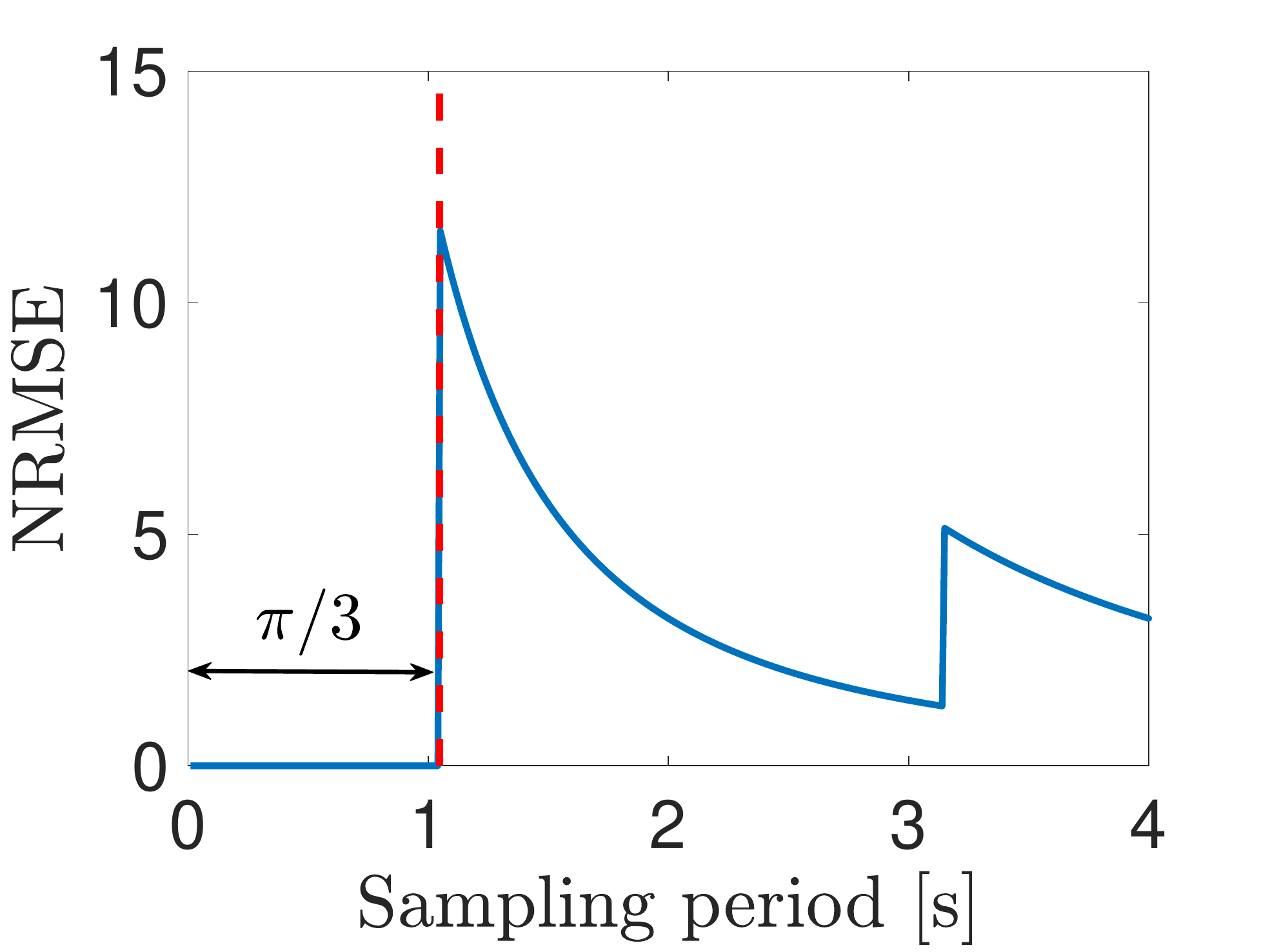}}
		\subfigure[State prediction ($T = 0.5$ s) ]{
			\label{Fig4.sub2}
			\includegraphics[width=.225\textwidth]{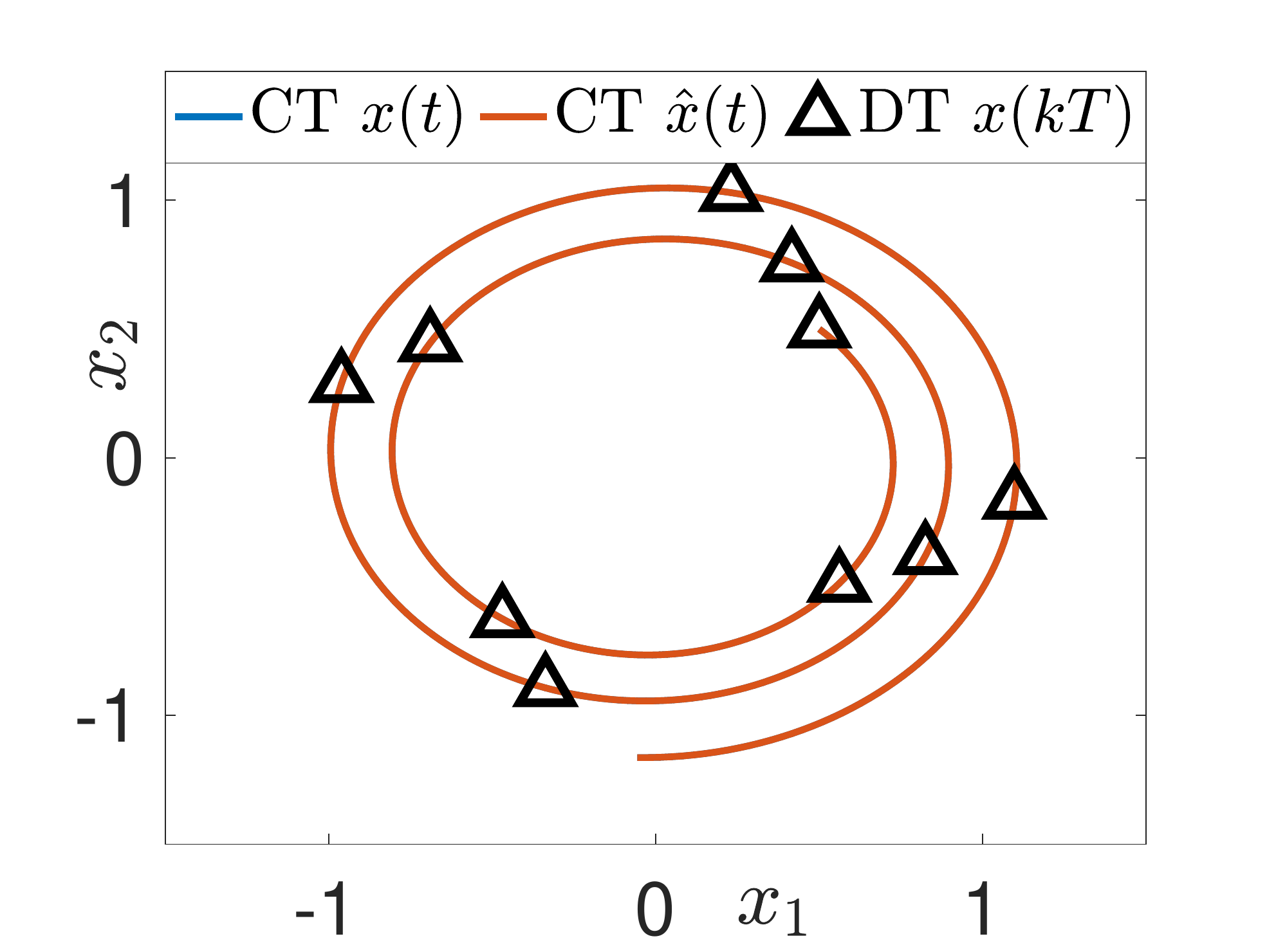}}
		\subfigure[State prediction ($T = 1.1$ s)]{
			\label{Fig4.sub3}
			\includegraphics[width=.225\textwidth]{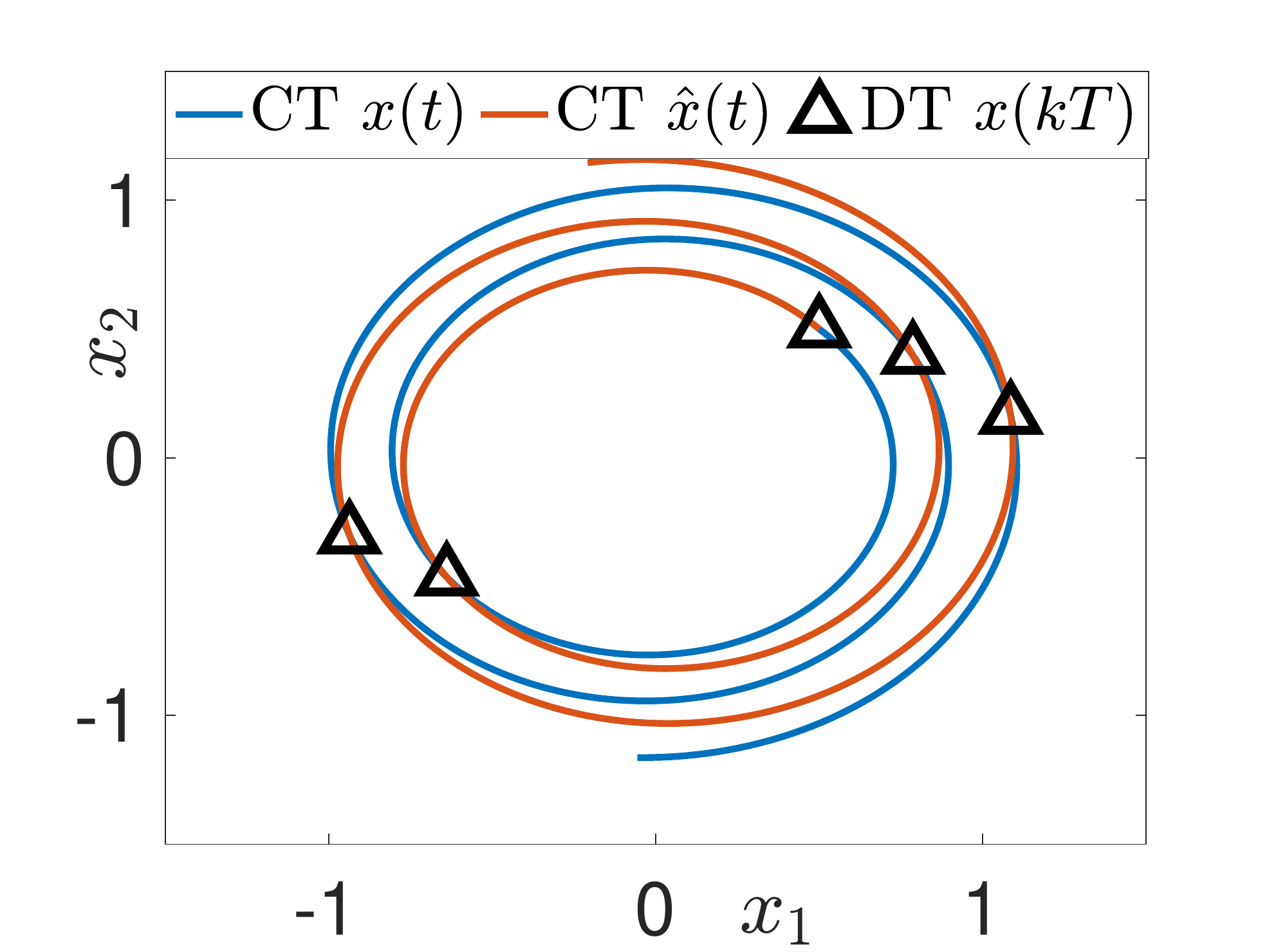}
		}
		\subfigure[State prediction ($T = 2.8$ s)]{
			\label{Fig4.sub4}
			\includegraphics[width=.225\textwidth]{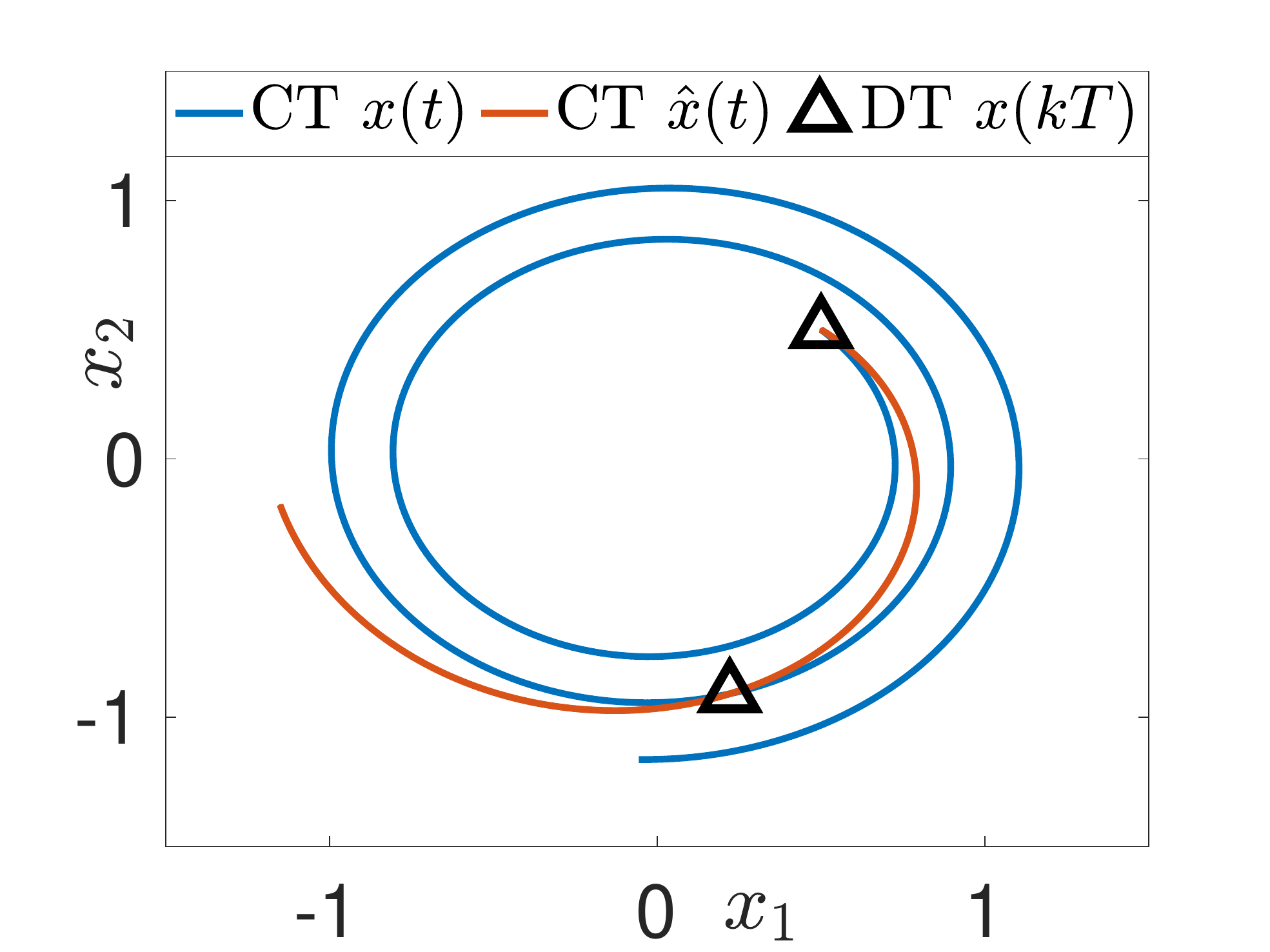}
		}
		\caption{Identification errors with growing sampling periods (a), and prediction by vector field identified from data with sampling period $T=0.5$ s (b), $T=1.1$ s (c), and $T=2.8$ s (d) of the linear system \eqref{sys1}. The parameter of the observable space is $m=1$ for prediction.}
		\label{fig4}
	\end{figure}
	
	\emph{c) The nonlinear system with a fixed point at $[0,0]$}:
	\begin{equation}\label{sys2}
	\begin{aligned}
	\dot{x}_1 &= -3x_2-x_1(x_1^2+x_2^2),\\
	\dot{x}_2 &= 3x_1-x_2(x_1^2+x_2^2).
	\end{aligned}
	\end{equation}
	
	We set 1000 trajectories, 30 snapshots at times for each trajectory and random initial condition $\bx(t_0)\in [-1,1]^2$ for identification. In order to make the comparison of sampling period bound and the change of NRMSE with growing sampling period more convincing, three eigenspaces $\hF_m, m = 7,10,13$ are chosen to approximate the Koopman invariant space for the nonlinear system \eqref{sys2}. 
	
	For nonlinear systems with fixed point $x^*$, i.e., $f(x^*) = 0$, the principal Koopman eigenvalues $\lambda_i$ are the eigenvalues of the Jacobian matrix of the vector field $f$ at the fixed point $x^*$ \cite{mauroy2016global}. Therefore, eigenvalues of the generator of the system \eqref{sys2} are eigenvalues of the Jacobian matrix at $[0,0]$: $$
	J = \left(\begin{matrix}
	0&-3\\3&0
	\end{matrix}\right).
	$$
	It leads to the principal Koopman eigenvalues being $\pm3i$. According to Theorem \ref{th7}, the lower bound of sampling frequency is $6$ rad/s and the critical sampling period is $\pi/3$ s. Other Koopman eigenvalues can be obtained by taking linear combinations (with weights that are positive and integers) of these principal eigenvalues. The bound of sampling frequency analyzed by associated eigenfunctions will be integer multiples of $6$ rad/s.
	
	The critical sampling period $\pi/3$ is denoted as the red dashed line in Fig. \ref{Fig5.sub1}. It shows that the NRMSE$^{1/4}$ of the nonlinear system \eqref{sys2} appears to a peak when the sampling period grows close to the red dashed line in every observable space we choose, which is in agreement with Theorem \ref{th7}. 
	
	\begin{figure}[thpb]
		\subfigure[Identification error]{
			\label{Fig5.sub1}
			\includegraphics[width=.225\textwidth]{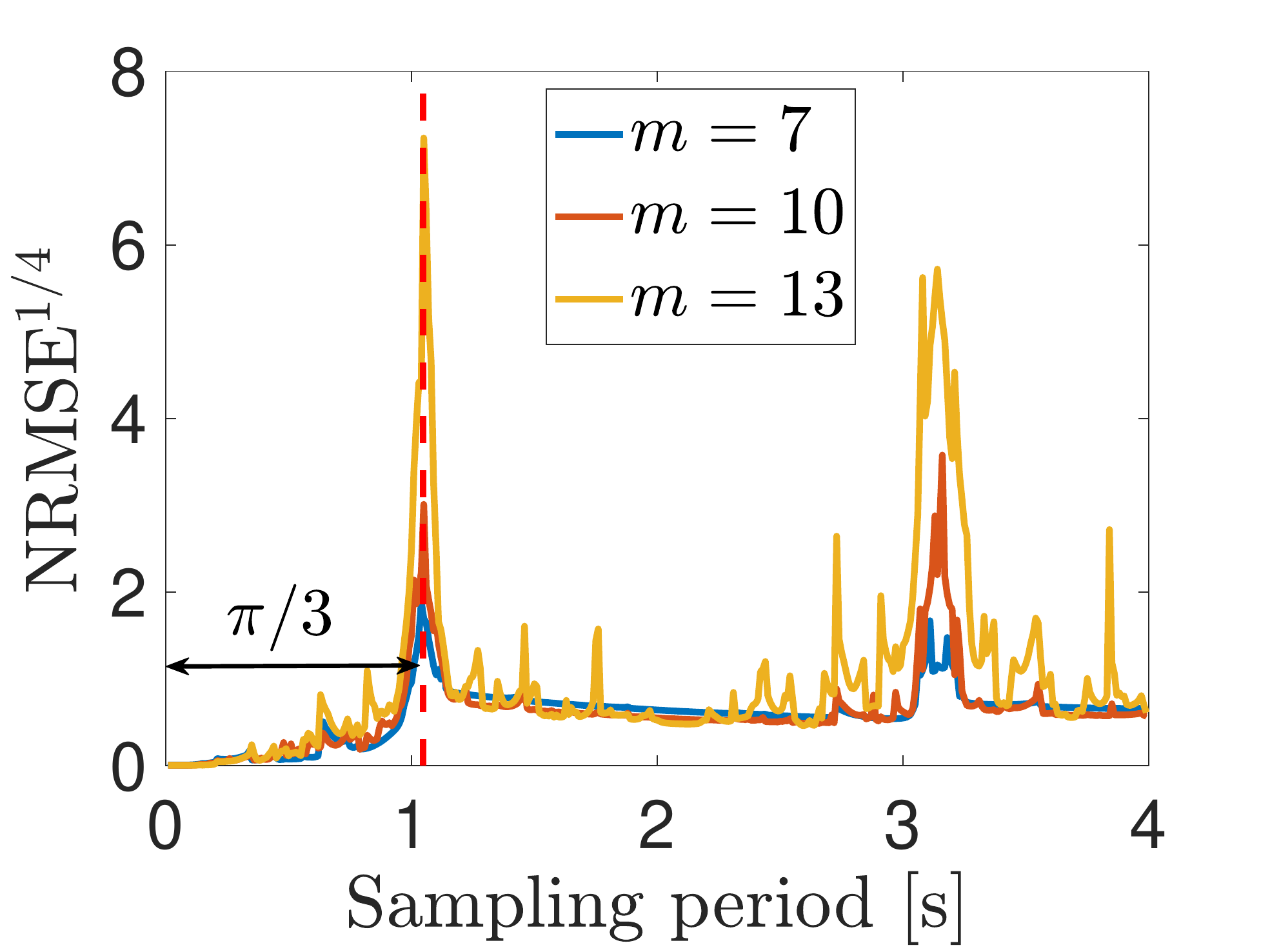}}
		\subfigure[State prediction ($T = 0.5$ s)]{
			\label{Fig5.sub2}
			\includegraphics[width=.225\textwidth]{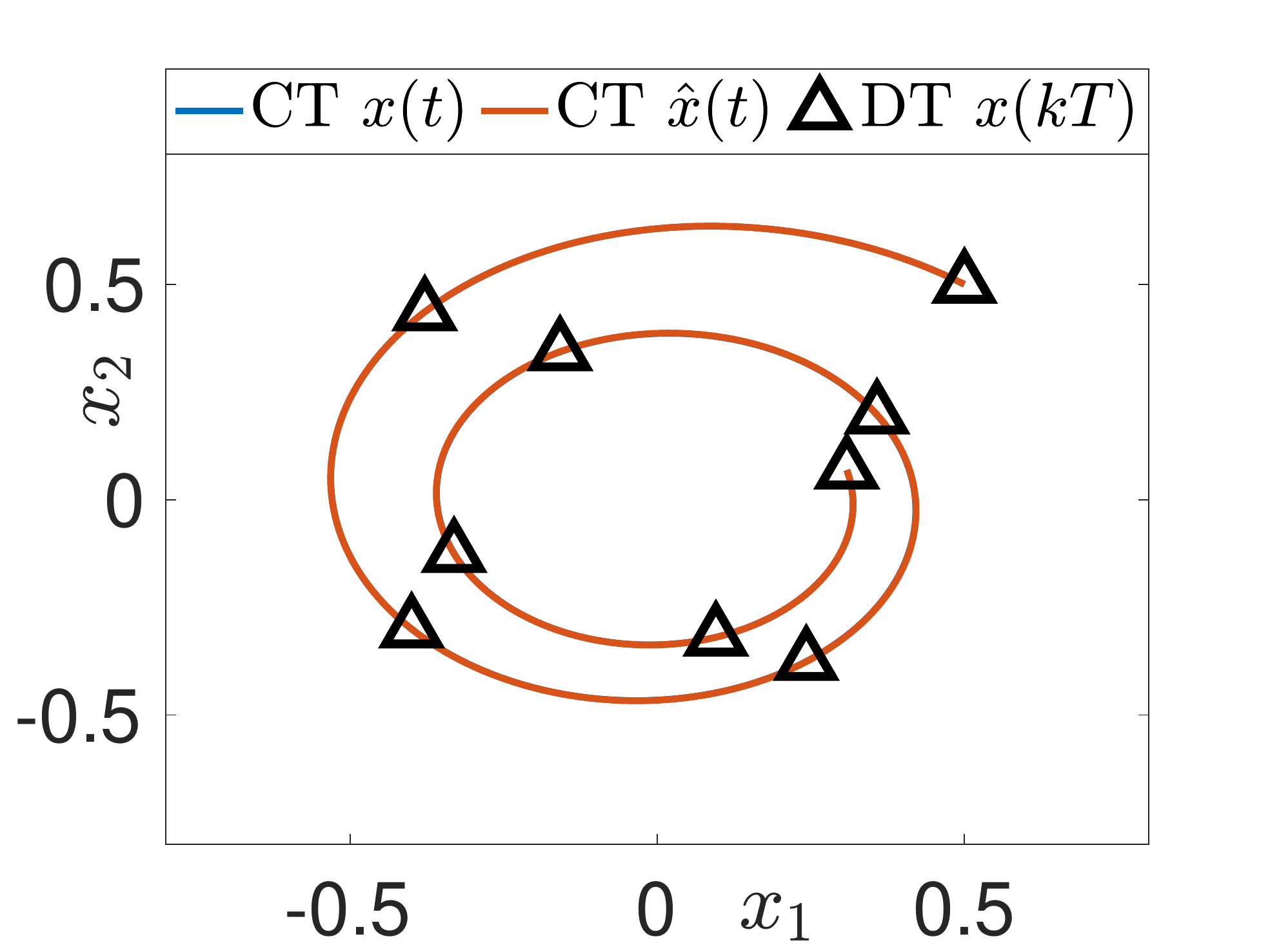}}
		\subfigure[State prediction ($T = 1.1$ s)]{
			\label{Fig5.sub3}
			\includegraphics[width=.225\textwidth]{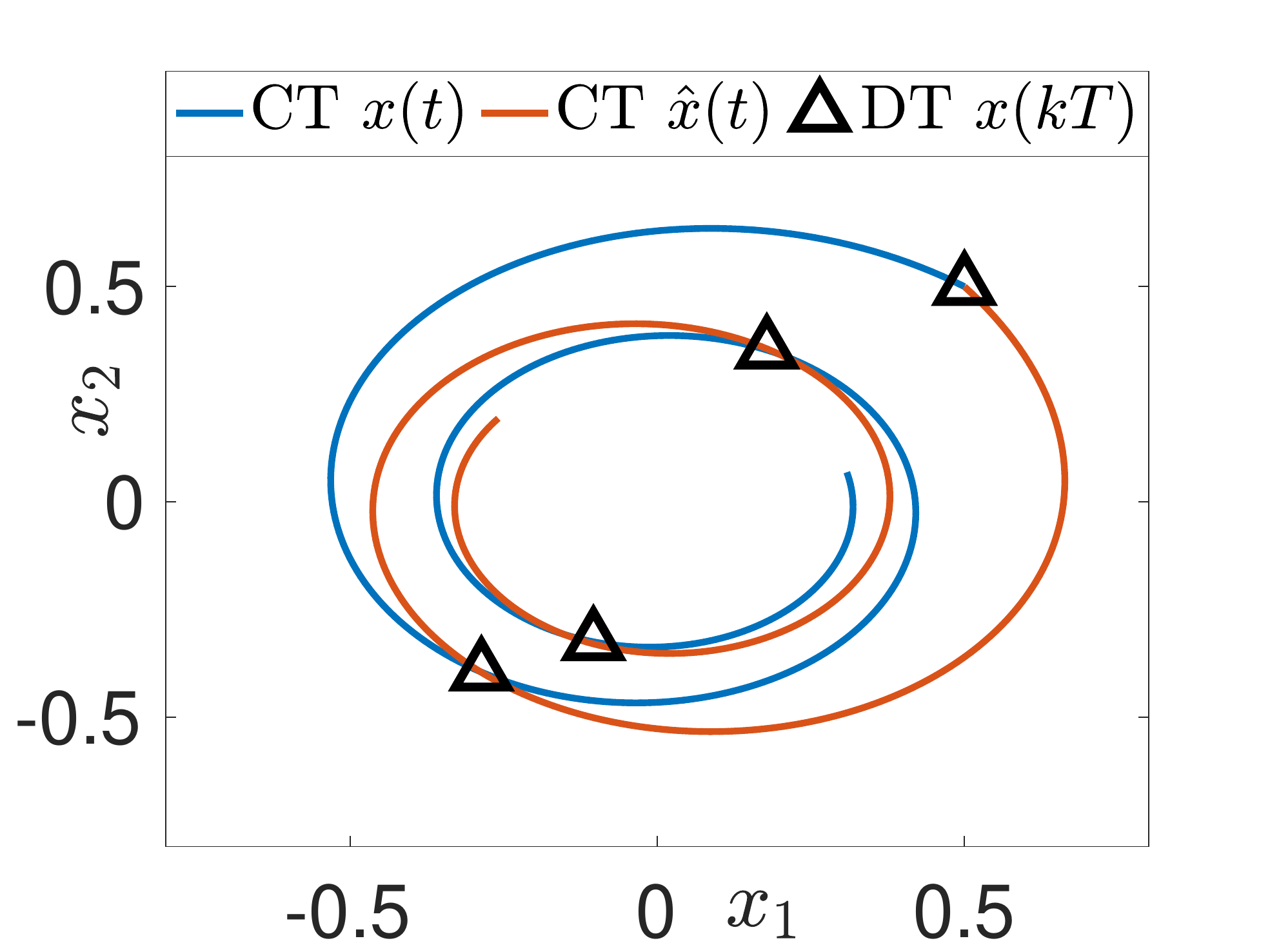}
		}
		\subfigure[State prediction ($T = 2.8$ s)]{
			\label{Fig5.sub4}
			\includegraphics[width=.225\textwidth]{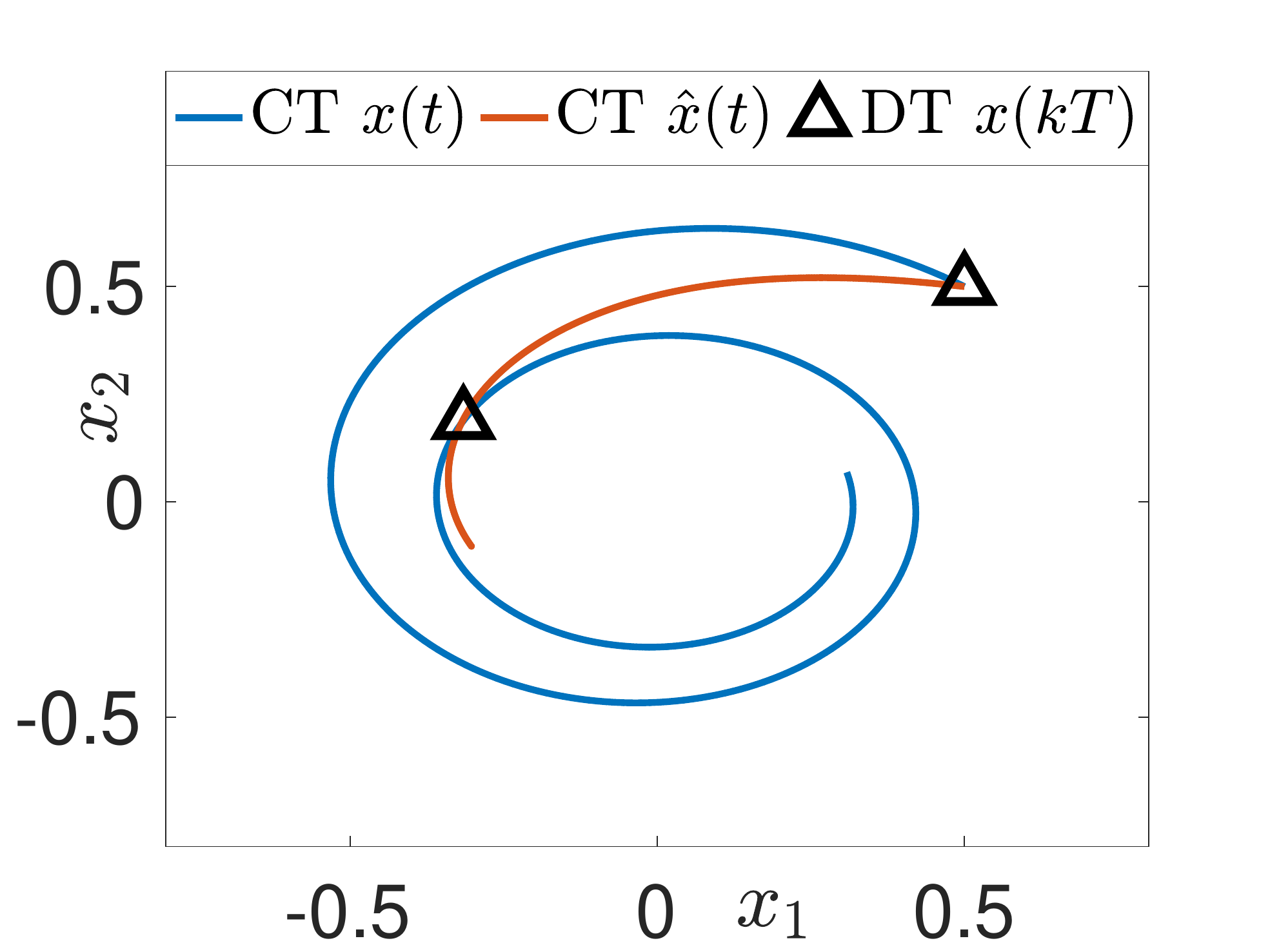}
		}
		\caption{Identification errors with growing sampling periods (a), and prediction by vector field identified from data with sampling period $T=0.5$ s (b), $T=1.1$ s (c), and $T=2.8$ s (d) of nonlinear system with fixed point \eqref{sys2}. The parameter of the observable space is $m=13$ for prediction.}
		\label{fig5}
	\end{figure}

	\emph{d) The nonlinear system with a limit cycle at $r = \sqrt{x_1^2+x_2^2}=1$:}
	\begin{equation}\label{sys3}
	\begin{aligned}
	\dot{x}_1 &= 3x_2-x_1(x_1^2+x_2^2-1),\\
	\dot{x}_2 &= -3x_1-x_2(x_1^2+x_2^2-1).
	\end{aligned} 
	\end{equation}
	
	We set 1000 trajectories, 30 snapshots at times for each trajectory and random initial condition $\bx(t_0)\in [-1.5,1.5]^2$ to identify this system. Moreover, three eigenspaces $\hF_m, m = 7,10,13$ are used to approximate the Koopman invariant space for the nonlinear system \eqref{sys3}.
	
	The principal Koopman eigenvalues are calculated as follows. Using polar coordinates, i.e., $x_1 = r\cos \theta$ and $x_2 = r\sin \theta$, we obtain the dynamical system
	\begin{align*}
	\dot{r} &= -r^3+r,\label{eq10}\\
	\dot{\theta} &= -3.
	\end{align*}
	For dynamical systems that admit limit cycles, the Floquet exponents are principal Koopman eigenvalues \cite{mauroy2016global}. Since the limit cycle is at $r = 1$, 
	the eigenvalue (Floquet exponent) is $-2$. 
	Other principal eigenvalues are $\pm3i$ with associated eigenfunction $e^{\mp i\theta}$. According to Theorem  \ref{th7}, the critical sampling period is $\pi/3$ s. 
	
	Fig. \ref{Fig6.sub1} shows the NRMSE$^{1/4}$ of the identified vector fields in three spaces ($m=7,10,13$) and the critical sampling period $\pi/3$ s (denoted by the red dashed line). We notice that the NRMSE$^{1/4}$ change significantly as sampling periods tend to ($m=7,10$) or a little earlier ($m=13$) than $\pi/3$ s, which are consistent with Theorem \ref{th7}. Furthermore, this numerical example ($m=13$) also suggests that it may not always be better to use more basis functions in Koopman-based identification method. 
	 When the number of basis functions increases as $m$ gets larger, this method may have higher requirement for sampling period. Since the eigenfunction that is the product of principal eigenfunctions may belong to $\hF_m$ when $m$ gets larger. The associated eigenvalue, which is linear combinations of principal eigenvalues, results in $\max{|\bIm(\sigma_p(L|_{\hF_m}))|}$ getting larger, where $L|_{\hF_m}$ is the approximate matrix representation of the generator restricted to $\hF_m$. Therefore, the sampling period $T_s$ that allows valid identification, i.e., $L|_{\hF_m}T_s$ being the principal logarithm of the Koopman operator, may be smaller with larger $m$, and the NRMSE may jump earlier than the line of critical sampling period $T_\gamma$.
	\begin{figure}[thpb]
		\subfigure[Identification error]{
			\label{Fig6.sub1}
			\includegraphics[width=.225\textwidth]{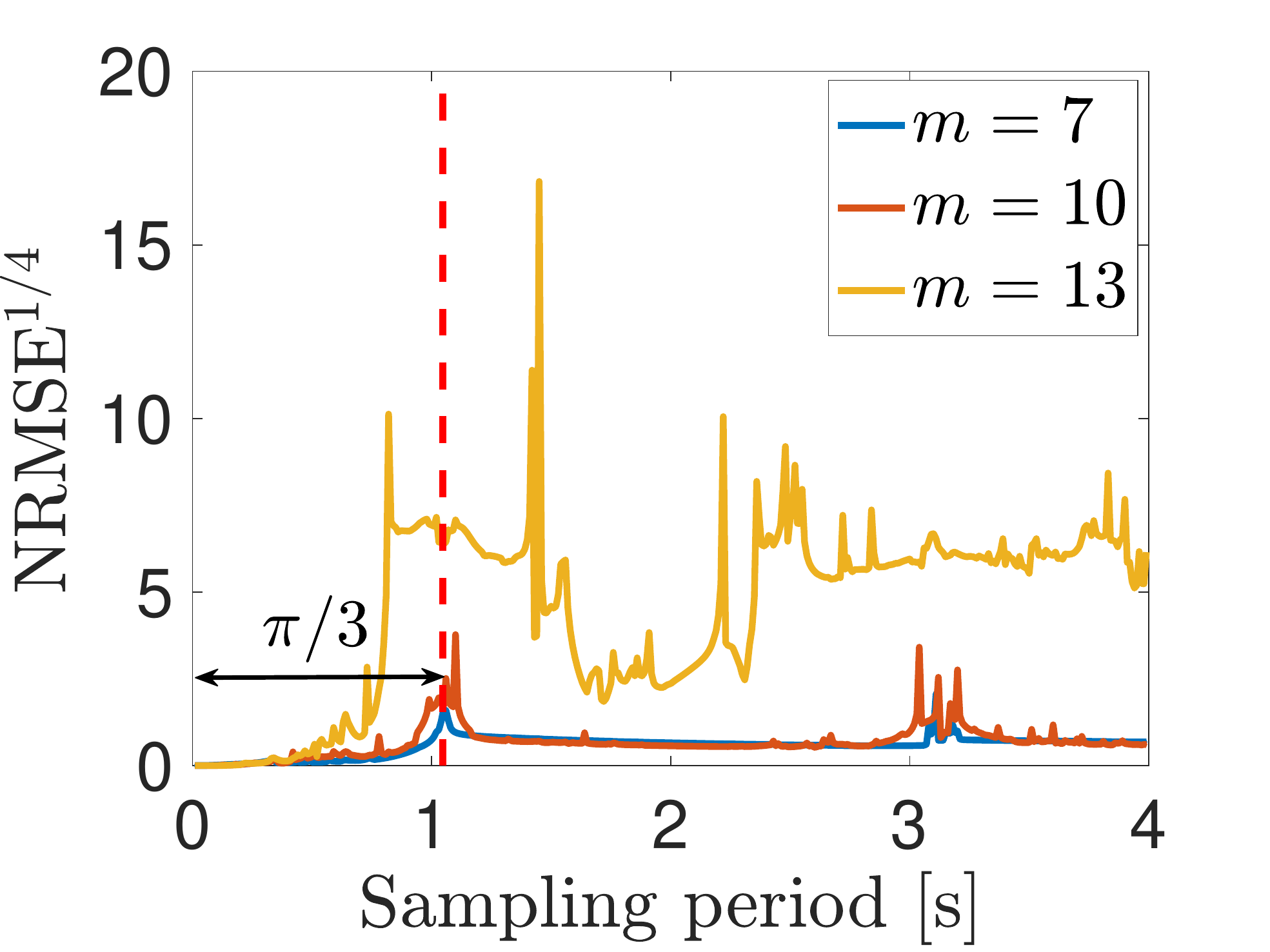}}
		\subfigure[State prediction ($T = 0.5$ s)]{
			\label{Fig6.sub2}
			\includegraphics[width=.225\textwidth]{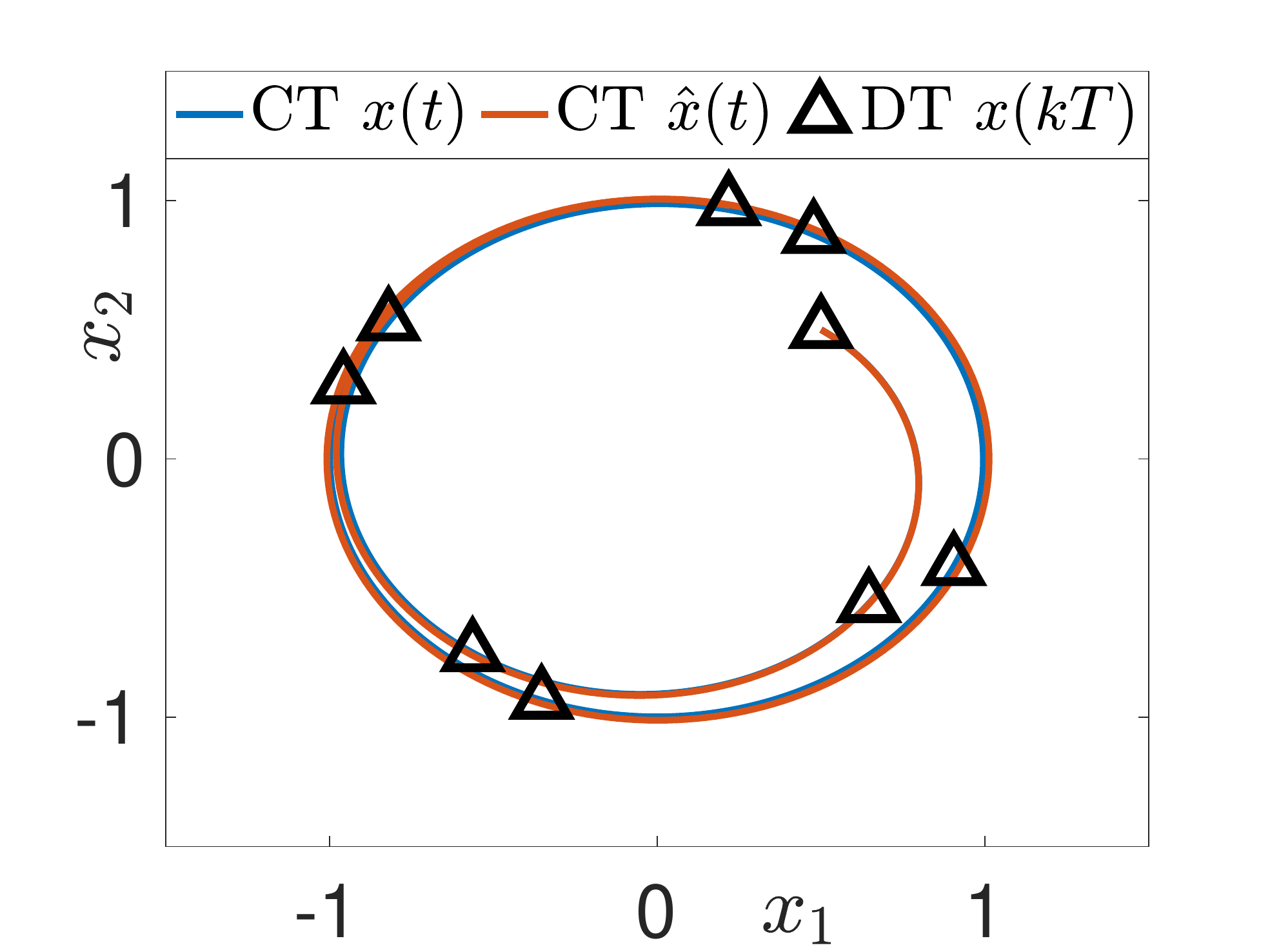}}
		\subfigure[State prediction ($T = 1.1$ s)]{
			\label{Fig6.sub3}
			\includegraphics[width=.225\textwidth]{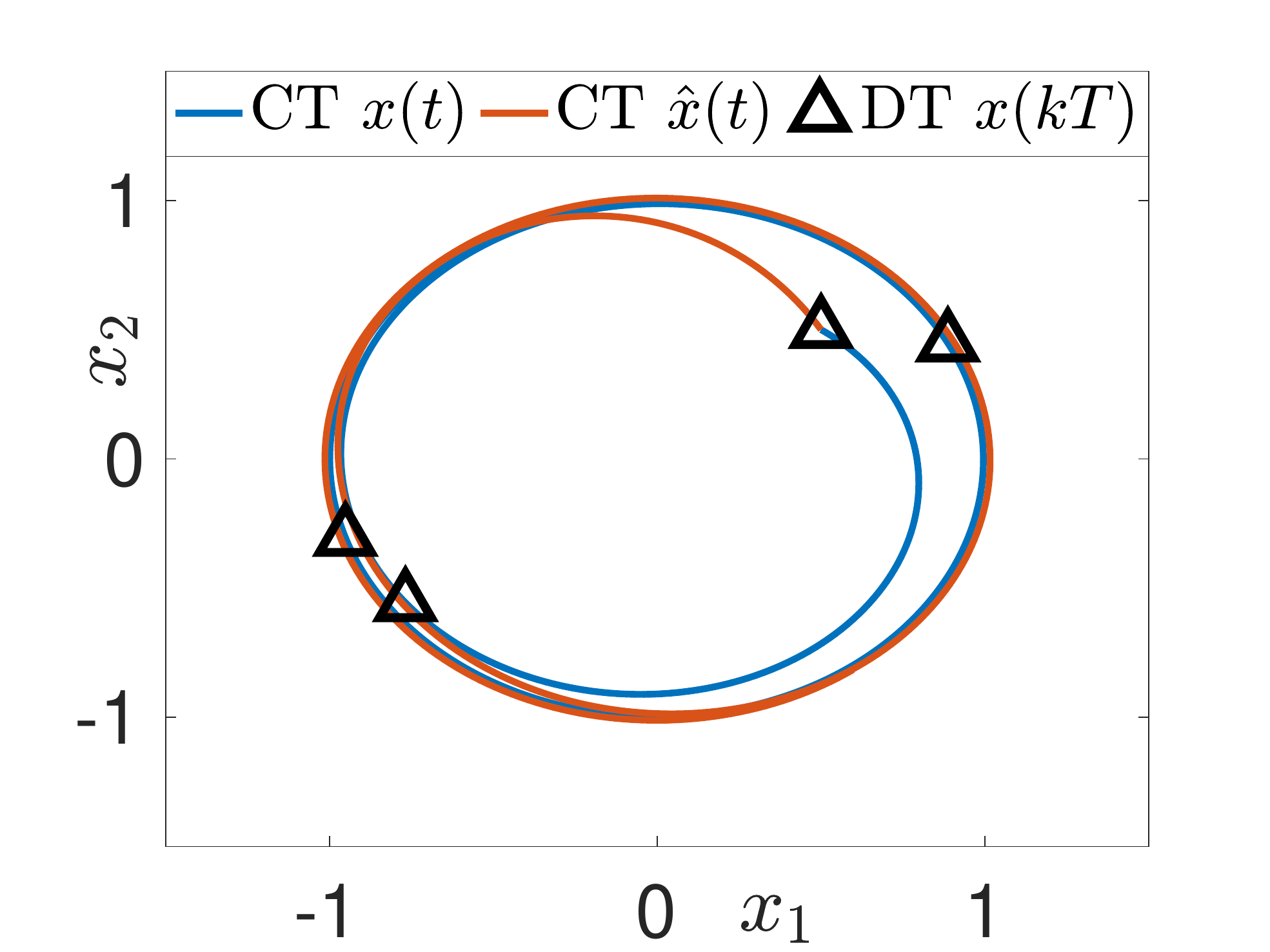}
		}
		\subfigure[State prediction ($T = 2.8$ s)]{
			\label{Fig6.sub4}
			\includegraphics[width=.225\textwidth]{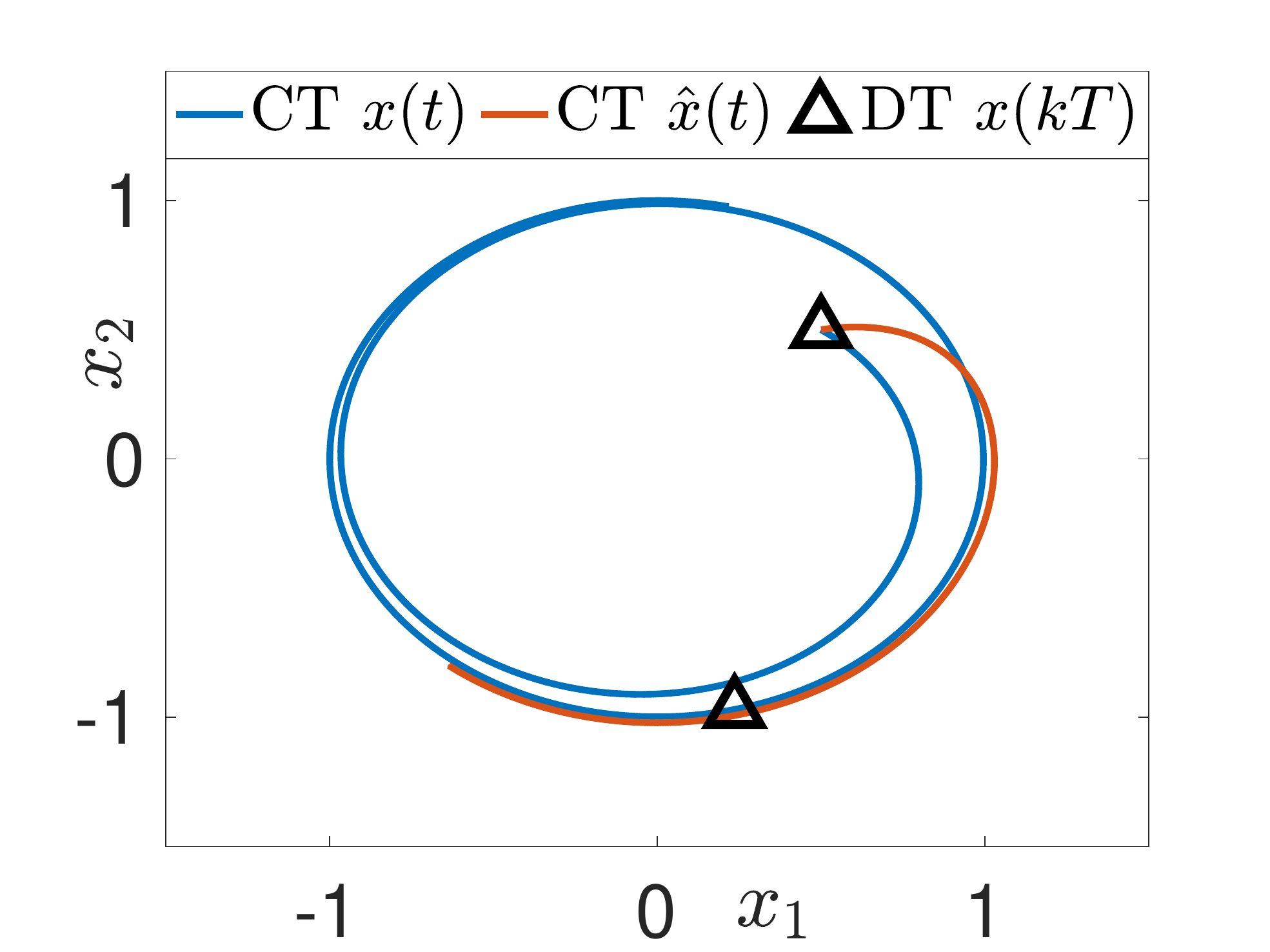}
		}
		\caption{Identification errors with growing sampling periods (a), and prediction by vector field identified from data with sampling period $T=0.5$ s (b), $T=1.1$ s (c), and $T=2.8$ s (d) of nonlinear system with limit cycle \eqref{sys3}. The parameter of the observable space is $m=13$ for prediction.}
		\label{fig6}
	\end{figure}

\emph{e) The nonlinear system with nonpolynomial vector field which admits a fixed point at $[0,0]$:}
\begin{equation}\label{sys5}
\begin{aligned}
\dot{x}_1 &= -x_1+4\frac{x_2}{1+x_2^2},\\
\dot{x}_2 &= -x_2-4\frac{x_1}{1+x_2^2}.
\end{aligned} 
\end{equation}

We generate 1000 trajectories, 30 snapshots at times for each trajectory and random initial condition $\bx(t_0)\in [-1,1]^2$. Since the vector field is nonpolynomial, the basis function of the observable space are set by the monomial functions $g(\bx) = x^{s_1}_1\ldots x_n^{s_n}|(s_1,\ldots ,s_n) \in \bN^n, s_1+\ldots +s_n\le m$ and \begin{equation*}
h(\bx) = \frac{x_l}{1+x_k^p},~k,l\in\{1,2 \},~p\in\{1,\ldots,P \}.
\end{equation*}

Four observable spaces, i.e., $\{m=1,P=2\},\{m=1,P=3\},\{m=2,P=2\},\{m=2,P=3\}$, are chosen to identify the nonlinear system \eqref{sys5} with growing sampling period. The NRMSE is also computed for different sampling period.

Since this dynamical system has the fixed point $x^* = [0,0]$, the so-called principal Koopman eigenvalues $\lambda_i$ are the eigenvalues of the Jacobian matrix of the vector field $f$ at $[0,0]$ \cite{mauroy2016global}. Therefore, the principal eigenvalues of the generator are the eigenvalues of the matrix: $$
J = \left(\begin{matrix}
-1&4\\-4&-1
\end{matrix}\right).
$$
It leads to the principal eigenvalues being $-1\pm4i$, and the critical sampling period is $\pi/4$ s. Other Koopman eigenvalues can also be obtained from these principal eigenvalues. The bound of sampling frequency will be integer multiples of $8$ rad/s if we use those eigenvalues to analyze the bound of sampling period. Therefore, the principal eigenvalues are used to analyze the critical sampling period.

The critical sampling period $\pi/4$ is denoted as the red dashed line in Fig. \ref{Fig10.sub1}. It shows that the NRMSE of the nonlinear system \eqref{sys5} appears to a peak when the sampling period grows close to the red dashed line in every observable space we choose, which is in agreement with Theorem \ref{th7}. 

\begin{figure}[thpb]
	\subfigure[Identification error]{
		\label{Fig10.sub1}
		\includegraphics[width=.225\textwidth]{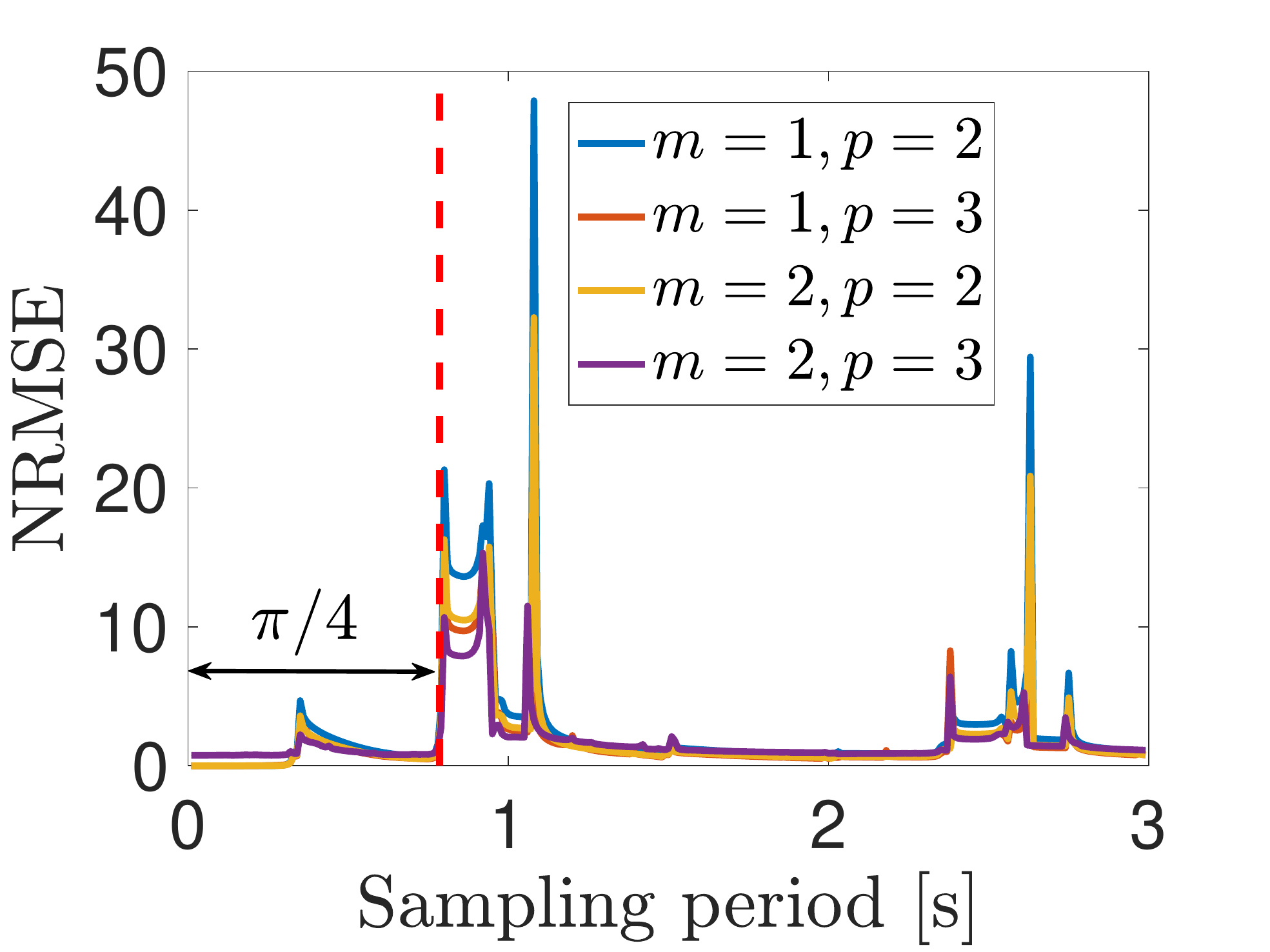}}
	\subfigure[State prediction ($T = 0.3$ s)]{
		\label{Fig10.sub2}
		\includegraphics[width=.225\textwidth]{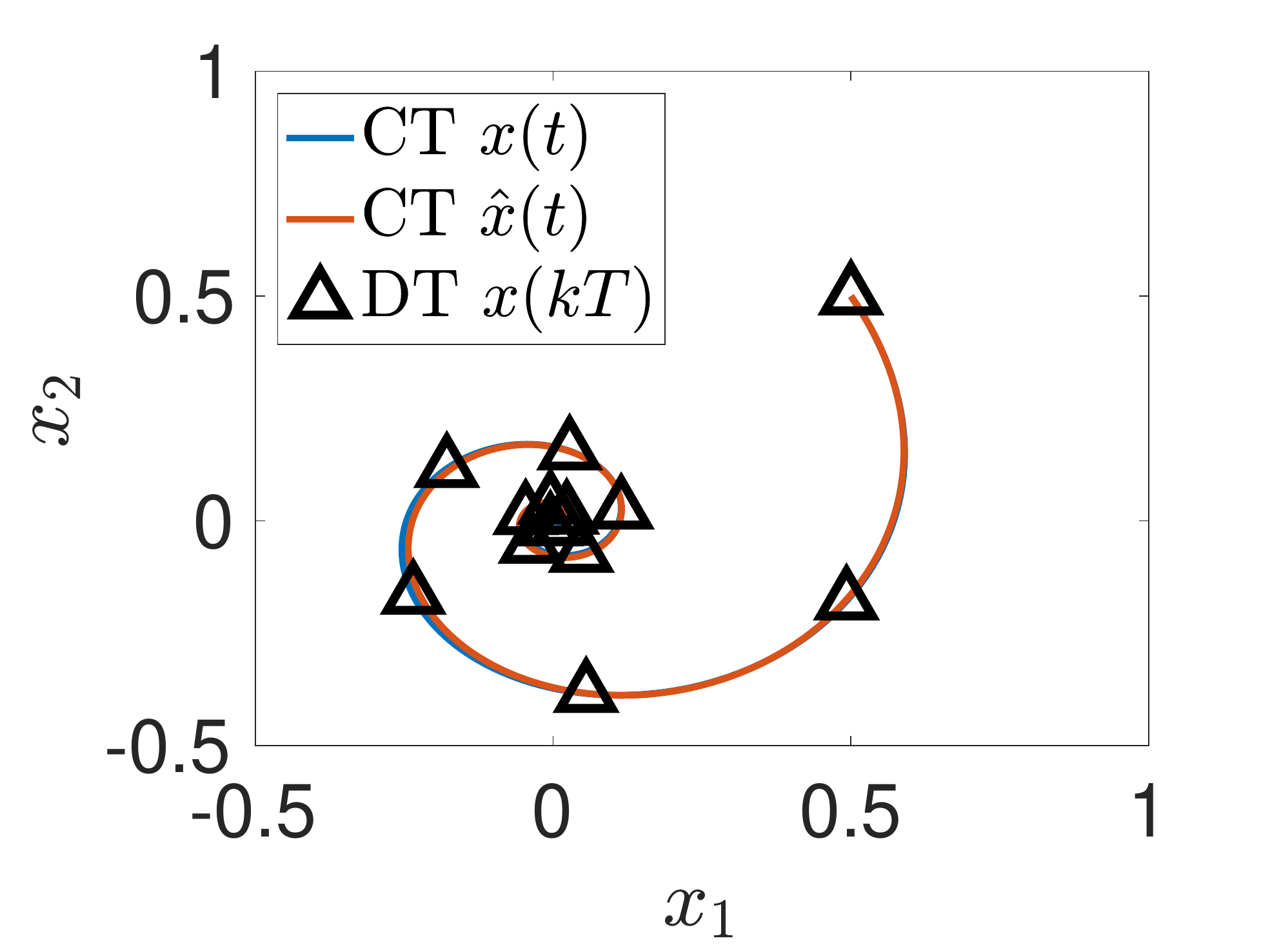}}
	\subfigure[State prediction ($T = 0.9$ s)]{
		\label{Fig10.sub3}
		\includegraphics[width=.225\textwidth]{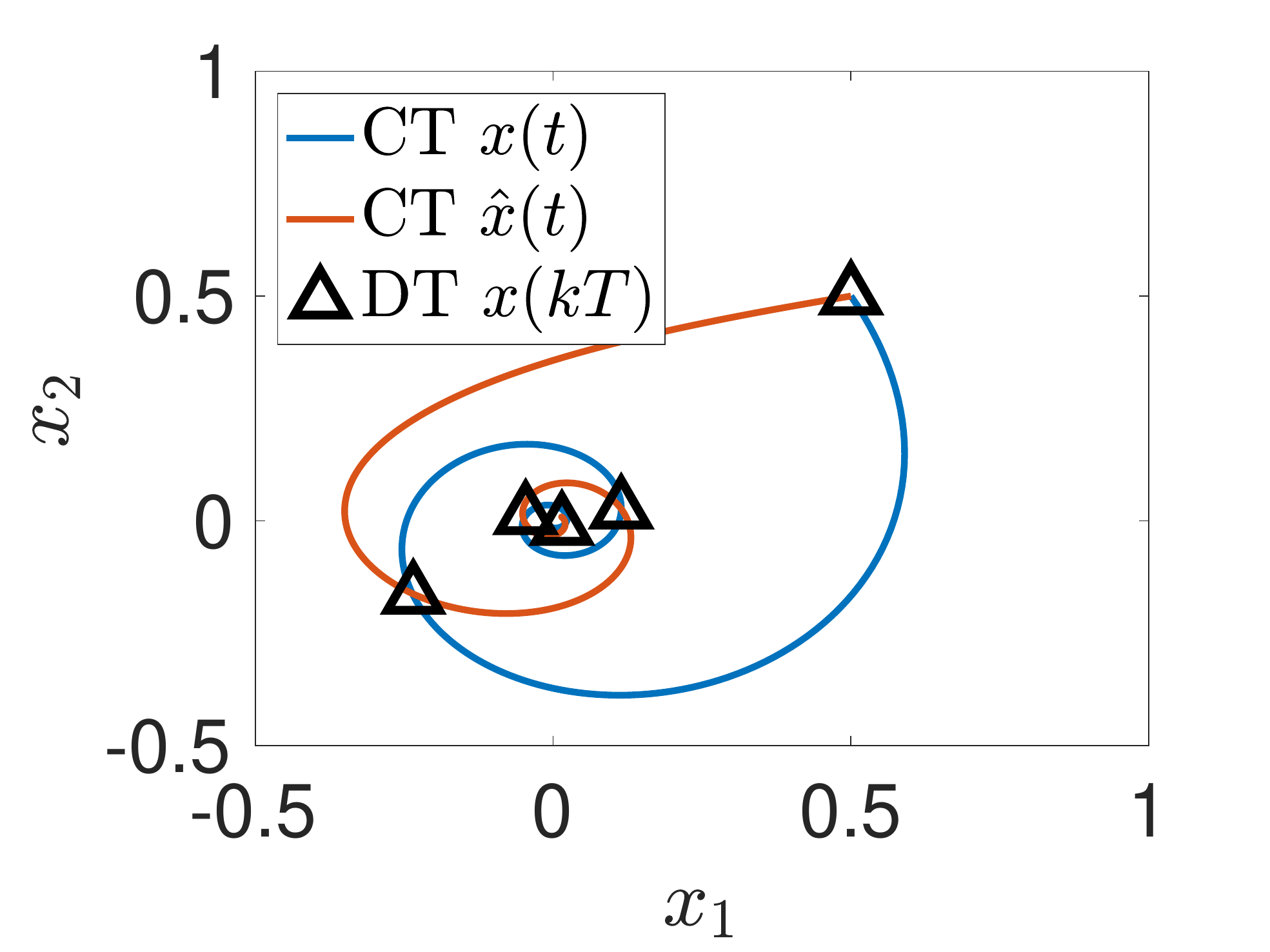}
	}
	\subfigure[State prediction ($T = 2.1$ s)]{
		\label{Fig10.sub4}
		\includegraphics[width=.225\textwidth]{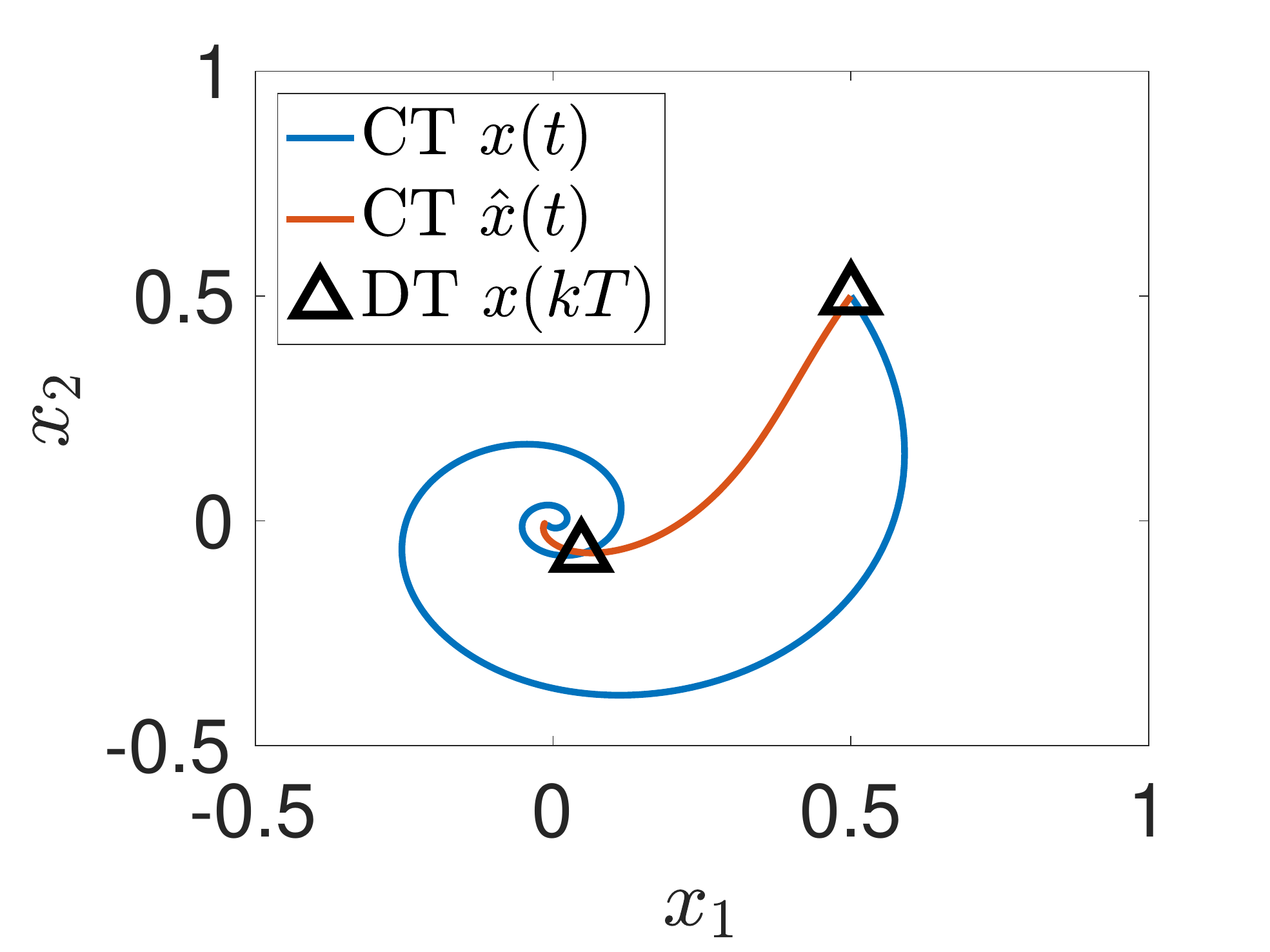}
	}
	\caption{Identification errors with growing sampling periods (a), and prediction by vector field identified from data with sampling period $T=0.3$ s (b), $T=0.9$ s (c), and $T=2.1$ s (d) of nonlinear system with nonpolynomial vector field \eqref{sys5}. The parameters of the observable space are $m=1,P=2$ for prediction.}
	\label{fig10}
\end{figure}

	It can be seen that NRMSE drops for a while then rises again when the sampling frequency exceeds the critical sampling period in Fig. \ref{Fig4.sub1}-Fig. \ref{Fig10.sub1}. To study this phenomenon, the states $\hat{\bx}(t)$ generated by the identified dynamical systems and the true states $\bx(t)$ are visualized in Fig. \ref{Fig4.sub2}-Fig. \ref{Fig4.sub4}, Fig. \ref{Fig5.sub2}-Fig. \ref{Fig5.sub4}, Fig. \ref{Fig6.sub2}-Fig. \ref{Fig6.sub4} when the sampling periods are $0.5$s, $1.1$s and $2.8$s for dynamical systems \eqref{sys1}-\eqref{sys3}, and $0.3$s, $0.9$s, $2.1$s for nonpolynomial dynamical system \eqref{sys5} in Fig. \ref{Fig10.sub2}-Fig. \ref{Fig10.sub4}. The blue lines denote the true states of time $\bx(t)$ with the initial condition $[0.5,0.5]$. The orange lines are the prediction of states $\hat{\bx}(t)$ by identified vector fields $\hat{f}$. It can be inferred that the directions of vector fields seem like rotating with the increasing of the sampling period to find the ''simplest'' trajectory matching the measurements. For example, Fig. \ref{Fig5.sub2} shows the predictions that sampled data are able to capture the correct systems. When the sampling period exceeds the critical sampling period, the direction of identified vector field $\hat{f}$ at each $\bx$ is almost opposite to the truth, and the NRMSE curve has a peak (see the prediction of the states when the sampling period is $1.1$s in Fig. \ref{Fig5.sub3}). After that, the direction of $\hat{f}$ at each $\bx$ continues rotating. Therefore, the difference between $\hat{f}$ and the truth decreases as illustrated in Fig. \ref{Fig5.sub4}.  

	We also notice that NRMSE of the dynamical system also increases slowly when $T\le T_\gamma$. Here we show that the frequency domain information of the state variable $\bx(t)$, reconstructed by the identified vector field and $\bx(0)$, almost unaffected by NRMSE of the vector field when $T\le T_\gamma$. For these four systems, discrete Fourier transform is applied to the true states $\bx(t)$ and the predicted states $\hat{\bx}(t)$ with the initial condition $\bx(0) = [0.5,0.5]$. We use a fast Fourier transform algorithm with the sampling frequency $100$ Hz and $500$ samples. The spectral error of $x_k(t)$ is computed by $\|\hat{P}_k-P_k\|_2^2$, where $\hat{P}_k$ and $P_k$ are the discrete Fourier transforms of $\hat{x}_k(t)$ and ${x_k(t)}$ respectively.  The result of the spectral error with growing sampling period can be seen in Fig. \ref{fig8}. The blue lines and orange lines denote the spectral error of states $x_1(t), x_2(t)$, and the red dashed lines denote the critical sampling period of associated dynamical system. It shows that there is little loss to reconstruct $\bx(t)$ when the sampling period is lower than its critical sampling period. However, it fails to recover $\bx(t)$ when the sampling period exceeds the bound. This implies that the critical sampling period with respect to aliasing of dynamical system may also be a criterion to avoid aliasing of a signal of time $\bx(t)$ indirectly.	
	
	
	\begin{figure}[!t]
			\centering
			\subfigure[Linear system]{
			\label{Fig8.sub1}
			\includegraphics[width=.48\textwidth]{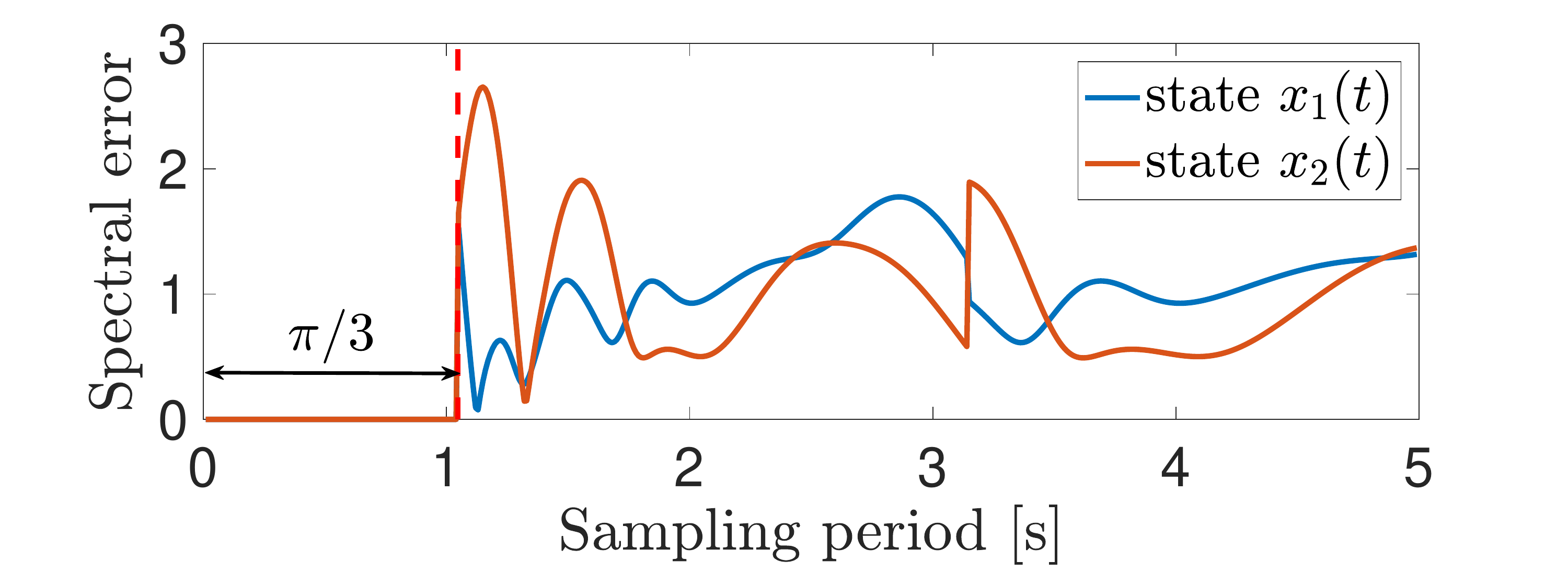}}
			\subfigure[Nonlinear system with fixed point]{
			\label{Fig8.sub2}
			\includegraphics[width=.48\textwidth]{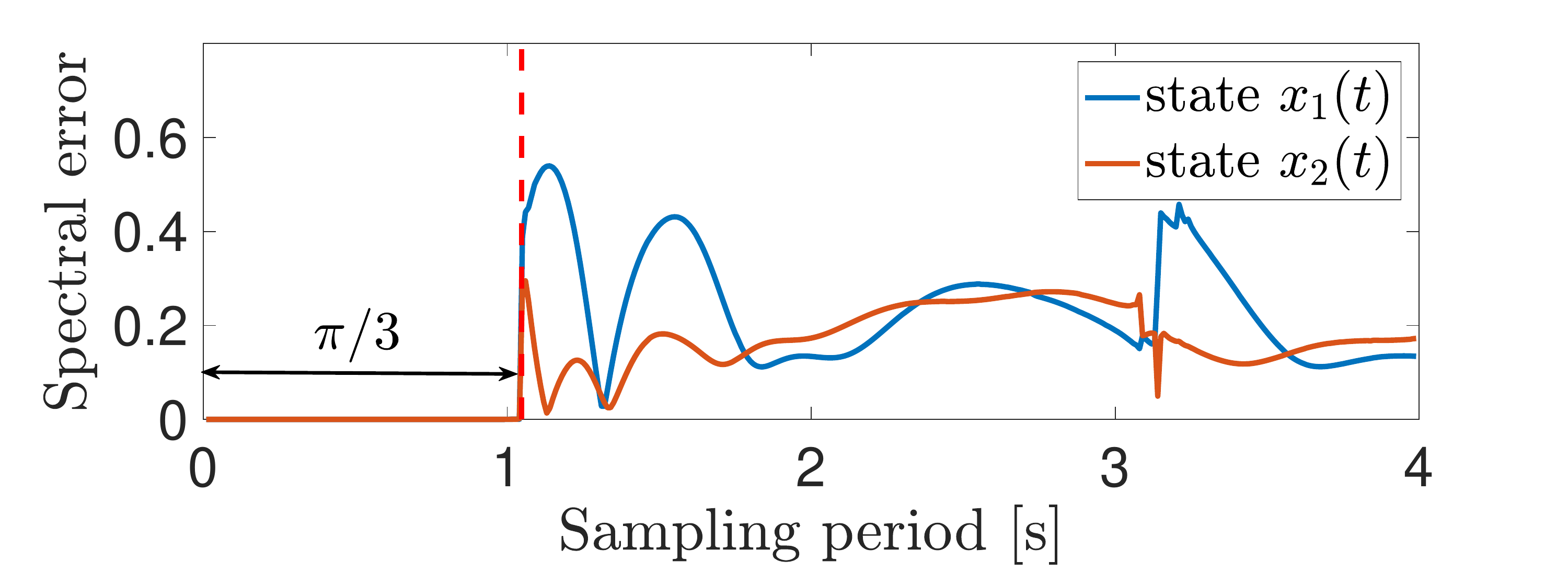}}
			\subfigure[Nonlinear system with limit cycle]{
			\label{Fig8.sub3}
			\includegraphics[width=.48\textwidth]{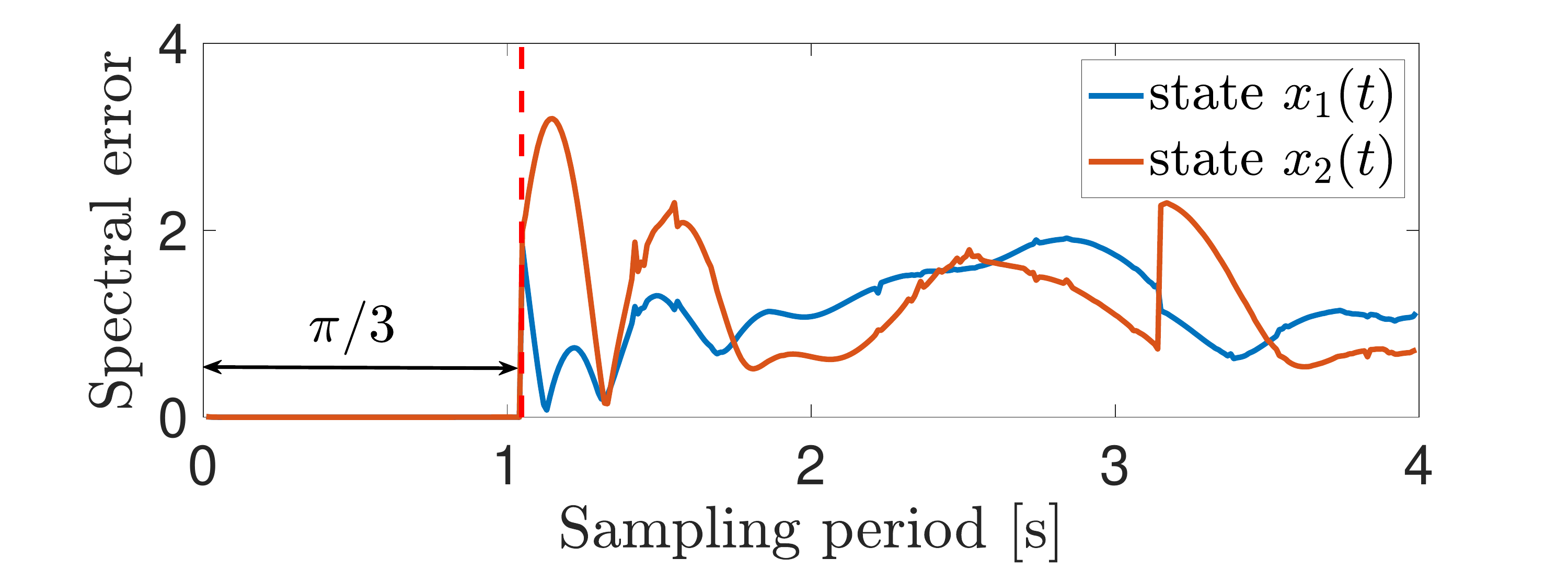}}
			\subfigure[Nonlinear system with nonpolynomial vector field]{
			\label{Fig8.sub4}
			\includegraphics[width=.48\textwidth]{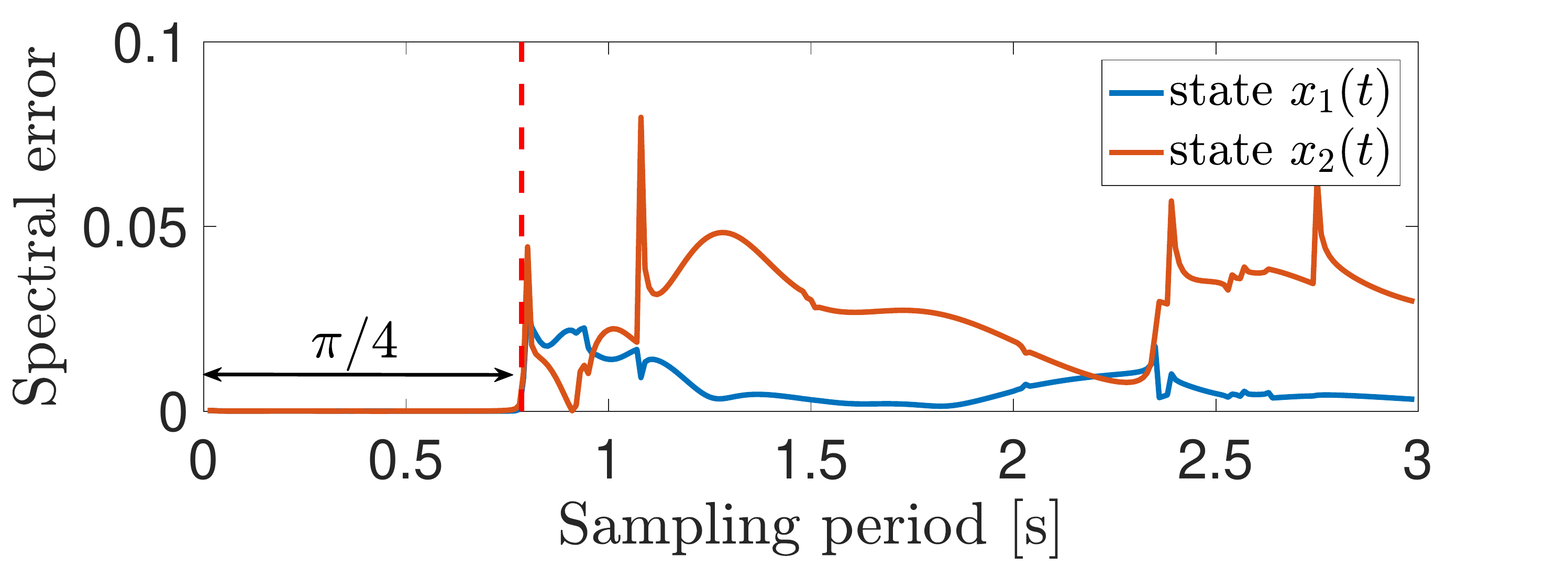}}
			\caption{Spectral error of predicted states $x_1(t)$ and $x_2(t)$ with growing sampling periods of linear system (a), nonlinear system with fixed point (b), limit cycle (c), and nonlinear system whose vector field is nonpolynomial (d). The parameter of observable space is $m=1$ for the linear system, $m=13$ for nonlinear systems, and $m=1,P=2$ for the dynamical system with nonpolynomial vector field.}
			\label{fig8}
	\end{figure}

	\section{Conclusion}\label{sec:con}
	This paper considers the system aliasing of nonlinear dynamical systems caused by low sampling frequency. It proposes necessary and sufficient conditions on the critical sampling period that avoids system aliasing. When the Koopman operator of the dynamical system has enough valid eigenfunctions, the critical sampling period can be analytically computed from the imaginary part of the corresponding Koopman eigenvalues. When it does not have enough valid eigenfunctions, the critical sampling period is analyzed by imaginary part of the corresponding spectrum, including continuous or residual spectrum, of the infinitesimal generator. Through examples on different types of dynamics, such as those with equilibria and limit cycles, the critical sampling period offers a fundamental limit of performance for any system identification algorithms, which could be essential for experimental design.
	
	\section{Acknowledgement}
	We wish to thank Prof. Yiwei Zhang (Huazhong University of Science and Technology) and Wenchao Li (Peking University) for discussion.

	\bibliographystyle{IEEEtran}
	\bibliography{IEEEabrv,cite}

		\begin{IEEEbiography}[{\includegraphics[width=1in,height=1.25in,clip,keepaspectratio]{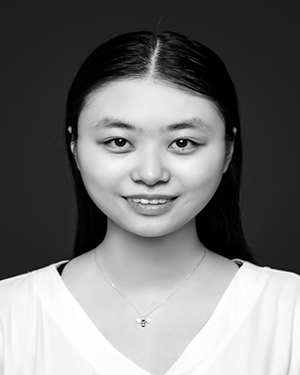}}]
		{Zhexuan Zeng} is currently a Ph.D. student at Huazhong University of Science and Technology under the supervision of Prof. Ye Yuan. Her research interests
		include system identification, nonlinear systems, and applications of operator theoretic methods.
	\end{IEEEbiography}
	
		\begin{IEEEbiography}[{\includegraphics[width=1in,height=1.25in,clip,keepaspectratio]{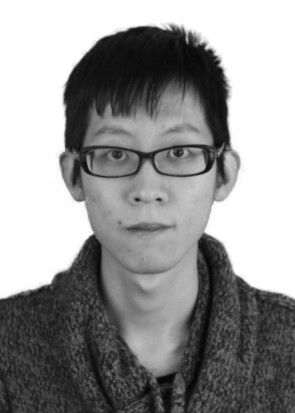}}]
		{Zuogong Yue} is currently an Assistant Professor at Huazhong University of Science and Technology. He received the B.Sc. degree from Zhejiang University, China, in 2011; the M.Phil. degree from the Hong Kong University of Science and Technology, China, in 2013; and the Ph.D. degree from the University of Luxembourg, Luxembourg, in 2018. He worked as a postdoctoral fellow at the Luxembourg Centre for Systems Biomedicine in 2018 and at the University of New South Wales in 2019-2021. His research interests include system/network identification and control, signal processing and learning.
		\end{IEEEbiography}

	\begin{IEEEbiography}[{\includegraphics[width=1in,height=1.25in,clip,keepaspectratio]{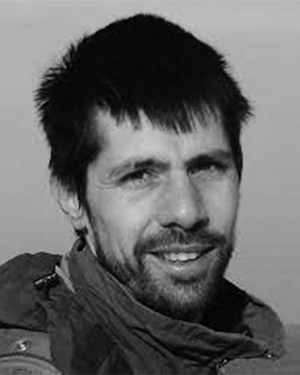}}]
			{Alexandre Mauroy} received the M.S. degree in aerospace engineering and the Ph.D. degree in applied mathematics from the University of Li `ege, Li `ege, Belgium, in 2007 and 2011, respectively. Prior to joining the University of Namur, he was a postdoctoral researcher with the University of California Santa Barbara from 2011 to 2013, the University of Li `ege from 2013 to 2015, and the University of Luxembourg in 2016. He is currently an Assistant Professor with the Department of Mathematics and the Namur Institute for Complex Systems, University of Namur, Belgium. His research interests include synchronization in complex networks, network identification, and applications of operator theoretic methods to dynamical systems and control.
	\end{IEEEbiography}
	
	\begin{IEEEbiography}[{\includegraphics[width=1in,height=1.25in,clip,keepaspectratio]{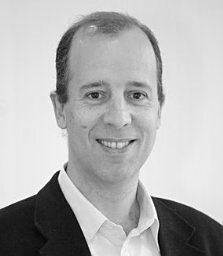}}]
		{Jorge Gon\c{c}alves} is a Professor at the Luxembourg Centre for Systems
		Biomedicine, University of Luxembourg and a Principal Research Associate
		at the Department of Plant Sciences, University of Cambridge. He received
		his Licenciatura (5-year S.B.) degree from the University of Porto,
		Portugal, and the M.S. and Ph.D. degrees from the Massachusetts Institute
		of Technology, Cambridge, MA, all in Electrical Engineering and Computer
		Science, in 1993, 1995, and 2000, respectively. He then held two
		postdoctoral positions, first at the Massachusetts Institute of
		Technology for seven months, and from 2001 to 2004 at the California
		Institute of Technology with the Control and Dynamical Systems Division.
		At the Information Engineering Division of the Department of Engineering,
		University of Cambridge he was a Lecturer from 2004 until 2012, a Reader
		from 2012 until 2014, and from 2014 until 2019 he was a Principal
		Research Associate. From 2005 until 2014 he was a Fellow of Pembroke
		College, University of Cambridge. Since
		2019 he is a Principal Research Associate at the Department of Plant
		Sciences, University of Cambridge, and since 2013 he is a Professor at
		the Luxembourg Centre for Systems Biomedicine, University of Luxembourg.
	\end{IEEEbiography}
	
	\begin{IEEEbiography}[{\includegraphics[width=1in,height=1.25in,clip,keepaspectratio]{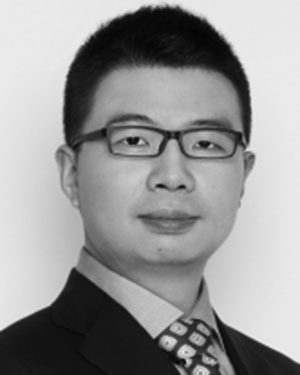}}]
		{Ye Yuan} received the B.Eng. degree (Valedictorian) from the Department
		of Automation, Shanghai Jiao Tong University, Shanghai, China, in
		September 2008, and the M.Phil. and Ph.D. degrees in control theory from
		the Department of Engineering, University of Cambridge, Cambridge, U.K.,
		in October 2009 and February 2012, respectively.  He is currently a Full
		Professor with the Huazhong University of Science and Technology, Wuhan,
		China. He was a Postdoctoral Researcher at UC Berkeley, a Junior Research
		Fellow at Darwin College, University of Cambridge. His research interests
		include system identification and control with applications to cyber-physical systems.
	\end{IEEEbiography}

\end{document}